\def\be{\begin{equation}}
\def\ee{\end{equation}}
\def\bm{\begin{multline}}
\def\bfig{\begin{figure}[htb]}
\def\efig{\end{figure}}
\newcommand{\dd}{{\rm d}}
\newcommand{\e}[1]{\,{\rm e}^{#1}\,}
\newcommand{\ii}{{\rm i}}
\def\Tr{{\operatorname{Tr\,}}}
\newcommand{\sumtwo}[2]{\sum_{\substack{#1 \\ #2}}}
\numberwithin{equation}{section}
\newtheorem{theorem}{Theorem}[section]
\newtheorem{proposition}[theorem]{Proposition}
\newtheorem{lemma}[theorem]{Lemma}
\newtheorem{corollary}[theorem]{Corollary}
\newtheorem{conjecture}{Conjecture}[section]
\newcommand{\caC}{{\mathcal C}}
\newcommand{\caE}{{\mathcal E}}
\newcommand{\caH}{{\mathcal H}}
\newcommand{\caL}{{\mathcal L}}
\newcommand{\caV}{{\mathcal V}}
\newcommand{\bbC}{{\mathbb C}}
\newcommand{\bbE}{{\mathbb E}}
\newcommand{\bbP}{{\mathbb P}}
\newcommand{\bbR}{{\mathbb R}}
\newcommand{\bbZ}{{\mathbb Z}}
\newcommand{\bss}{{\boldsymbol s}}
\newcommand{\N}{\mathbb{N}}
\newcommand{\Z}{\mathbb{Z}}
\newcommand{\R}{\mathbb{R}}
\newcommand{\Prob}{\mathop{\mathbb{P}}\nolimits}
\newcommand{\E}{\mathop{\mathbb{E}}}
\newcommand{\BorelSets}{\mathcal{B}}
\newcommand{\cF}{\mathcal{F}}
\newcommand{\Indi}[1]{\mathop{\mathbbm{1}_{#1}}\nolimits}
\newcommand{\Leb}{\mathrm{Leb}}
\newcommand{\eqdef}{:=}
\newcommand{\PSs}{\Delta_{1}}
\newcommand{\Measures}{\mathcal{M}}
\newcommand{\PDlaw}{\mathrm{PD}}
\newcommand{\SwapConfigs}{\Omega}
\newcommand{\Cycles}{\mathcal{C}}
\newcommand{\Loops}{\mathcal{L}}
\newcommand{\addremswapproc}{\mathfrak{X}}
\newcommand{\contactswaps}{\mathfrak{B}}
\newcommand{\contactzone}{\mathfrak{C}}
\newtheorem{definition}{Definition}[section]
\newcommand{\IdentityOperator}{\mathrm{\texttt{Id}}}
\newcommand{\magicword}[1]{\emph{#1}}
  \def\tagform@#1{\maketag@@@{\tiny{(#1)}\@@italiccorr}}
\renewcommand{\eqref}[1]{(\ref{#1})}
\begin{document}

\title[Heisenberg models and their probabilistic representations]{Quantum Heisenberg models and their probabilistic representations}

\author[C. Goldschmidt]{Christina Goldschmidt}
\address{Departments of Mathematics and Statistics, University of Warwick,
Coventry, CV4 7AL, United Kingdom}
\email{c.a.goldschmidt@warwick.ac.uk}
\email{daniel@ueltschi.org}
\email{peter.windridge@warwick.ac.uk}

\author[D. Ueltschi]{Daniel Ueltschi}

\author[P. Windridge]{Peter Windridge}

\subjclass{60G55, 60K35, 82B10, 82B20, 82B26}

\keywords{Spin systems, quantum Heisenberg model, probabilistic representations, Poisson-Dirichlet distribution, split-merge process}

\begin{abstract}
These notes give a mathematical introduction to two seemingly unrelated topics: (i) quantum spin systems and their cycle and loop representations, due to T\'oth and Aizenman-Nachtergaele; (ii) coagulation-fragmentation stochastic processes. These topics are nonetheless related, as we argue that the lengths of cycles and loops effectively perform a coagulation-fragmentation process. This suggests that their joint distribution is Poisson-Dirichlet. These ideas are far from being proved, but they are backed by several rigorous results, notably of Dyson-Lieb-Simon and Schramm.
\end{abstract}

\thanks{Work partially supported by EPSRC grant EP/G056390/1.}
\thanks{\copyright 2011 by the authors. This paper may be reproduced, in its
entirety, for non-commercial purposes.}

\maketitle

\setcounter{tocdepth}{2}
\tableofcontents

\section{Introduction}
\label{sec intro}

We review cycle and loop models that arise from quantum Heisenberg spin systems.  The loops
and cycles are geometric objects defined on graphs.  The main goal is to understand 
properties such as their length in large graphs. % particularly the lengths of cycles and loops.

The cycle model was introduced by T\'oth as a probabilistic representation of the Heisenberg ferromagnet \cite{MR1224836}, while the loop model is due to Aizenman and Nachtergaele and is related to the Heisenberg antiferromagnet \cite{MR1288152}. Both models are built on the random stirring process of Harris \cite{MR0307392} 
and have an additional geometric weight of the form $\vartheta^{\#\rm cycles}$ or $\vartheta^{\#\rm loops}$ with parameter $\vartheta=2$. Recently, Schramm studied the cycle model on the complete graph and with $\vartheta=1$ (that is, without this factor) \cite{MR2166362}. He showed in particular that cycle lengths are generated by a split-merge process (or ``coagulation-fragmentation'' process), and that the cycle lengths have Poisson-Dirichlet distribution with parameter 1.

The graphs of physical relevance are regular lattices such as $\bbZ^{d}$ (or large finite boxes in $\bbZ^{d}$), and the factor $2^{\#\rm objects}$ needs to be present. What should we expect in this case? A few hints come from the models of spatial random permutations, which also involve one-dimensional objects living in higher dimensional spaces. The average length of the longest cycle in lattice permutations was computed numerically in \cite{MR2360227}. In retrospect, this suggests that the cycle lengths have the Poisson-Dirichlet distribution. In the ``annealed'' model where positions are averaged, this was proved in \cite{BU}; the mechanisms at work there (i.e., Bose-Einstein condensation and non-spatial random permutations with Ewens distribution), however, seem very specific.

We study the cycle and loop models in $\bbZ^{d}$ with the help of a stochastic process whose invariant measure is identical to the original measure with weight $\vartheta^{\#\rm cycles}$ or $\vartheta^{\#\rm loops}$, and which leads to an effective split-merge process for the cycle (or loop) lengths. The rates at which the splits and the merges take place depends on $\vartheta$. This allows us to identify the invariant measure, which turns out to be Poisson-Dirichlet with parameter $\vartheta$. While we cannot make these ideas mathematically rigorous, they are compatible with existing results.

As mentioned above, cycle and loop models are closely related to Heisenberg models.  In particular,
 the cycle and loop geometry is reflected in some important quantum observables.  These observables 
 have been the focus of intense study by mathematical and condensed matter physicists,  who have 
 used imagination and clever observations to obtain remarkable results in the last few decades.  Most relevant 
 to us is the theorem of Mermin and Wagner about the absence of magnetic order in one and two dimensions \cite{Mermin:1966lr}, and the theorem of Dyson, Lieb, and Simon, about the existence of magnetic order in the antiferromagnetic model in dimensions 3 and more \cite{MR0496246}. We review these results and explain
 their implications for cycle and loop models.

Many a mathematician is disoriented when wandering in the realm of quantum spin systems. The landscape of $2 \times 2$ matrices and finite-dimensional Hilbert spaces looks safe and easy. Yet, the proofs of many innocent statements are elusive, and one feels quickly lost. It has seemed to us a useful task to provide a detailed introduction to the Heisenberg models in both their quantum and statistical mechanical aspects. We require various concepts from stochastic process theory, and will need to describe carefully the split-merge mechanisms and the Poisson-Dirichlet distribution. The last two are little known outside of probability and are not readily accessible to mathematical physicists and analysts, since the language and the perspective of those domains are quite different (see e.g.\ the dictionary of \cite{MR1681462}, p.\ 314, between analysts' language and probabilists' ``dialect''). In these notes, we have attempted to introduce these different notions in a self-contained fashion.

%%%% the sequel has been deleted, but a few things may help %%%%%
\iffalse
The mathematical setting of quantum Heisenberg models is introduced in Section \ref{sec spin basics}. We derive the stochastic representations of T\'oth and Aizenman-Nachtergaele in Section \ref{sec stoch rep}. The notion of phase transitions are reviewed in Section \ref{sec phase transitions}. The GEM and Poisson-Dirichlet distributions, and the split-merge process, are introduced in Sections \ref{}. Section \ref{} presents a heuristic derivation of the distribution of the lengths of cycles and loops in the representations of Heisenberg models. Finally, we review the relevant, albeit indirect rigorous results in Section \ref{sec rigorous}.
\fi
%%%%%%%%%%%%%

\subsection{Guide to notation}

The following objects play a central r\^ole.  % We summarise them for the the readers convenience.

%\vskip 0.5em

\begin{longtable}{l p{0.75\textwidth} }
$\Lambda = (\caV,\caE)$ & A finite graph with undirected edges. \\
$S_{x}^{(j)}$, $\vec S_{x}$ & Spin operators (\S\ref{sec spins}). \\
$\langle \cdot \rangle_{\Lambda,\beta,h}$ & Gibbs state (\S\ref{sec Gibbs states}). \\
$Z_{\Lambda}$, $F_{\Lambda}$ & Partition function and free energy (\S\ref{sec Gibbs states}). \\
$\rho_{\Lambda,\beta}(d \omega)$  & Probability measure for Poisson point processes on $[0,\beta]$ ($\beta > 0$) 
attached to each edge of $\Lambda$ (defined in \S\ref{sec Poisson conf}). \\
$\Cycles(\omega), \Loops(\omega)$ & Cycle and loop configurations constructed from edges in $\omega$ (\S\ref{sec Poisson conf}). \\
$\gamma$  &  Cycle in $\Cycles(\omega)$ or loop in $\Loops(\omega)$. \\
$\vartheta > 0$ & Geometric weight involving the number of cycles and loops.\\
$\Lambda_{n} = (\caV_{n},\caE_{n})$ & Box $\{1,\dots,n\}^{d}$ in $\bbZ^{d}$ with nearest-neigbor edges (\S\ref{sec thermolim}). \\
$m^{*}_{\rm th}$, $m^{*}_{\rm res}$, $m^{*}_{\rm sp}$ & Various definitions of the magnetization (\S\ref{sec thermo limit ferro}). \\
$\overline\sigma(\beta)$ & Antiferromagnetic long-range order (\S\ref{sec thermo limit anti}). \\
$\eta_{\infty}, \eta_{\rm macro}$ & Fractions of vertices in infinite or macroscopic cycles/loops (\S\ref{sec prob phase transitions}). \\
$\PSs$ & Countable partitions of $[0,1]$ with parts in decreasing order, i.e. $\{p_{1} \geq p_{2} \geq \ldots \geq 0: \sum_{i}p_{i} = 1\}$. \\
$\PDlaw_{\theta}$ & Poisson-Dirichlet distribution with parameter $\theta$ on $\PSs$ (\S \ref{s:PoissDiricDefn}).\\
% $\splitmergeproc$ & \\
$(\addremswapproc_{t}, t \geq 0)$ & Stochastic process with invariant measure given by our cycle and loop models (\S\ref{s:bridgedynamics}).  \\
\end{longtable}

\section{Hilbert space, spin operators, Heisenberg Hamiltonian}
\label{sec spin basics}

We review the setting for quantum lattice spin systems described by Heisenberg models. Spin systems are relevant for the study of electronic properties of condensed matter. Atoms form a regular lattice and they host localized electrons, which are characterized only by their spin. Interactions are restricted to neighboring spins. One is interested in equilibrium properties of large systems. There are two closely related quantum Heisenberg models, which describe ferromagnets and antiferromagnets, respectively. The material is standard and the interested reader is encouraged to look in the references \cite{MR0289084,MR1239893,MR2310198,MR2681767} for further information.

\subsection{Graphs and Hilbert space}

Let $\Lambda = (\caV,\caE)$ be a graph, where $\caV$ is a finite set of vertices and $\caE$ is the set of ``edges", i.e.\ unordered pairs in $\caV \times \caV$. From a physical perspective, relevant graphs are regular graphs such as $\bbZ^d$ (or a finite box in $\bbZ^d$) with nearest-neighbor edges, but it is mathematically advantageous to allow for more general graphs. We restrict ourselves to spin-$\frac12$ systems, mainly because the stochastic representations only work in this case.

To each site $x \in \caV$ is associated a 2-dimensional Hilbert space $\caH_x = \bbC^2$. It is convenient to use Dirac's ``bra", $\langle\cdot|$, and ``ket", $|\cdot\rangle$, notation, in which we identify
\be
\label{bra ket}
|\tfrac12 \rangle = {1 \choose 0}, \qquad |-\tfrac12 \rangle = {0 \choose 1}.
\ee
The notation $\langle f | g \rangle$ means the inner product; we use the convention that it is linear in the second variable (and antilinear in the first). Occasionally, we also write $\langle f | A | g \rangle$ for $\langle f | Ag \rangle$.
The Hilbert space of a quantum spin system on $\Lambda$ is the tensor product
\be
\caH^{(\caV)} = \bigotimes_{x\in\Lambda} \caH_x,
\ee
which is the $2^{|\caV|}$ dimensional space spanned by elements of the form $\otimes_{x\in\caV} f_x$ with $f_x \in \caH_x$. 
The inner product between two such vectors is defined by
\be
\Bigl\langle \otimes_{x\in\caV} f_x \Big| \otimes_{x\in\caV} g_x \Bigr\rangle_{\caH^{(\caV)}} = \prod_{x\in\Lambda} \langle f_x | g_x \rangle_{\caH_x}.
\ee
The inner product above extends by (anti)linearity to the other vectors, 
which are all linear combinations of vectors of the form $\otimes_{x\in\caV} f_x$.

The basis \eqref{bra ket} of $\bbC^2$ has a natural extension in $\caH^{(\caV)}$; namely, given $s^{(\caV)} = (s_x)_{x\in\caV}$ with $s_x = \pm\frac12$, let
\be
|s^{(\caV)}\rangle = \bigotimes_{x\in\Lambda} |s_x\rangle.
\ee
These elements are orthonormal, i.e.
\be
\langle s^{(\caV)} | \tilde s^{(\caV)} \rangle = \prod_{x\in\caV} \langle s_x | \tilde s_x \rangle = \prod_{x\in\caV} \delta_{s_x, \tilde s_x},
\ee
where $\delta$ is Kronecker's symbol, $\delta_{ab} = 1$ if $a=b$, 0 otherwise.

\subsection{Spin operators}
\label{sec spins}

In the quantum world, physically relevant quantities are called \magicword{observables} and they 
are represented by self-adjoint operators.  The operators for the observable 
properties of our spin-$\frac{1}{2}$ particles are called the Pauli matrices, defined by
% They are $2 \times 2$ complex matrices defined by,  
\be
\label{Pauli}
S^{(1)} = \tfrac12 \left( \begin{matrix} 0 & 1 \\ 1 & 0 \end{matrix} \right), \qquad S^{(2)} = \tfrac12 \left( \begin{matrix} 0 & -\ii \\ \ii & 0 \end{matrix} \right), \qquad S^{(3)} = \tfrac12 \left( \begin{matrix} 1 & 0 \\ 0 & -1 \end{matrix} \right).
\ee
We interpret $S^{(i)}$ as the spin component in the $i^{th}$ direction.  The matrices are clearly Hermitian and satisfy the relations
\be
\label{Pauli commutations}
[S^{(1)}, S^{(2)}] = \ii S^{(3)}, \quad [S^{(2)}, S^{(3)}] = \ii S^{(1)}, \quad [S^{(3)}, S^{(1)}] = \ii S^{(2)}.
\ee

These operators have natural extensions as spin operators in $\caH^{(\caV)}$.  Let $x \in \caV$, and write $\caH^{(\caV)} = \caH_x \otimes \caH^{(\caV \setminus \{x\})}$.  We define operators $S_x^{(i)}$ indexed by $x\in\caV$ by
\be
S_x^{(i)} = S^{(i)} \otimes \IdentityOperator_{\caV \setminus \{x\}}.
\ee
The commutation relations \eqref{Pauli commutations} extend to the operators $S_x^{(i)}$, namely
\be
\label{space Pauli commutations}
[S_x^{(1)}, S_y^{(2)}] = \ii \delta_{xy} S_x^{(3)},
\ee
and all other relations obtained by cyclic permutations of $(123)$. Indeed, it is not hard to check that the matrix elements $\langle s^{(\caV)} | \cdot | \tilde s^{(\caV)} \rangle$ of both sides are identical for all $s^{(\caV)} \in \{ -\frac12, \frac12 \}^\caV$. It is customary to introduce the notation $\vec S_x = (S_x^{(1)}, S_x^{(2)}, S_x^{(3)})$, and
\be
\vec S_x \cdot \vec S_y = S_x^{(1)} S_y^{(1)} + S_x^{(2)} S_y^{(2)} + S_x^{(3)} S_y^{(3)}.
\ee
Note that operators of the form $S_x^{(i)} S_y^{(j)}$, with $x\neq y$, act in $\caH^{(\caV)} = \caH_x \otimes \caH_y \otimes \caH^{(\caV \setminus \{x,y\})}$ as follows
\be
S_x^{(i)} S_y^{(j)} = S^{(i)} \otimes S^{(j)} \otimes \IdentityOperator_{\caV \setminus \{x,y\}}.
\ee
In the case $x=y$, and using $(S_x^{(i)})^2 = \frac14 \IdentityOperator_\caV$, we get
\be
\label{Pauli squares}
\vec S_x^2 = (S_x^{(1)})^2 + (S_x^{(2)})^2 + (S_x^{(3)})^2 = \tfrac34 \IdentityOperator_\caV.
\ee

% \colorbox{gray}{
% \begin{minipage}
\begin{lemma}
\label{lem coupling}
Consider $\vec S_x \cdot \vec S_y$ in $\caH_x \otimes \caH_y$. It is self-adjoint, and its eigenvalues and eigenvectors are as follows:
\begin{itemize}
\item $-\frac34$ is an eigenvalue with multiplicity 1; the eigenvector is $\frac1{\sqrt2} ( |\frac12, -\frac12 \rangle - |-\frac12, \frac12 \rangle)$.
\item $\frac14$ is an eigenvalue with multiplicity 3; three orthonormal eigenvectors are
\[
|\tfrac12, \tfrac12 \rangle, \qquad |-\tfrac12, -\tfrac12 \rangle, \qquad \tfrac1{\sqrt2} \bigl( |\tfrac12, -\tfrac12 \rangle + |-\tfrac12, \tfrac12 \rangle \bigr).
\]
\end{itemize}
\end{lemma}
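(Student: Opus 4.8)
The plan is to diagonalize $\vec S_x \cdot \vec S_y$ by brute force on the four-dimensional space $\caH_x \otimes \caH_y$, using the orthonormal basis $|\tfrac12,\tfrac12\rangle$, $|\tfrac12,-\tfrac12\rangle$, $|-\tfrac12,\tfrac12\rangle$, $|-\tfrac12,-\tfrac12\rangle$. Self-adjointness is immediate: each $S^{(i)}$ is Hermitian, so $S_x^{(i)}S_y^{(i)} = S^{(i)}\otimes S^{(i)}$ is Hermitian, and the sum of Hermitian operators is Hermitian. The substance is the eigenvalue computation, and the cleanest route is to write $\vec S_x \cdot \vec S_y = S_x^{(3)}S_y^{(3)} + \tfrac12(S_x^+ S_y^- + S_x^- S_y^+)$, where $S^\pm = S^{(1)} \pm \ii S^{(2)}$ are the raising/lowering operators with $S^+|-\tfrac12\rangle = |\tfrac12\rangle$, $S^+|\tfrac12\rangle = 0$, and symmetrically for $S^-$. (Alternatively one simply forms the $4\times 4$ matrix directly from \eqref{Pauli}; I would mention this as the elementary fallback.)

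First I would observe that $|\tfrac12,\tfrac12\rangle$ and $|-\tfrac12,-\tfrac12\rangle$ are killed by the off-diagonal term (since $S^+$ annihilates $|\tfrac12\rangle$ and $S^-$ annihilates $|-\tfrac12\rangle$), and $S_x^{(3)}S_y^{(3)}$ acts on each as multiplication by $\tfrac14$; hence both are eigenvectors with eigenvalue $\tfrac14$. Next, on the two-dimensional subspace spanned by $|\tfrac12,-\tfrac12\rangle$ and $|-\tfrac12,\tfrac12\rangle$, the diagonal term contributes $-\tfrac14$ to each, while the off-diagonal term swaps the two basis vectors (up to the factor $\tfrac12$); so in this basis the operator is $\left(\begin{smallmatrix} -\tfrac14 & \tfrac12 \\ \tfrac12 & -\tfrac14 \end{smallmatrix}\right)$, whose eigenvectors are the symmetric combination $\tfrac1{\sqrt2}(|\tfrac12,-\tfrac12\rangle + |-\tfrac12,\tfrac12\rangle)$ with eigenvalue $-\tfrac14 + \tfrac12 = \tfrac14$, and the antisymmetric combination $\tfrac1{\sqrt2}(|\tfrac12,-\tfrac12\rangle - |-\tfrac12,\tfrac12\rangle)$ with eigenvalue $-\tfrac14 - \tfrac12 = -\tfrac34$.

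Collecting these four eigenvectors, which are manifestly orthonormal, I would conclude that the spectrum is $\tfrac14$ with multiplicity $3$ (the three stated vectors, the ``triplet'') and $-\tfrac34$ with multiplicity $1$ (the ``singlet''), matching the claim; since we have exhibited $4$ orthonormal eigenvectors in a $4$-dimensional space, nothing is missing. There is essentially no obstacle here: the only point requiring a little care is fixing the sign and normalization conventions for $S^\pm$ consistently with \eqref{Pauli} and \eqref{bra ket} so that the off-diagonal block comes out as claimed; everything else is a short direct verification. As a sanity check one can note the identity $\vec S_x \cdot \vec S_y = \tfrac12(\vec S_x + \vec S_y)^2 - \tfrac38 \IdentityOperator$ together with \eqref{Pauli squares}, which gives eigenvalues $\tfrac12 s(s+1) - \tfrac34$ for total spin $s \in \{0,1\}$, i.e. $-\tfrac34$ and $\tfrac14$ with the stated multiplicities $1$ and $3$.
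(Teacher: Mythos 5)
Your proof is correct and follows essentially the same route as the paper's: a direct diagonalization on the four-dimensional space, computing the action of $\vec S_x \cdot \vec S_y$ on the basis $|a,b\rangle$ and reducing to a $2\times 2$ block on the span of $|\tfrac12,-\tfrac12\rangle$ and $|-\tfrac12,\tfrac12\rangle$ (the paper lists the action of each $S_x^{(i)}S_y^{(i)}$ separately where you group the first two into ladder operators, a cosmetic difference). One small slip in your optional sanity check: since $\vec S_x^2 = \vec S_y^2 = \tfrac34 \IdentityOperator$, the identity reads $\vec S_x \cdot \vec S_y = \tfrac12(\vec S_x + \vec S_y)^2 - \tfrac34 \IdentityOperator$, not $-\tfrac38 \IdentityOperator$; the eigenvalue formula $\tfrac12 s(s+1) - \tfrac34$ you then state is the correct one.
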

% end minipage

The eigenvector corresponding to $-\frac34$ is called a ``singlet state" by physicists, while the eigenvectors for $\frac14$ are called ``triplet states".

\begin{proof}
We have for all $a,b = \pm \frac12$,
\be
\begin{split}
& S_x^{(1)} S_y^{(1)} |a,b\rangle = \tfrac14 |-a,-b\rangle, \\
& S_x^{(2)} S_y^{(2)} |a,b\rangle = -ab |-a,-b\rangle, \\
& S_x^{(3)} S_y^{(3)} |a,b\rangle = ab |a,b\rangle.
\end{split}
\ee
The lemma then follows from straightforward linear algebra.
\end{proof}

\subsection{Hamiltonians and magnetization}

We can now introduce the Heisenberg Hamiltonians, which are self-adjoint operators in $\caH^{(\caV)}$.
\be
\label{def Heisenberg}
\begin{split}
H_{\Lambda,h}^{\rm ferro} &= -\sum_{\{x,y\} \in \caE} \vec S_x \cdot \vec S_y - h \sum_{x\in\caV} S_{x}^{(3)}, \\
H_{\Lambda,h}^{\rm anti} &= +\sum_{\{x,y\} \in \caE} \vec S_x \cdot \vec S_y - h \sum_{x\in\caV} S_{x}^{(3)}.
\end{split}
\ee
Let us briefly discuss the physical motivation behind these operators. One is interested in describing a condensed matter system where atoms are arranged on a regular lattice. Each atom hosts exactly one relevant electron. Each electron stays on its atom and its spin is described by a vector in the Hilbert space $\bbC^2$. A system of two spins is described by a vector in $\bbC^{2} \otimes \bbC^{2}$. The singlet and triplet states of Lemma \ref{lem coupling} are invariant under rotation of the spins and they form a basis. In absence of external magnetic field, the energy operator should be diagonal with respect to these states, and there should be one eigenvalue for the singlet, and one other eigenvalue for the triplets. Up to constants, it should then be $\pm \vec S_{x} \cdot \vec S_{y}$. It is natural to define the total energy as the sum of nearest-neighbor interactions. Taking into account the contribution of the external magnetic field, which can be justified along similar lines, we get the Hamiltonians of \eqref{def Heisenberg}.

Next, let $M_\Lambda$ be the operator that represents the magnetization in the 3rd direction.
\be
M_\Lambda^{(3)} = \sum_{x\in\caV} S_x^{(3)}.
\ee

% \colorbox{gray}{
% \begin{minipage}
\begin{lemma}
The Hamiltonian and magnetization operators commute, i.e.,
\[
[H_{\Lambda,h}, M_\Lambda] = 0.
\]
\end{lemma}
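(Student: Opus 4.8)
The plan is to reduce the claim to a single per-edge identity and then verify it with the commutation relations \eqref{space Pauli commutations}. Write $M_\Lambda = M_\Lambda^{(3)} = \sum_{z\in\caV} S_z^{(3)}$, and recall that $H_{\Lambda,h}$ (in either the ferro or the anti case) is a sum of terms $\pm\vec S_x\cdot\vec S_y$ over edges $\{x,y\}\in\caE$ together with $-h\sum_{x\in\caV}S_x^{(3)}$. Since the commutator is bilinear, it suffices to show that $M_\Lambda$ commutes with each summand. The field term is immediate: $\sum_x S_x^{(3)}$ commutes with itself, because operators attached to distinct sites act on distinct tensor factors and hence commute, while each $S_x^{(3)}$ commutes with itself. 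So everything comes down to proving $[\vec S_x\cdot\vec S_y,\,M_\Lambda]=0$ for a fixed edge $\{x,y\}$.

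For that, expand $M_\Lambda=\sum_{z\in\caV}S_z^{(3)}$ and split the sum according to whether $z\notin\{x,y\}$ or $z\in\{x,y\}$. For $z\notin\{x,y\}$ the operator $S_z^{(3)}$ acts on the factor $\caH_z$, which is disjoint from $\caH_x\otimes\caH_y$, so it commutes with $\vec S_x\cdot\vec S_y$; equivalently one reads this off from \eqref{space Pauli commutations}, where the Kronecker delta forces the commutator to vanish when the site indices differ. Hence $[\vec S_x\cdot\vec S_y,\,M_\Lambda]=[\vec S_x\cdot\vec S_y,\,S_x^{(3)}+S_y^{(3)}]$, and the problem has been localised to $\caH_x\otimes\caH_y$.

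Now compute this last commutator using the Leibniz rule $[AB,C]=A[B,C]+[A,C]B$ together with \eqref{space Pauli commutations} and its cyclic images. Writing $\vec S_x\cdot\vec S_y=\sum_{i=1}^3 S_x^{(i)}S_y^{(i)}$ and using that $S_x^{(i)}$ and $S_y^{(j)}$ commute, one gets
\[
[\vec S_x\cdot\vec S_y, S_x^{(3)}] = \sum_{i=1}^3 [S_x^{(i)},S_x^{(3)}]\,S_y^{(i)}
= -\ii\, S_x^{(2)}S_y^{(1)} + \ii\, S_x^{(1)}S_y^{(2)},
\]
since $[S_x^{(1)},S_x^{(3)}]=-\ii S_x^{(2)}$, $[S_x^{(2)},S_x^{(3)}]=\ii S_x^{(1)}$, and $[S_x^{(3)},S_x^{(3)}]=0$. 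Likewise
\[
[\vec S_x\cdot\vec S_y, S_y^{(3)}] = \sum_{i=1}^3 S_x^{(i)}\,[S_y^{(i)},S_y^{(3)}]
= -\ii\, S_x^{(1)}S_y^{(2)} + \ii\, S_x^{(2)}S_y^{(1)}.
\]
Adding the two lines, the four terms cancel in pairs, giving $[\vec S_x\cdot\vec S_y,\,S_x^{(3)}+S_y^{(3)}]=0$, and summing over edges completes the proof.

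There is no real obstacle here: the only point requiring care is bookkeeping of signs in the cyclically permuted commutation relations (getting $[S^{(1)},S^{(3)}]=-\ii S^{(2)}$ right rather than $+\ii S^{(2)}$), and the observation that the cancellation in the final step relies on $\vec S_x\cdot\vec S_y$ being symmetric in $x$ and $y$ — i.e.\ it is the full scalar product, and in particular the combination $S_x^{(3)}+S_y^{(3)}$ (not $S_x^{(3)}$ alone) that commutes with it. One should also note that the argument is insensitive to the overall sign of the interaction, so it applies verbatim to both $H_{\Lambda,h}^{\rm ferro}$ and $H_{\Lambda,h}^{\rm anti}$.
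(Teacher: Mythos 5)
Your proof is correct and follows essentially the same route as the paper's: discard the field term, reduce by locality to the commutator of $\vec S_x\cdot\vec S_y$ with $S_x^{(3)}+S_y^{(3)}$ on a single edge, and observe the pairwise cancellation of the four terms coming from \eqref{space Pauli commutations}. The only cosmetic difference is that you group the terms by whether they involve $S_x^{(3)}$ or $S_y^{(3)}$, whereas the paper groups them by the component index $i$; the computation is identical.
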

% end minipage

\begin{proof}
This follows from the commutation relations \eqref{space Pauli commutations}.  Namely,
using the fact that $S_{x}^{(i)}$ and $S_{y}^{(3)}$ commute for $x \neq y$,
\be
\begin{split}
&[H_{\Lambda,h}, M_\Lambda] = \sum_{\{x,y\}\in\caE, z \in \caV} [\vec S_x \cdot \vec S_y, S_z^{(3)}] \\
&= \sum_{\{x,y\}\in\caE} \Bigl( [S_x^{(1)} S_y^{(1)}, S_x^{(3)}] + [S_x^{(1)} S_y^{(1)}, S_y^{(3)}] + [S_x^{(2)} S_y^{(2)}, S_x^{(3)}] + [S_x^{(2)} S_y^{(2)}, S_y^{(3)}] \Bigr).
\end{split}
\ee
The first commutator is
\be
[S_x^{(1)} S_y^{(1)}, S_x^{(3)}] = [S_x^{(1)}, S_x^{(3)}] S_y^{(1)} = -\ii S_x^{(2)} S_y^{(1)},
\ee
and the others are similar.  We get
\be
[H_{\Lambda,h}, M_\Lambda] = \ii \sum_{\{x,y\}\in\caE} \Bigl( -S_x^{(2)} S_y^{(1)} - S_x^{(1)} S_y^{(2)} + S_x^{(1)} S_y^{(2)} + S_x^{(2)} S_y^{(1)} \Bigr) = 0.
\ee
\end{proof}

\subsection{Gibbs states and free energy}
\label{sec Gibbs states}

%Physically relevant quantities are called ``observables" and, in quantum mechanics, they are represented by self-adjoint operators. 

The equilibrium states of quantum statistical mechanics are given by \emph{Gibbs states} $\langle \cdot \rangle_{\Lambda,\beta,h}$. These are nonnegative linear functionals on the space of operators in $\caH^{(\caV)}$ of the form
\be
\langle A \rangle_{\Lambda,\beta,h} = \frac1{Z_\Lambda(\beta,h)} \Tr A \e{-\beta H_{\Lambda,h}},
\ee
where the normalization 
\be
Z_\Lambda(\beta,h) = \Tr \e{-\beta H_{\Lambda,h}}
\ee
is called the \emph{partition function}.  Here, $\Tr$ denotes the usual matrix trace.

% The trace $\Tr(A)$ is the usual matrix trace from linear algebra.  

There are deep reasons why the Gibbs states describe equilibrium states but we will not dwell on them here. We now introduce the \emph{free energy} $F_\Lambda(\beta,h)$. Its physical motivation is that it provides a connection to thermodynamics. It is a kind of generating function and it is therefore mathematically useful. The definition of the free energy in our case is
\be
F_\Lambda(\beta,h) = -\frac1\beta \log Z_\Lambda(\beta,h).
\ee

% \colorbox{gray}{
% \begin{minipage}
\begin{lemma}
The function $\beta F_{\Lambda}(\beta,h)$ is concave in $(\beta,\beta h)$.
\end{lemma}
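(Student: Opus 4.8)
The plan is to reduce the assertion to the classical convexity of the map $A \mapsto \log \Tr \e{A}$ on the space of self-adjoint operators on $\caH^{(\caV)}$. Introduce the variables $u = \beta$ and $v = \beta h$. Since $H_{\Lambda,h} = H_{\Lambda,0} - h\, M_\Lambda^{(3)}$, we have
\[
\beta F_\Lambda(\beta,h) = -\log Z_\Lambda(\beta,h) = -\log \Tr \e{-u H_{\Lambda,0} + v M_\Lambda^{(3)}} = -\phi\bigl(L(u,v)\bigr),
\]
where $L(u,v) = -u H_{\Lambda,0} + v M_\Lambda^{(3)}$ is an affine (in fact linear) map into the self-adjoint operators and $\phi(A) = \log \Tr \e{A}$. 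Because $\e{A}$ is positive, $\Tr \e{A} > 0$ and $\phi$ is well-defined and smooth. If $\phi$ is convex, then $\phi \circ L$ is convex, hence $\beta F_\Lambda$ is concave in $(u,v) = (\beta, \beta h)$ on the convex domain $\{u > 0\}$, which is exactly the claim. So everything comes down to convexity of $\phi$.

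To prove that $\phi$ is convex, fix self-adjoint operators $A, B$ and $\lambda \in (0,1)$. First apply the Golden--Thompson inequality $\Tr \e{X+Y} \le \Tr \e{X}\e{Y}$ with $X = \lambda A$ and $Y = (1-\lambda)B$, and then Hölder's inequality for the trace with conjugate exponents $p = 1/\lambda$ and $q = 1/(1-\lambda)$:
\[
\Tr \e{\lambda A + (1-\lambda)B} \le \Tr\bigl(\e{\lambda A}\e{(1-\lambda)B}\bigr) \le \bigl\|\e{\lambda A}\bigr\|_p\,\bigl\|\e{(1-\lambda)B}\bigr\|_q = \bigl(\Tr\e{A}\bigr)^\lambda \bigl(\Tr\e{B}\bigr)^{1-\lambda}.
\]
Taking logarithms yields $\phi(\lambda A + (1-\lambda)B) \le \lambda\,\phi(A) + (1-\lambda)\,\phi(B)$, as desired. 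A self-contained alternative that avoids Golden--Thompson is to check convexity along every line: for $Z(t) = \Tr\e{A+tB}$, Duhamel's formula gives $Z'(t) = \Tr\bigl(B\e{A+tB}\bigr)$ and $Z''(t) = \int_0^1 \Tr\bigl(\e{s(A+tB)}B\,\e{(1-s)(A+tB)}B\bigr)\,\dd s$, so that $\frac{\dd^2}{\dd t^2}\log Z(t) = Z''(t)/Z(t) - \bigl(Z'(t)/Z(t)\bigr)^2$ is the variance of $B$ in the Duhamel--Bogoliubov inner product $\langle X,Y\rangle = \int_0^1 \Tr(\rho^s X^*\rho^{1-s}Y)\,\dd s$ associated with $\rho = \e{A+tB}/Z(t)$; this inner product is positive semidefinite (diagonalize $\rho$), so the second derivative is nonnegative and $\phi$ is convex.

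The only genuinely technical points are the justification of the differentiation under the trace in the second approach, and the statements of the Golden--Thompson and trace-Hölder inequalities in the first; but since $\dim \caH^{(\caV)} = 2^{|\caV|}$ is finite, $t \mapsto \Tr \e{A+tB}$ is a bona fide entire function and these are all standard facts, so I do not expect any real obstacle. The conceptual content is simply that $-\beta H_{\Lambda,h}$ depends \emph{affinely} on $(\beta,\beta h)$, which is what makes the substitution $u = \beta$, $v = \beta h$ the natural one.
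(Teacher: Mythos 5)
Your proof is correct, but it follows a genuinely different route from the paper's. The paper differentiates twice: it writes down the Hessian of $\beta F_\Lambda$ in the variables $(\beta,\beta h)$, identifies the entries as $-\bigl\langle (H_{\Lambda,0}-\langle H_{\Lambda,0}\rangle)^2\bigr\rangle$, $-\bigl\langle (M_\Lambda-\langle M_\Lambda\rangle)^2\bigr\rangle$ and the corresponding covariance, and then concludes negative semi-definiteness from the Cauchy--Schwarz inequality for the state $(A,B)\mapsto\langle A^*B\rangle$. Those clean variance formulas are only available because $H_{\Lambda,0}$ and $M_\Lambda$ commute (the lemma proved just before); for a general perturbation the second derivative of $\log\Tr\e{A+tB}$ produces the Duhamel two-point function rather than an ordinary variance. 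You instead observe that $-\beta H_{\Lambda,h}=-uH_{\Lambda,0}+vM_\Lambda$ is affine in $(u,v)=(\beta,\beta h)$ and reduce everything to convexity of $A\mapsto\log\Tr\e{A}$, which you get from Golden--Thompson plus trace-H\"older without any differentiation. This buys full generality (no commutation needed, no smoothness to justify) at the price of importing two standard but nontrivial trace inequalities; the paper's argument is more self-contained for this specific Hamiltonian but leans on the commutativity. Your second, Duhamel--Bogoliubov variant is in fact the non-commutative generalization of exactly what the paper does, so the two approaches meet there.
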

% end minipage

\begin{proof}
We will rather check that $-\beta F_\Lambda$ is convex, which is the case if the matrix
\[
\left( \begin{matrix} \frac{\partial^2 \beta F_\Lambda}{\partial \beta^2} & \frac{\partial^2 \beta F_\Lambda}{\partial \beta \partial (\beta h)} \\ \frac{\partial^2 \beta F_\Lambda}{\partial \beta \partial (\beta h)} & \frac{\partial^2 \beta F_\Lambda}{\partial (\beta h)^2} \end{matrix} \right)
\]
is positive definite. Let us write $\langle \cdot \rangle$ instead of $\langle \cdot \rangle_{\Lambda,\beta,h}$. We have
\be
\label{c'est concave}
\begin{split}
&\frac{\partial^2}{\partial \beta^2} \beta F_\Lambda(\beta,h) = - \bigl\langle (H_{\Lambda,0} - \langle H_{\Lambda,0} \rangle)^2 \bigr\rangle, \\
&\frac{\partial^2}{\partial (\beta h)^2} \beta F_\Lambda(\beta,h) = - \bigl\langle (M_\Lambda - \langle M_\Lambda \rangle)^2 \bigr\rangle, \\
&\frac{\partial^2}{\partial \beta \partial (\beta h)} \beta F_\Lambda(\beta,h) = \bigl\langle (H_{\Lambda,0} - \langle H_{\Lambda,0} \rangle) (M_\Lambda - \langle M_\Lambda \rangle) \bigr\rangle.
\end{split}
\ee
Then $F_\Lambda$ is convex if
\be
\label{CS ineq}
\bigl\langle (H_{\Lambda,0} - \langle H_{\Lambda,0} \rangle) (M_\Lambda - \langle M_\Lambda \rangle) \bigr\rangle^2 \leq \bigl\langle (H_{\Lambda,0} - \langle H_{\Lambda,0} \rangle)^2 \bigr\rangle \bigl\langle (M_\Lambda - \langle M_\Lambda \rangle)^2 \bigr\rangle.
\ee
It is not hard to check that the map $(A,B) \mapsto \langle A^* B \rangle$ is an inner product on the space of operators that commute with $H_{\Lambda,h}$. Then
\be
\label{gen CS ineq}
|\langle A^* B \rangle|^2 \leq \langle A^* A \rangle \langle B^* B \rangle
\ee
by the Cauchy-Schwarz inequality and, in particular, this implies \eqref{CS ineq}.
\end{proof}

Concave functions are necessarily continuous. But it is useful to establish that $F_\Lambda(\beta,h)$ is uniformly continuous on compact domains. This property will be used in Section \ref{sec thermolim} which discusses the existence of infinite volume limits.

% \colorbox{gray}{
% \begin{minipage}
\begin{lemma}
\label{lem F unif cont}
\[
\bigl| \beta F_\Lambda(\beta,h) - \beta' F_\Lambda(\beta',h') \bigr| \leq |\beta-\beta'| (\tfrac34 |\caE| + \tfrac{|h|}2 |\caV|) + \tfrac12 \beta |h-h'| |\caV|.
\]
\end{lemma}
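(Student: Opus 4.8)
The plan is to extract the Lipschitz bound for $\beta F_\Lambda$ directly from the first derivatives of $\log Z_\Lambda$, exploiting that the operators appearing in the exponent have controlled norms. Write $\beta F_\Lambda(\beta,h)=-\log Z_\Lambda(\beta,h)=-\log\Tr\e{-\beta H_{\Lambda,h}}$ and recall $H_{\Lambda,h}=H_{\Lambda,0}-hM_\Lambda^{(3)}$, so that $\beta H_{\Lambda,h}=\beta H_{\Lambda,0}-(\beta h)M_\Lambda^{(3)}$ depends on $(\beta,h)$ only through $u:=\beta$ and $v:=\beta h$. I would therefore introduce
\[
\Phi(u,v):=-\log\Tr\exp\!\bigl(-uH_{\Lambda,0}+vM_\Lambda^{(3)}\bigr),\qquad\text{so that}\quad \beta F_\Lambda(\beta,h)=\Phi(\beta,\beta h).
\]
Since $[H_{\Lambda,0},M_\Lambda^{(3)}]=0$ (the $h=0$ case of the commutation lemma above), one may fix a common orthonormal eigenbasis of $H_{\Lambda,0}$ and $M_\Lambda^{(3)}$, with respective eigenvalues $\epsilon_k$ and $m_k$; then $\Phi(u,v)=-\log\sum_k\e{-u\epsilon_k+vm_k}$ is real-analytic, and there is no issue about differentiating under the trace.

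Next I would bound the partial derivatives of $\Phi$ uniformly in $(u,v)$. Differentiating the explicit expression, $\partial_u\Phi=\langle H_{\Lambda,0}\rangle_{\Lambda,\beta,h}$ is a convex combination of the $\epsilon_k$, and $\partial_v\Phi=-\langle M_\Lambda^{(3)}\rangle_{\Lambda,\beta,h}$ a convex combination of the $-m_k$. By Lemma~\ref{lem coupling} we have $\|\vec S_x\cdot\vec S_y\|=\tfrac34$, whence $|\epsilon_k|\le\|H_{\Lambda,0}\|\le\sum_{\{x,y\}\in\caE}\|\vec S_x\cdot\vec S_y\|=\tfrac34|\caE|$; and $\|S_x^{(3)}\|=\tfrac12$ gives $|m_k|\le\|M_\Lambda^{(3)}\|\le\tfrac12|\caV|$. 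Hence $|\partial_u\Phi|\le\tfrac34|\caE|$ and $|\partial_v\Phi|\le\tfrac12|\caV|$ at every point.

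Finally I would integrate these bounds along the straight segment from $(\beta',\beta'h')$ to $(\beta,\beta h)$, which gives
\[
\bigl|\beta F_\Lambda(\beta,h)-\beta'F_\Lambda(\beta',h')\bigr|=\bigl|\Phi(\beta,\beta h)-\Phi(\beta',\beta'h')\bigr|\le\tfrac34|\caE|\,|\beta-\beta'|+\tfrac12|\caV|\,|\beta h-\beta'h'|,
\]
and then conclude with the elementary inequality $|\beta h-\beta'h'|\le|h|\,|\beta-\beta'|+\beta\,|h-h'|$ (immediate when $\beta\ge\beta'$, and in general one obtains the same conclusion with $\beta$ replaced by $\max(\beta,\beta')$). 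I do not expect any genuine obstacle here; the only point deserving a word of care is the termwise differentiation of $\log\Tr\e{\cdots}$, which is exactly why recasting the problem in the variables $(\beta,\beta h)$ — where the exponent $-uH_{\Lambda,0}+vM_\Lambda^{(3)}$ involves the commuting pair $(H_{\Lambda,0},M_\Lambda^{(3)})$ — is convenient. Everything else is bookkeeping of the operator-norm bounds furnished by Lemma~\ref{lem coupling} together with the observation $\|S_x^{(3)}\|=\tfrac12$.
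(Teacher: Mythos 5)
Your proof is correct and follows essentially the same route as the paper's: both arguments control $\beta F_\Lambda$ by bounding its first derivatives via $|\langle A\rangle|\le\|A\|$ together with $\|\vec S_x\cdot\vec S_y\|=\tfrac34$ and $\|S_x^{(3)}\|=\tfrac12$, and then integrate. The only (cosmetic) difference is that you differentiate in the variables $(\beta,\beta h)$ and recombine with $|\beta h-\beta'h'|\le|h|\,|\beta-\beta'|+\beta|h-h'|$, whereas the paper integrates separately in $\beta$ at fixed $h$ and in $h$ at fixed $\beta$; both versions share the same harmless asymmetry in $(\beta,\beta')$ that you already flag.
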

% end minipage

\begin{proof}
We have
\be
\beta F_\Lambda(\beta,h) - \beta' F_\Lambda(\beta',h) = \int_{\beta'}^{\beta} \frac{\dd}{\dd s} s F_\Lambda(s,h) \dd s = \int_{\beta'}^{\beta} \langle H_{\Lambda,h} \rangle_{\Lambda, s, h} \dd s.
\ee
We can also check that $\beta F_\Lambda(\beta,h) - \beta F_\Lambda(\beta,h') = \int_{h'}^{h} \langle M_{\Lambda} \rangle_{\Lambda,\beta,s} \dd s$. 
The result follows from $|\langle A \rangle_{\Lambda,\beta,h}| \leq \|A\|$ for any operator $A$, and from $\| \vec S_x \cdot \vec S_y \| = \frac34$ (cf Lemma \ref{lem coupling}) and $\| S_x^{(3)} \| = \frac12$.
\end{proof}

\subsection{Symmetries}

In quantum statistical mechanics, a symmetry is represented by a unitary transformation which leaves the Hamiltonian invariant. It follows that (finite volume) Gibbs states also possess the symmetry.  However, infinite volume states may lose it. 
This is called \emph{symmetry breaking} and is a manifestation of a phase transition.  We only mention the ``spin flip" symmetry here, corresponding to the unitary operator %that corresponds to flipping the spins is
\be
U |s^{(\caV)} \rangle = |-s^{(\caV)} \rangle.
\ee
One can check that $U^{-1}S^{(i)}_{x}S^{(i)}_{y}U = S^{(i)}_{x}S^{(i)}_{y}$ and $U^{-1}S_{x}^{(3)}U = -S_{x}^{(3)}$.
It follows that
\be
U^{-1} H_{\Lambda,h} U = H_{\Lambda,-h}.
\ee
This applies to both the ferromagnetic and antiferromagnetic Hamiltonians. It follows that $F_\Lambda(\beta,-h) = F_\Lambda(\beta,h)$, and so the free energy is symmetric as a function of $h$.

%Another symmetry, that is relevant e.g.\ in the case of boxes in $\bbZ^{d}$ with periodic boundary conditions, is invariance under translations.

\section{Stochastic representations}
\label{sec stoch rep}

Stochastic representations of quantum lattice models go back to Ginibre, who used a Peierls contour argument to prove the occurrence of phase transitions in anisotropic models \cite{MR1552567}. Conlon and Solovej introduced a random walk representation for the ferromagnetic model and used it to get an upper bound on the free energy \cite{MR1148752}. A different representation was introduced by T\'oth, who improved the previous bound \cite{MR1224836}.
Further work on quantum models using similar representations include the quantum Pirogov-Sinai theory \cite{MR1414838,MR1400181} and Ising models in transverse magnetic field \cite{MR2581610,MR2608122,MR2477388}.

A major advantage of T\'oth's representation is that spin correlations have natural probabilistic expressions, being given by the probability that two sites belong to the same cycle (see below for details). A similar representation was introduced by Aizenman and Nachtergaele for the antiferromagnetic model, who used it to study properties of spin chains \cite{MR1288152}. The random objects are a bit different (loops instead of cycles), but this representation shares the advantage that spin correlations are given by the probability of belonging to the same loop.

The representations due to T\'oth and Aizenman-Nachtergaele both involve a Poisson process on the edges of the graph. The measure is reweighted by a function of suitable geometric objects (``cycles" or ``loops"). We first describe the two models in Section \ref{sec Poisson conf}; we will relate them to the Heisenberg models in Sections \ref{sec Toth} and \ref{sec AN}.

\subsection{Poisson edge process, cycles and loops}
\label{sec Poisson conf}

Recall that $\Lambda = (\caV,\caE)$ is a finite undirected graph.  
We attach to each edge a 
Poisson process on $[0,\beta]$ of unit intensity (see \S\ref{s:PPPmu} for the definition of a Poisson point process).  The Poisson processes
for different edges are independent.
%That is, the random objects are finite sequences of times, $(t_1,\dots,t_k)$, where $k$ is a Poisson random variable with mean $\beta$; the times are ordered, $0 < t_1 < \dots < t_k < \beta$, and they are chosen uniformly in the $k$-dimensional simplex. 
A realization of this ``Poisson edge process'' is a finite sequence of pairs
\be
\omega = \bigl( (e_1,t_1), \ldots, (e_k,t_k) \bigr).
\ee
Each pair is called a bridge.  The number of bridges across each 
edge, thus, has a Poisson distribution with mean $\beta$, and the total number of bridges is Poisson with mean 
$\beta |\caE|$.  Conditional on there being $k$ bridges, their times of arrival are uniformly distributed in 
$\{0 <  t_{1} < t_{2} < \ldots < t_{k} < \beta \}$
and the edges are chosen uniformly from $\caE$.
The corresponding measure is denoted $\rho_{\Lambda,\beta}(\dd\omega)$.

% denotes the measure for independent Poisson processes on the edges $\caE$

To each realization $\omega$ there corresponds a configuration of cycles and configuration of loops. The mathematical definitions are a bit cumbersome but the geometric ideas are simpler and more elegant.  The reader is encouraged to look at Figure \ref{fig cycles} for an illustration.

\bfig
\centerline{\includegraphics[width=120mm]{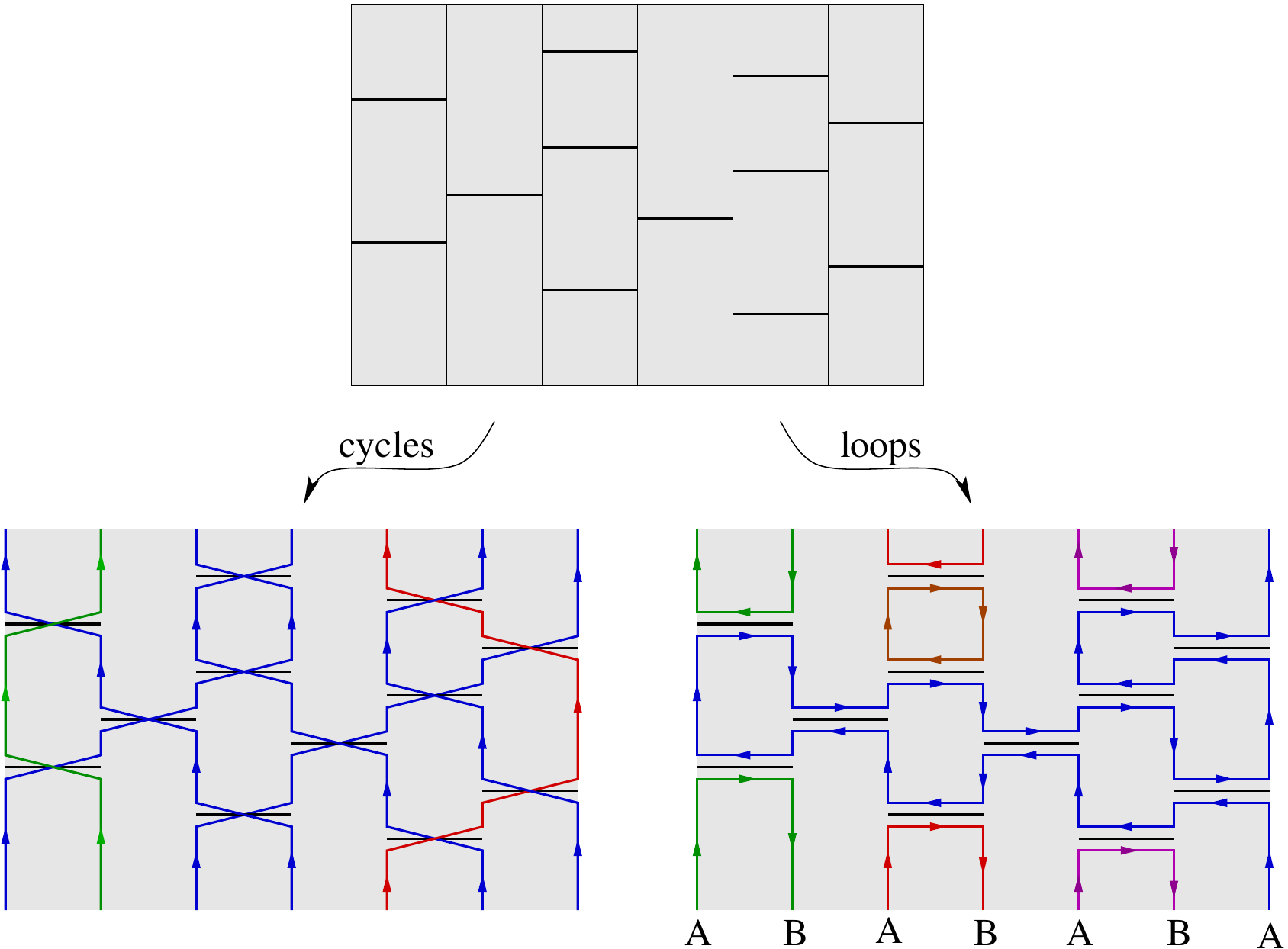}}
\caption{Top: an edge Poisson configuration $\omega$ on $\caV \times [0,\beta]_{\rm per}$. Bottom left: its associated cycle configuration. Bottom right: its associated loop configuration. We see that $|\caC(\omega)| = 3$ and $|\caL(\omega)| = 5$.}
\label{fig cycles}
\efig

We consider the cylinder $\caV \times [0,\beta]_{\rm per}$, where the subscript ``per'' indicates that we consider periodic boundary conditions. A \emph{cycle} is a closed trajectory on this space; that is, it is a function $\gamma : [0,L] \to \caV \times [0,\beta]_{\rm per}$ such that, if $\gamma(\tau) = (x(\tau), t(\tau))$, we have:
\begin{itemize}
\item $\gamma(\tau)$ is piecewise continuous; if it is continuous on the interval $I \subset [0,L]$, then $x(\tau)$ is constant and $\frac\dd{\dd \tau} t(\tau) = 1$ in $I$.
\item $\gamma(\tau)$ is discontinuous at $\tau$ iff the pair $(e,t)$ belongs to $\omega$, where $t = t(\tau)$ and $e$ is the edge $\{ x(\tau-), x(\tau+) \}$.
\end{itemize}
We choose $L$ to be the smallest positive number such that $\gamma(L) = \gamma(0)$. Then $L$ is the length of the cycle; it corresponds to the sum of the vertical legs in Figure\ \ref{fig cycles} and is necessarily a multiple of $\beta$. Let us make the cycles semi-continuous by assigning the value $\gamma(\tau) = \gamma(\tau-)$ at the points of discontinuity. We identify cycles whose support is identical. Then to each $\omega$ corresponds a configuration of cycles $\caC(\omega)$ whose supports form a partition of the cylinder $\caV \times [0,\beta]_{\rm per}$. The number of cycles is $|\caC(\omega)|$.

Loops are similar, but we now suppose that the graph is bipartite. The A sublattice possesses and orientation, which is reversed on the B sublattice. We still consider the cylinder $\caV \times [0,\beta]_{\rm per}$. A \emph{loop} is a closed trajectory on this space; that is, it is a function $\gamma : [0,L] \to \caV \times [0,\beta]_{\rm per}$ such that, with $\gamma(\tau) = (x(\tau), t(\tau))$:
\begin{itemize}
\item $\gamma(\tau)$ is piecewise continuous; if it is continuous in interval $I \subset [0,L]$, then $x(\tau)$ is constant and, in $I$,
\be
\frac\dd{\dd \tau} t(\tau) = \begin{cases} 1 & \text{ if $x(\tau)$ belongs to the A sublattice,} \\ -1 & \text{ if $x(\tau)$ belongs to the B sublattice.} \end{cases}
\ee
\item $\gamma(\tau)$ is discontinuous at $\tau$ iff the pair $(e,t)$ belongs to $\omega$, where $t = t(\tau)$ and $e$ is the edge $\{ x(\tau-), x(\tau+) \}$.
\end{itemize}
We choose $L$ to be the smallest positive number such that $\gamma(L) = \gamma(0)$. Then $L$ is the length of the loop; it corresponds to the sum of the vertical legs in Figure\ \ref{fig cycles} (as for cycles), but it is not a multiple of $\beta$ in general (contrary to cycles). We also make the loops semi-continuous by assigning the value $\gamma(\tau) = \gamma(\tau-)$ at the points of discontinuity. Identifying loops whose support is identical, to each $\omega$ corresponds a configuration of loops $\caL(\omega)$ whose supports form a partition of the cylinder $\caV \times [0,\beta]_{\rm per}$. The number of loops is $|\caL(\omega)|$.

As we shall see, the relevant probability measures for the Heisenberg models (with $h=0$) are 
proportional to $2^{|\caC(\omega)|} \rho_{\caE,\beta}(\dd\omega)$ and $2^{|\caL(\omega)|} \rho_{\caE,\beta}(\dd\omega)$.
% with mean $\beta |\caE|$.

\subsection{Duhamel expansion}
\label{sec Duhamel}

We first state and prove Duhamel's formula.  It is a variant of the Trotter product formula 
that is usually employed to derive stochastic representations.

% \colorbox{gray}{
% \begin{minipage}
\begin{proposition}
\label{prop Duhamel}
Let $A,B$ be $n \times n$ matrices. Then
\[
\begin{split}
\e{A+B} &= \e{A} + \int_0^1 \e{tA} B \e{(1-t)(A+B)} \dd t \\
&= \sum_{k\geq0} \int_{0<t_1<\dots<t_k<1} \dd t_1 \dots \dd t_k \e{t_1 A} B \e{(t_2-t_1) A} B \dots B \e{(1-t_k) A}.
\end{split}
\]
\end{proposition}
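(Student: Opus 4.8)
The plan is to prove the first ("integral") identity directly and then obtain the second ("series") identity by iteration. For the first identity, I would introduce the matrix-valued function $f(t) = \e{tA} \e{(1-t)(A+B)}$ on $[0,1]$ and compute its derivative. Differentiating the product and using $\frac{\dd}{\dd t} \e{tA} = A \e{tA} = \e{tA} A$ and $\frac{\dd}{\dd t} \e{(1-t)(A+B)} = -(A+B) \e{(1-t)(A+B)}$, one finds $f'(t) = \e{tA} A \e{(1-t)(A+B)} - \e{tA} (A+B) \e{(1-t)(A+B)} = -\e{tA} B \e{(1-t)(A+B)}$. Then the fundamental theorem of calculus gives $f(1) - f(0) = -\int_0^1 \e{tA} B \e{(1-t)(A+B)} \dd t$, and since $f(1) = \e{A}$ and $f(0) = \e{A+B}$, this rearranges to exactly the first displayed equation.

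For the second identity, I would apply the first identity recursively. Substituting the first identity for the factor $\e{(1-t_1)(A+B)}$ appearing inside the integral (with $A$ replaced by $(1-t_1)A$, $B$ replaced by $(1-t_1)B$, and a change of variables), one peels off successive factors of $B$ between exponentials of $A$. Iterating $k$ times produces the term with $k$ insertions of $B$ plus a remainder integral still containing one factor of $\e{\,\cdot\,(A+B)}$. One then checks that the remainder tends to $0$ as $k \to \infty$: the $k$-fold ordered simplex $\{0 < t_1 < \dots < t_k < 1\}$ has volume $1/k!$, and each integrand is bounded in norm by $\e{\|A\|} \|B\|^k \e{\|A\|+\|B\|}$ (using submultiplicativity of the operator norm and $\|\e{tA}\| \le \e{t\|A\|}$), so the remainder is $O(\|B\|^k/k!) \to 0$. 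Summing the peeled-off terms gives the claimed series, which also converges absolutely by the same bound.

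I expect the only real subtlety to be bookkeeping in the iteration step: getting the nested change of variables right so that the $k$-th term indeed has the form $\e{t_1 A} B \e{(t_2-t_1)A} B \cdots B \e{(1-t_k)A}$ with the times ordered as $0 < t_1 < \dots < t_k < 1$. An alternative, cleaner route to the series is to bypass the iteration entirely: expand both sides as power series in the entries (or, more slickly, note both sides solve the same linear ODE $\frac{\dd}{\dd s}\Phi(s) = (A+B)\Phi(s)$, $\Phi(0) = \IdentityOperator$, with $\Phi(1)$ the left side and the Dyson series the right side), and invoke uniqueness. Either way the estimates needed are elementary, so the main obstacle is purely organizational rather than conceptual.
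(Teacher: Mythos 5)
Your proposal is correct and follows essentially the same route as the paper: establish the one-step Duhamel identity by a derivative computation, then iterate and kill the remainder with the $\|B\|^k/k!$ bound coming from the volume of the ordered simplex. The only (cosmetic) difference is that you integrate $\frac{\dd}{\dd t}\bigl(\e{tA}\e{(1-t)(A+B)}\bigr)$ directly via the fundamental theorem of calculus, whereas the paper verifies that the right-hand side solves the ODE $F'(s)=F(s)(A+B)$ with $F(0)=\IdentityOperator$ and invokes uniqueness — two equivalent one-line calculus arguments.
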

% end minipage

\begin{proof}
Let $F(s)$ be the matrix-valued function
\be
F(s) = \e{sA} + \int_0^s \e{tA} B \e{(s-t) (A+B)} \dd t.
\ee
We show that, for all $s$,
\be
\label{Duhamelissima}
\e{s(A+B)} = F(s).
\ee
The derivative of $F(s)$ is
\be
F'(s) = \e{sA} A + \e{sA} B + \int_0^s \e{tA} B \e{(s-t) (A+B)} (A+B) \dd t = F(s) (A+B).
\ee
On the other hand, the derivative of $\e{s(A+B)}$ is $\e{s(A+B)} (A+B)$. The identity \eqref{Duhamelissima} clearly holds for $s=0$ and, since both sides satisfy the same differential equation, they must be equal for all $s$.

We can iterate Duhamel's formula $N$ times so as to get
\be
\begin{split}
&\e{A+B} = \sum_{k=0}^N \int_{0<t_1<\dots<t_k<1} \dd t_1 \dots \dd t_k \e{t_1 A} B \e{(t_2-t_1) A} B \dots B \e{(1-t_k) A} \\
&+ \int_{0<t_1<\dots<t_N<1} \dd t_1 \dots \dd t_k \e{t_1 A} B \e{(t_2-t_1) A} B \dots B \Bigl[ \e{(1-t_N) (A+B)} - \e{(1-t_N) A} \Bigr].
\end{split}
\ee
Using $\| \e{tA} \| \leq \e{t\|A\|}$, the last term is less than $2 \e{\|A\|+\|B\|} \frac{\|B\|^N}{N!}$ and so 
it vanishes in the limit $N\to\infty$. The summand is less than $\e{\|A\|} \frac{\|B\|^k}{k!}$, so that the sum is absolutely convergent.
\end{proof}

Our goal is to perform Duhamel's expansion on the Gibbs operator $\e{-\beta H_{\Lambda,h}}$, where the Hamiltonian is given by a sum of terms indexed by the edges and by vertices. The following corollary applies in this case.

% \colorbox{gray}{
% \begin{minipage}
\begin{corollary}
\label{cor Duhamel}
Let $A$ and $(h_e)$, $e \in \caE$, be matrices in $\caH^{(\caV)}$. Then
\[
\e{\beta (A + \sum_{e\in\caE} h_e)} = \int \dd\rho_{\caE,\beta}(\omega) \e{t_1 A} h_{e_1} \e{(t_2-t_1) A} h_{e_2} \dots h_{e_k} \e{(\beta-t_k) A},
\]
where $(t_1,e_1), \dots, (t_k,e_k)$ are the bridges in $\omega$.
%ordered so that $t_{1} < t_{2} < \ldots  < t_{k}$.
\end{corollary}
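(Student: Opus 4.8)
The plan is to derive Corollary~\ref{cor Duhamel} from Proposition~\ref{prop Duhamel} by a straightforward change of variables, identifying the $k$-fold integral appearing in Duhamel's formula with the integral against $\rho_{\caE,\beta}$. First I would apply Proposition~\ref{prop Duhamel} with the single matrix $B$ replaced by the sum $\sum_{e\in\caE} h_e$ and with $A$ rescaled: setting $\widetilde A = \beta A$ and $\widetilde B = \beta \sum_{e\in\caE} h_e$, the proposition gives
\be
\e{\beta(A+\sum_{e\in\caE} h_e)} = \sum_{k\geq 0} \int_{0<s_1<\dots<s_k<1} \dd s_1 \dots \dd s_k\, \e{s_1 \widetilde A}\, \widetilde B\, \e{(s_2-s_1)\widetilde A}\, \widetilde B \dots \widetilde B\, \e{(1-s_k)\widetilde A}.
\ee
Then I would substitute $t_j = \beta s_j$, so that $\e{s_j\widetilde A} = \e{t_j A}$, the increments $(s_{j+1}-s_j)\widetilde A$ become $(t_{j+1}-t_j)A$, and the Jacobian of the change of variables contributes a factor $\beta^{-k}$ which cancels $k$ of the $\beta$ factors sitting inside each $\widetilde B = \beta\sum_e h_e$; the remaining $\beta^k$ from the $k$ copies of $\widetilde B$ is then absorbed below.

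Next I would expand each factor $\widetilde B = \beta\sum_{e\in\caE} h_e$ and use multilinearity of the product of matrices to write the $k$-th term as a sum over ordered tuples $(e_1,\dots,e_k)\in\caE^k$ of integrals
\be
\beta^k \int_{0<t_1<\dots<t_k<\beta} \beta^{-k}\,\dd t_1 \dots \dd t_k\, \e{t_1 A}\, h_{e_1}\, \e{(t_2-t_1)A}\, h_{e_2} \dots h_{e_k}\, \e{(\beta-t_k)A}.
\ee
The point of the final step is to recognize this sum over $k$, over ordered tuples in $\caE^k$, and over the simplex $0<t_1<\dots<t_k<\beta$ with density $\beta^{-k}$, precisely as integration against $\rho_{\caE,\beta}(\dd\omega)$: from the description in \S\ref{sec Poisson conf}, under $\rho_{\caE,\beta}$ the number of bridges $k$ is Poisson with mean $\beta|\caE|$, and conditionally on $k$ the bridge times are an ordered uniform sample in $[0,\beta]$ and the edges are i.i.d.\ uniform on $\caE$; hence the conditional density on $\{(t_j,e_j)\}$ is $\frac{k!}{\beta^k}\cdot|\caE|^{-k}$ on the ordered simplex. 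Multiplying by the Poisson weight $\e{-\beta|\caE|}\frac{(\beta|\caE|)^k}{k!}$ gives total density $\e{-\beta|\caE|}\beta^{-k}$ times a free sum over edge-tuples --- but the combinatorial factors $k!$ and $|\caE|^k$ have cancelled, so one must be careful to note that the $\e{-\beta|\caE|}$ is exactly compensated by writing $\e{\beta A} = \e{\beta(A - |\caE|\,\IdentityOperator) }\e{\beta|\caE|}$ or, more cleanly, by observing that the identity as stated is an operator identity in which the normalization is already built into $\rho_{\caE,\beta}$ being a probability measure; I would simply check both sides agree by comparing the explicit series.

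The main obstacle I anticipate is purely bookkeeping: keeping track of the interplay between the three sources of $\beta$-powers (the rescaling of $A$ inside the exponentials, the factor $\beta$ multiplying each $h_e$, and the Jacobian $\beta^{-k}$ of $t_j = \beta s_j$) together with the combinatorial factors ($k!$ from ordering the simplex, $|\caE|^k$ from the uniform edge choice, and the Poisson normalization $\e{-\beta|\caE|}$), and verifying they all cancel to leave exactly the claimed expression. There is no analytic difficulty --- convergence of the series is already guaranteed by the bound $\e{\|A\|}\|B\|^k/k!$ from Proposition~\ref{prop Duhamel} applied with $\|B\| = \beta\sum_e\|h_e\|$ --- so once the constants are matched the proof is complete. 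A clean way to present it is to first record the ordinary (scalar-coefficient) Duhamel series for $\e{\beta(A+h)}$ with $h=\sum_e h_e$, then insert the definition of $\rho_{\caE,\beta}$ and observe term-by-term equality.
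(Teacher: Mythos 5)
Your proof takes essentially the same route as the paper's: the paper's three-line proof expands the right-hand side over the number of bridges, their ordered times and their edge labels, and then recognizes the series of Proposition \ref{prop Duhamel} after the rescaling $t_i' = t_i/\beta$; you run the identical computation in the opposite direction, starting from the proposition with $\widetilde A = \beta A$ and $\widetilde B = \beta\sum_e h_e$. The substance is the same and the argument is complete.

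The one point where you hesitate --- the fate of the Poisson normalization --- deserves a definite answer, because your computation there is essentially right and neither of your two proposed fixes actually resolves it. Per fixed $k$, ordered edge tuple $(e_1,\dots,e_k)$ and ordered times, the weight carried by the probability measure is $\e{-\beta|\caE|}\tfrac{(\beta|\caE|)^k}{k!}\cdot\tfrac{k!}{\beta^k}\cdot|\caE|^{-k} = \e{-\beta|\caE|}$ (not $\e{-\beta|\caE|}\beta^{-k}$ as you wrote: the $\beta^{-k}$ from the simplex density is already consumed by the $(\beta|\caE|)^{k}$ from the Poisson weight), whereas the Duhamel series assigns weight $1$ to the same term. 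So with $\rho_{\caE,\beta}$ read literally as a probability measure the two sides differ by exactly $\e{\beta|\caE|}$, and no rewriting of $\e{\beta A}$ conjures that factor away. The resolution is that in this identity $\rho_{\caE,\beta}$ must be understood as the unnormalized Poisson measure $\sum_k\sum_{e_1,\dots,e_k}\int_{0<t_1<\dots<t_k<\beta}\dd t_1\cdots\dd t_k$, i.e. $\e{\beta|\caE|}$ times the probability measure; equivalently the stated formula needs a prefactor $\e{\beta|\caE|}$ under the probability normalization. (A one-edge sanity check: $\Tr\e{\beta T_{xy}} = 3\e{\beta}+\e{-\beta}$, while the right side computed with the probability measure gives $3+\e{-2\beta}$.) The paper's own proof glosses over this at exactly the step where you stumbled, and the constant is ultimately harmless since every quantity of interest is normalized by the partition function; apart from this convention issue, your bookkeeping of the $k!$, $|\caE|^{k}$ and $\beta^{\pm k}$ factors is correct.
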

% end minipage

\begin{proof}
We can expand the right side by summing over the number $k$ of events, then integrating over $0 < t_1 < \dots < t_k < \beta$ for the times of occurrence, and then summing over edges $e_1, \dots, e_k \in \caE$. After the change of variables $t_i' = t_i/\beta$, we recognize the second line of Proposition \ref{prop Duhamel}. 
\end{proof}

\subsection{T\'oth's representation of the ferromagnet}
\label{sec Toth}

It is convenient to introduce the operator $T_{x,y}$ which transposes the spins at $x$ and $y$. In $\caH_x \otimes \caH_y$, the operator acts as follows:
\be
T_{x,y} |a,b\rangle = |b,a\rangle, \qquad a,b = \pm \tfrac12.
\ee
This rule extends to general vectors by linearity, and it extends to $\caH^{(\caV)}$ by tensoring it with $\IdentityOperator_{\caV \setminus \{x,y\}}$. Using Lemma \ref{lem coupling}, it is not hard to check that
\be
\label{coupling vs transposition}
\vec S_x \cdot \vec S_y = \tfrac12 T_{x,y} - \tfrac14 \IdentityOperator_{\{x,y\}}.
\ee

Recall that $\caC(\omega)$ is the set of cycles of $\omega$, and let $\gamma_x \in \Cycles(\omega)$ denote the cycle that intersects $(x,0) \in \caV \times [0,\beta]_{\rm per}$.  Let $L(\gamma)$ denote the (vertical) length of the cycle $\gamma$; it is always a multiple of $\beta$ in the theorem below.

%\colorbox{gray}{
%\begin{minipage}{120mm}
\begin{theorem}[T\'oth's representation of the ferromagnet]
\label{thm Toth}
The partition function, the average magnetization, and the two-point correlation function have the following expressions.
\[
\begin{split}
&Z_\Lambda^{\rm ferro}(2\beta,h) = \e{-\frac\beta2 |\caE|} \int\dd\rho_{\caE,\beta}(\omega) \prod_{\gamma \in \caC(\omega)} \bigl( 2 \cosh(hL(\gamma)) \bigr), \\
&\Tr S_x^{(3)} \e{-2\beta H_{\Lambda,h}^{\rm ferro}} = \tfrac12 \e{-\frac\beta2 |\caE|} \int\dd\rho_{\caE,\beta}(\omega) \, \tanh(hL(\gamma_x)) \prod_{\gamma \in \caC(\omega)} \bigl( 2 \cosh(hL(\gamma)) \bigr), \\
&\Tr S_x^{(3)} S_y^{(3)} \e{-2\beta H_{\Lambda,h}^{\rm ferro}} = \tfrac14 \e{-\frac\beta2 |\caE|} \int\dd\rho_{\caE,\beta}(\omega) \prod_{\gamma \in \caC(\omega)} \bigl( 2 \cosh(hL(\gamma)) \bigr) \\
&\hspace{4cm} \times \begin{cases} 1 & \text{if } \gamma_x = \gamma_y, \\ \tanh(hL(\gamma_x)) \tanh(hL(\gamma_y)) & \text{if } \gamma_x \neq \gamma_y. \end{cases}
\end{split}
\]
\end{theorem}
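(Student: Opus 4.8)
The plan is to apply Corollary \ref{cor Duhamel} directly to $\e{-2\beta H_{\Lambda,h}^{\rm ferro}}$ and then evaluate the resulting trace combinatorially using the transposition structure. First I would write $-2\beta H_{\Lambda,h}^{\rm ferro} = 2\beta\bigl(\sum_{\{x,y\}\in\caE}\vec S_x\cdot\vec S_y\bigr) + 2\beta h\,M_\Lambda^{(3)}$, and substitute \eqref{coupling vs transposition} to get $\vec S_x\cdot\vec S_y = \tfrac12 T_{x,y} - \tfrac14\IdentityOperator_{\{x,y\}}$. The constant terms $-\tfrac14$ per edge contribute a scalar factor $\e{-\frac\beta2|\caE|}$ (matching the prefactor in the statement), so that we are left with expanding $\e{2\beta(A + \sum_e \tfrac12 T_e)}$ where $A = 2\beta h\,M_\Lambda^{(3)}$ is diagonal. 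To match the unit-intensity Poisson measure $\rho_{\caE,\beta}$, I would absorb the factor $\tfrac12$ and the $2\beta$ into the bridge intensity via the change of variables in Corollary \ref{cor Duhamel}; carefully tracking constants here is where one must be attentive, but it is routine bookkeeping, and the net effect is that bridges arrive at unit rate on $[0,\beta]$ per edge, each carrying a transposition operator $T_{e}$.

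Next I would compute $\Tr\Bigl[\e{t_1 A}T_{e_1}\e{(t_2-t_1)A}T_{e_2}\cdots T_{e_k}\e{(\beta-t_k)A}\Bigr]$ for a fixed bridge configuration $\omega$. Since $A$ is diagonal in the basis $|s^{(\caV)}\rangle$ with $A|s^{(\caV)}\rangle = 2\beta h\bigl(\sum_x s_x\bigr)|s^{(\caV)}\rangle$, and each $T_{e}$ simply permutes basis vectors, the product of operators maps each basis vector to another basis vector times a scalar $\e{2\beta h(\cdots)}$; the trace then counts the basis vectors fixed by the whole product, weighted by accumulated exponential factors. The key geometric observation is that the composition of transpositions, read along the time interval $[0,\beta]$ with periodic identification, is precisely the permutation of $\caV$ whose cycle structure is recorded by $\Cycles(\omega)$ — a spin value is transported along each cycle $\gamma$, and closing up at time $\beta$ forces the spin to be constant along $\gamma$. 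Summing over the common spin value $\pm\tfrac12$ on each cycle, and integrating the exponential weight over the "vertical" extent of the cycle (which has total length $L(\gamma)$), yields $\sum_{s=\pm 1/2}\e{2\beta h s L(\gamma)/\beta} = 2\cosh(hL(\gamma))$ for each cycle independently. Taking the product over $\gamma\in\caC(\omega)$ and integrating against $\rho_{\caE,\beta}$ gives the partition function formula.

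For the one-point and two-point functions I would insert $S_x^{(3)}$ (resp.\ $S_x^{(3)}S_y^{(3)}$) into the trace. Because $S_x^{(3)}$ is diagonal with eigenvalue $s_x$, inserting it at time $0$ multiplies the integrand by the spin value carried by the cycle $\gamma_x$ through $(x,0)$; doing the spin sum over that cycle now produces $\sum_{s=\pm1/2} s\,\e{2\beta h s L(\gamma_x)/\beta} = \sinh(hL(\gamma_x))$ instead of $2\cosh(hL(\gamma_x))$, i.e.\ an extra factor $\tfrac12\tanh(hL(\gamma_x))$ relative to the partition-function integrand — giving the second identity. For $S_x^{(3)}S_y^{(3)}$, there are two cases: if $\gamma_x=\gamma_y$ the two insertions constrain the \emph{same} cycle's spin, and $\sum_{s}s^2\e{\cdots} = \tfrac14\cdot 2\cosh(hL(\gamma_x))$, yielding the factor $\tfrac14\cdot 1$; if $\gamma_x\neq\gamma_y$ the insertions act on independent cycles, producing $\tfrac14\tanh(hL(\gamma_x))\tanh(hL(\gamma_y))$. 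The main obstacle, I expect, is neither step individually but making the "composition of transpositions along time equals the cycle permutation" correspondence precise — one must verify that the piecewise-continuous trajectories defining $\Cycles(\omega)$ really do encode the permutation obtained by ordering the $T_{e_i}$ by time, including the periodic wrap-around at $\beta$, and that the length $L(\gamma)$ (a multiple of $\beta$) equals the total time-integral over the strands making up $\gamma$. Once that dictionary is set, everything else is the diagonal-operator computation sketched above.
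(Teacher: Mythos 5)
Your proposal is correct and follows essentially the same route as the paper's proof: Duhamel expansion via Corollary \ref{cor Duhamel} after rewriting $\vec S_x\cdot\vec S_y$ in terms of transpositions, evaluation of the trace in the diagonal basis $|s^{(\caV)}\rangle$, the observation that nonzero matrix elements force a constant spin on each cycle, and the per-cycle spin sums yielding $2\cosh$, $\sinh$, and the two-case analysis for the two-point function. The one step you flag as needing care --- that the time-ordered composition of transpositions with periodic wrap-around is exactly encoded by $\Cycles(\omega)$ --- is precisely the step the paper also treats informally ("it is not hard to see''), so your account matches theirs in both substance and level of detail.
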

% end minipage

\bfig
\centerline{\includegraphics[width=60mm]{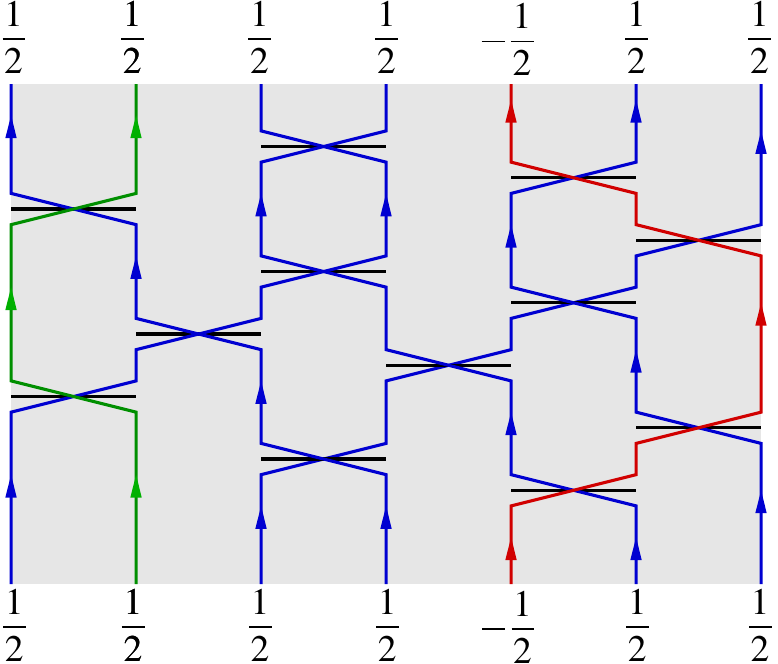}}
\caption{Each cycle is characterized by a given spin.}
\label{fig ferro}
\efig

\begin{proof}
The partition function can be expanded using Corollary \ref{cor Duhamel} so as to get
\be
\begin{split}
Z_\Lambda^{\rm ferro}(2\beta,h) &= \e{-\frac\beta2 |\caE|} \Tr \e{\beta (2h M_\Lambda + \sum_e T_e)} \\
&= \e{-\frac\beta2 |\caE|} \int\dd\rho_{\caE,\beta}(\omega) \sum_{s^{(\caV)}} \langle s^{(\caV)} | \e{2t_1 h M_\Lambda} T_{e_1} \dots T_{e_k} \e{2(\beta - t_k) h M_\Lambda} |s^{(\caV)} \rangle,
\end{split}
\ee
where $(e_1,t_1), \dots, (e_k,t_k)$ are the times and edges of $\omega$. Observe that the vectors $|s^{(\caV)} \rangle$ are eigenvectors of $\e{tM_\Lambda}$. It is not hard to see that the matrix element above is zero unless each cycle is characterized by a single spin value (see illustration in Figure\ \ref{fig ferro}). If the matrix element is not zero, then it is equal to
\be
\langle s^{(\caV)} | \e{2t_1 h M_\Lambda} T_{e_1} \dots T_{e_k} \e{2(\beta - t_k) h M_\Lambda} |s^{(\caV)} \rangle = \prod_{\gamma\in\Cycles(\omega)} \e{2h L(\gamma) s(\gamma)}
\ee
with $s(\gamma)$ the spin of the cycle $\gamma$. After summing over $s(\gamma) = \pm\frac12$, each cycle contributes $\e{hL(\gamma)} + \e{-hL(\gamma)} = 2 \cosh(h L(\gamma))$, and we obtain the expression for the partition function.

The expression which involves $S_x^{(3)}$ is similar, except that the cycle $\gamma_x$ that contains $x \times \{0\}$ contributes $\frac12 \e{hL(\gamma_x)} - \tfrac12 \e{-hL(\gamma_x)} = \sinh(h L(\gamma_x))$. Since the factor $2 \cosh(h L(\gamma_x))$ appears in the expression, it must be corrected by the hyperbolic tangent.

Finally, the expression that involves $S_x^{(3)} S_y^{(3)}$ has two terms, corresponding to whether $(x ,0)$ and $(y,0)$ find themselves in the same cycle or not. In the first case, we get $\frac12 \cosh(hL(\gamma_{xy}))$, but in the second case we get $\sinh(h L(\gamma_x)) \sinh(h L(\gamma_y))$, which eventually gives the hyperbolic tangents.
\end{proof}

It is convenient to rewrite the cycle weights somewhat. Using $2 \cosh(hL(\gamma)) = \e{hL(\gamma)} (1 + \e{-2hL(\gamma)})$ and $\sum_{\gamma\in\Cycles(\omega)} L(\gamma) = \beta |\caV|$, the relevant probability measure 
for the cycle representation can be written
\be
\label{mu ferro}
\Prob^{\rm cycles}_{\Lambda,\beta,h}(d\omega) = Z^{\rm ferro}_\Lambda(2\beta,h)^{-1} \e{-\frac\beta2 |\caE| + \beta h |\caV|} \dd\rho_{\caE,\beta}(d\omega) \prod_{\gamma\in\Cycles(\omega)} \bigl( 1 + \e{-2hL(\gamma)} \bigr). %\equiv \dd\mu_{\Lambda,\beta,h}(\omega).
\ee
This form makes it easier to see the effect of the external field $h \geq 0$. Notice that
the product over cycles simplifies to $2^{|\caC(\omega)|}$ when the external field strength vanishes (i.e. $h=0$).  
%We write $\bbE_{\Lambda,\beta,h}$ for the expectation with respect to $\Prob^{\Cycles}_{\Lambda,\beta,h}$.  %$\mu_{\Lambda,\beta,h}$.
Then, in terms of the cycle model, the expectation of the spin operators and correlations are given by 
\be
\langle S_x^{(3)} \rangle_{\Lambda,2\beta,h} = \tfrac12 \bbE^{\rm cycles}_{\Lambda,\beta,h} \bigl( \tanh(hL(\gamma_x)) \bigr)
\ee
and
\be
\begin{split}
\langle S_x^{(3)} S_y^{(3)} \rangle_{\Lambda,2\beta,h} = &\tfrac14 \bbP^{\rm cycles}_{\Lambda,\beta,h}(\gamma_x = \gamma_y) \\ 
&+ \tfrac14 \bbE^{\rm cycles}_{\Lambda,\beta,h} \bigl[ 1_{\gamma_x \neq \gamma_y} \tanh(hL(\gamma_x)) \tanh(hL(\gamma_y)) \bigr].
\end{split}
\ee
In the case $h=0$, we see that $\langle S_x^{(3)} \rangle_{\Lambda,2\beta,0} = 0$, as already noted from the spin flip symmetry, and
\be
\langle S_x^{(3)} S_y^{(3)} \rangle_{\Lambda,2\beta,0} = \tfrac14 \bbP^{\rm cycles}_{\Lambda,\beta,h}(\gamma_x = \gamma_y).
\ee
That is, the spin-spin correlation of two sites $x$ and $y$ is proportional to the probability that
the sites lie in the same cycle.

\subsection{Aizenman-Nachtergaele's representation of the antiferromagnet}
\label{sec AN}

The antiferromagnetic model only differs from the ferromagnetic model by a sign, but this leads to deep changes. As the transposition operator now carries a negative sign in the Hamiltonian, one possibility would be to turn the measure
corresponding to \eqref{mu ferro} into a signed measure, with an extra factor $(-1)^k$ where $k=k(\omega)$ is the number of transpositions. That would mean descending from the heights of probability theory to... well, to measure theory. This fate can fortunately be avoided thanks to the observations of Aizenman and Nachtergaele \cite{MR1288152}.

Their representation is restricted to bipartite graphs. A graph is \emph{bipartite} if the set of vertices $\caV$ can be partitioned into two sets $\caV_{\rm A}$ and $\caV_{\rm B}$ such that edges only connect the A set to the B set:
\be
\{x,y\} \in \caE \Longrightarrow (x,y) \in \caV_{\rm A} \times \caV_{\rm B} \text{ or } (x,y) \in \caV_{\rm B} \times \caV_{\rm A}.
\ee
This class contains many relevant cases, such as finite boxes in $\bbZ^d$; periodic boundary conditions are allowed provided the side lengths are even. In the following, we use the notation
\be
(-1)^{x} = \begin{cases} 1 & \text{if } x \in \caV_{\rm A}, \\ -1 & \text{if } x \in \caV_{\rm B}. \end{cases}
\ee

Instead of the transposition operator, we consider the projection operator $P_{xy}^{(0)}$ onto the singlet state 
described in Lemma \ref{lem coupling}. Its action on the basis is
\be
\label{proj singlet}
P_{xy}^{(0)} |a,a\rangle = 0, \qquad P_{xy}^{(0)} |a,-a\rangle = \tfrac12 |a,-a \rangle - \tfrac12 |-a,a \rangle,
\ee
for all $a = \pm\tfrac12$. Further, it follows from Lemma \ref{lem coupling} that
\be
\vec S_x \cdot \vec S_y = \tfrac14 \IdentityOperator_{\{x,y\}} - P_{xy}^{(0)}.
\ee
Recall that $\caL(\omega)$ is the set of loops of $\omega$. Let $\gamma_{x}$ denote the loop that contains $(x,0)$. We do not need notation for the loops that do not intersect the $t=0$ plane.  Also, it is not the \magicword{lengths} of the loops %do not play much of a r\^ole now, 
which are important but their winding number $w(\gamma)$.

%\colorbox{gray}{
%\begin{minipage}{115mm}
\begin{theorem}[Aizenman-Nachtergaele's representation of the antiferromagnet]
\label{thm AN}
Assume that $\Lambda$ is a bipartite graph. The partition function, the average magnetization and the two-point correlation function have the following expressions.
\[
\begin{split}
&Z_\Lambda^{\rm anti}(2\beta,h) = \e{-\frac\beta2 |\caE|} \int\dd\rho_{\caE,\beta}(\omega) \prod_{\gamma \in \caL(\omega)} \bigl( 2 \cosh(\beta h w(\gamma)) \bigr), \\
&\Tr S_x^{(3)} \e{-2\beta H_{\Lambda,h}^{\rm anti}} = \tfrac12 (-1)^{x} \e{-\frac\beta2 |\caE|} \int\dd\rho_{\caE,\beta}(\omega) \, \tanh(\beta h w(\gamma_x)) \\
&\hspace{7cm} \times \prod_{\gamma \in \caL(\omega)} \bigl( 2 \cosh(\beta h w(\gamma)) \bigr), \\
&\Tr S_x^{(3)} S_y^{(3)} \e{-2\beta H_{\Lambda,h}^{\rm anti}} = \tfrac14 (-1)^{x} (-1)^{y} \e{-\frac\beta2 |\caE|} \int\dd\rho_{\caE,\beta}(\omega) \\
& \times \prod_{\gamma \in \caL(\omega)} \bigl( 2 \cosh(\beta h w(\gamma)) \bigr) \times \begin{cases} 1 & \text{if } \gamma_x = \gamma_y, \\ \tanh(\beta h w(\gamma_x)) \tanh(\beta h w(\gamma_y)) & \text{if } \gamma_x \neq \gamma_y. \end{cases}
\end{split}
\]
\end{theorem}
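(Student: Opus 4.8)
The plan is to imitate the proof of Theorem~\ref{thm Toth} almost verbatim, with the transposition operator $T_e$ replaced by twice the singlet projection $P_e^{(0)}$, and with one extra device to keep track of a sign.

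First I would set up the expansion. By Lemma~\ref{lem coupling}, $\vec S_x\cdot\vec S_y = \tfrac14\IdentityOperator_{\{x,y\}} - P_{xy}^{(0)}$, so
\[
-2\beta H_{\Lambda,h}^{\rm anti} = -\tfrac\beta2|\caE| + \beta\Bigl(2h M_\Lambda^{(3)} + \sum_{e\in\caE} 2P_e^{(0)}\Bigr).
\]
Before expanding, conjugate everything by the sublattice rotation $U = \prod_{y\in\caV_{\rm B}} \e{\ii\pi S_y^{(2)}}$, which rotates the B-spins by $\pi$ about the second axis. A one-line computation with $2\times2$ matrices shows that $U P_{xy}^{(0)} U^{-1}$ is the rank-one projection onto $\tfrac1{\sqrt2}\bigl(|\tfrac12,\tfrac12\rangle + |-\tfrac12,-\tfrac12\rangle\bigr)$, whose matrix elements are all $\geq 0$, that $U M_\Lambda^{(3)} U^{-1} = \sum_x(-1)^x S_x^{(3)}$ is the \emph{staggered} magnetization, and that $U S_x^{(3)} U^{-1} = (-1)^x S_x^{(3)}$. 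Since $\Tr(\cdot) = \Tr\,U(\cdot)U^{-1}$, we may replace $H^{\rm anti}$ and $S_x^{(3)}$ by their conjugates; the prefactors $(-1)^x$, $(-1)^x(-1)^y$ of the statement already appear here.

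Now apply Corollary~\ref{cor Duhamel} with $A = 2h\sum_x(-1)^x S_x^{(3)}$ and $h_e = 2\,U P_e^{(0)} U^{-1}$: the trace becomes $\e{-\frac\beta2|\caE|}$ times an integral over $\rho_{\caE,\beta}(\dd\omega)$ of $\sum_{s^{(\caV)}}\langle s^{(\caV)}| \e{t_1 A} h_{e_1} \cdots h_{e_k} \e{(\beta - t_k)A} |s^{(\caV)}\rangle$. The $|s^{(\caV)}\rangle$ diagonalize the exponentials of the staggered magnetization, so one follows the spin up the cylinder $\caV\times[0,\beta]_{\rm per}$; reading the action of $h_e$ off the vector above, the summand vanishes unless the two sites meeting at each bridge carry \emph{equal} spins, in which case the bridge turns both world-lines and emits a fresh equal pair (either value, weight $1$). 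This is exactly the rule defining $\caL(\omega)$, the surviving configurations are those constant along every loop, and for fixed $\omega$ the matrix element factorizes over $\caL(\omega)$ just as in Theorem~\ref{thm Toth}. A loop $\gamma$ carrying spin $\sigma(\gamma)\in\{\pm\tfrac12\}$ contributes the product of $\e{\pm 2h\sigma(\gamma)\ell}$ over its vertical legs, with sign $+$ on the A sublattice and $-$ on the B sublattice (the staggered coupling); the exponents sum to $2h\sigma(\gamma)\bigl(\ell_{\rm A}(\gamma) - \ell_{\rm B}(\gamma)\bigr) = 2h\,\sigma(\gamma)\,\beta\,w(\gamma)$, since the net $t$-displacement around a closed trajectory is $\beta$ times its winding. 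Summing $\sigma(\gamma)=\pm\tfrac12$ gives $2\cosh(\beta h w(\gamma))$ per loop (hence $2^{|\caL(\omega)|}$ when $h=0$), which is the claimed identity for $Z_\Lambda^{\rm anti}(2\beta,h)$.

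For the magnetization and the two-point function, the expansion is identical except that $(-1)^x S_x^{(3)}$, resp.\ $(-1)^x(-1)^y S_x^{(3)} S_y^{(3)}$, stands in the trace; being diagonal, it simply reads the spin of the loop $\gamma_x$ (and $\gamma_y$) at the $t=0$ level, thereby \emph{pinning} $\sigma(\gamma_x)$ (and $\sigma(\gamma_y)$). Pinning turns the factor $\sum_\sigma\e{2h\sigma\beta w(\gamma_x)}$ into $\sum_\sigma\sigma\,\e{2h\sigma\beta w(\gamma_x)} = \sinh(\beta h w(\gamma_x))$, which after dividing by the $2\cosh(\beta h w(\gamma_x))$ sitting in $Z_\Lambda^{\rm anti}(2\beta,h)$ produces the factor $\tfrac12(-1)^x\tanh(\beta h w(\gamma_x))$; when $\gamma_x=\gamma_y$ a single loop is pinned and $\sigma^2=\tfrac14$ gives the flat value $\tfrac14(-1)^x(-1)^y$, and when $\gamma_x\neq\gamma_y$ two loops are pinned and one gets the product of hyperbolic tangents. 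The step I expect to be the real obstacle is the sign: without the rotation $U$, the term $-\tfrac12|-a,a\rangle$ in \eqref{proj singlet} attaches a factor $-1$ to every ``flipping'' bridge, so each loop-compatible configuration would carry a sign $(-1)^{\#\{\text{flips}\}}$ that one must argue cancels — this cancellation is the whole combinatorial point of working on a bipartite graph, and conjugating by $U$ at the outset is the cleanest way to dispose of it. Everything else (absolute convergence of the Duhamel series, the change of variables turning the simplices into $\rho_{\caE,\beta}$) is already furnished by Proposition~\ref{prop Duhamel} and Corollary~\ref{cor Duhamel}.
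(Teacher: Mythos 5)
Your proof is correct, and it follows the paper's overall architecture (Duhamel expansion via Corollary \ref{cor Duhamel}, insertion of the resolution of the identity, factorization of the weight over the loops of $\caL(\omega)$, and the identity $L_{\rm A}(\gamma)-L_{\rm B}(\gamma)=\beta w(\gamma)$ to produce the winding-number dependence). The one place where you genuinely diverge is the treatment of the sign, which you correctly identify as the crux. The paper expands $\e{-2\beta H^{\rm anti}}$ directly in terms of $2P_e^{(0)}$, accepts that each ``flipping'' bridge carries a factor $-1$ via \eqref{proj singlet}, and then cancels the signs combinatorially: each bridge between loops of spins $a,b$ contributes $(-1)^{a-b}=\e{\ii\pi a}\e{-\ii\pi b}$, these phases are redistributed to the individual loops, and along any closed loop the number of A$\to$B jumps equals the number of B$\to$A jumps, so the phases cancel. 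You instead conjugate at the outset by the sublattice rotation $U=\prod_{y\in\caV_{\rm B}}\e{\ii\pi S_y^{(2)}}$, after which $2P_e^{(0)}$ becomes an operator with matrix elements in $\{0,1\}$ and no sign ever appears; the price is that $M_\Lambda$ becomes the staggered magnetization and $S_x^{(3)}$ picks up the factor $(-1)^x$, which is exactly where the prefactors $(-1)^x$, $(-1)^x(-1)^y$ in the statement come from (in the paper's version these prefactors instead emerge from the $\sinh$ computed with the un-staggered field on a loop of alternating sign). Both arguments are sound; yours localizes all the bipartiteness bookkeeping in one unitary conjugation and makes the resulting measure manifestly positive before any expansion, while the paper's keeps the physical (un-rotated) spin variables throughout at the cost of the explicit phase-cancellation lemma. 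Your computations of $U P_{xy}^{(0)} U^{-1}$, $U M_\Lambda U^{-1}$ and of the per-loop weights $2\cosh(\beta h w(\gamma))$, $\sinh(\beta h w(\gamma))$, $\tfrac14$ all check out.
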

% end minipage

When $h=0$, we get the simpler factor $2^{|\caL(\omega)|}$.

\bfig
\centerline{\includegraphics[width=60mm]{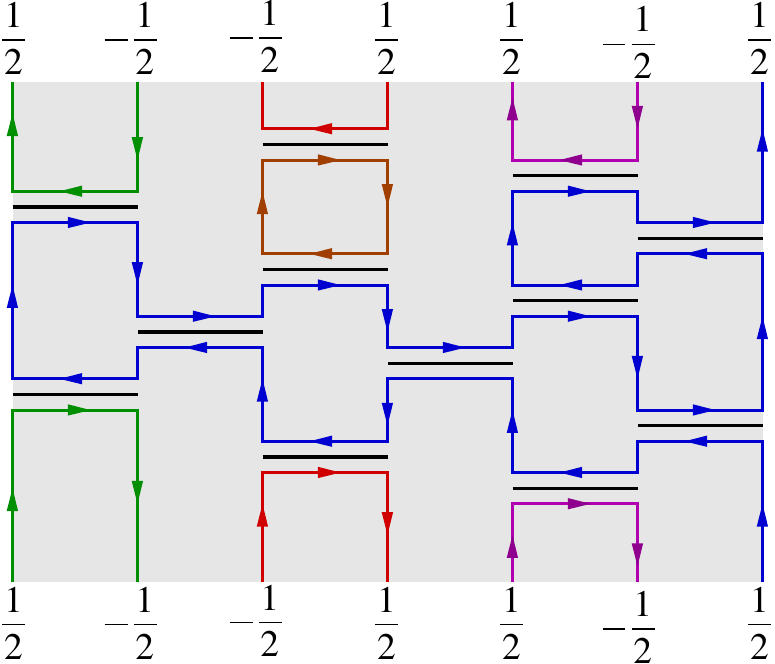}}
\caption{Each loop is characterized by a given spin, but the values alternate according to whether the site belongs to the A or B sublattice.}
\label{fig anti}
\efig

\begin{proof}
As before, we expand the partition function using Corollary \ref{cor Duhamel} and we get
\be
\begin{split}
Z_\Lambda^{\rm anti}&(2\beta,h) = \e{-\frac\beta2 |\caE|} \Tr \e{\beta (2h M_\Lambda + \sum_e 2P_e^{(0)})} \\
&= \e{-\frac\beta2 |\caE|} \int\dd\rho_{\caE,\beta}(\omega) \sum_{s^{(\caV)}} \langle s^{(\caV)} | \e{2t_1 h M_\Lambda} 2P_{e_1}^{(0)} \dots 2P_{e_k}^{(0)} \e{2(\beta - t_k) h M_\Lambda} |s^{(\caV)} \rangle,
\end{split}
\ee
where $(e_1,t_1), \dots, (e_k,t_k)$ are the times and the edges of $\omega$. Notice that
\be
\e{tM_\Lambda} |s^{(\caV)} \rangle = \e{t \langle s^{(\caV)} | M_\Lambda |s^{(\caV)} \rangle}  |s^{(\caV)} \rangle.
\ee
In Dirac's notation, the resolution of the identity is
\be
\IdentityOperator_{\caV} = \sum_{s^{(\caV)} \in \{-\frac12,\frac12\}^{\caV}} |s^{(\caV)} \rangle \langle s^{(\caV)}|.
\ee
We insert it on the right of each operator $P_{e}^{(0)}$ and we obtain
\be
\begin{split}
&Z_\Lambda^{\rm anti}(2\beta,h) = \e{-\frac\beta2 |\caE|} \int\dd\rho_{\caE,\beta}(\omega) \sum_{s_{1}^{(\caV)}, \dots, s_{k}^{(\caV)}} \e{2t_{1} h \langle s_{1}^{(\caV)} | M_\Lambda |s_{1}^{(\caV)} \rangle} \langle s_{1}^{(\caV)} | 2P_{e_1}^{(0)} |s_{2}^{(\caV)} \rangle \\
& \times \!\! \e{2(t_{2}-t_{1}) h \langle s_{2}^{(\caV)} | M_\Lambda |s_{2}^{(\caV)} \rangle} \langle s_{2}^{(\caV)} | 2P_{e_2}^{(0)} |s_{3}^{(\caV)} \rangle \dots \langle s_{k}^{(\caV)} | 2P_{e_k}^{(0)} |s_{1}^{(\caV)} \rangle \e{2(\beta-t_{k}) h \langle s_{1}^{(\caV)} | M_\Lambda |s_{1}^{(\caV)} \rangle}.
\end{split}
\ee
Let us now observe that this long expression can be conveniently expressed in the language of loops. We can interpret $\omega$ and $s_{1}^{(\caV)}, \dots, s_{k}^{(\caV)}$ as a spin configuration $\bss$ in $\caV \times [0,\beta]_{\rm per}$. It is constant in time except possibly at $(e_{i},t_{i})$. By \eqref{proj singlet}, the product
\[
\langle s_{1}^{(\caV)} | 2P_{e_1}^{(0)} |s_{2}^{(\caV)} \rangle \dots \langle s_{k}^{(\caV)} | 2P_{e_k}^{(0)} |s_{1}^{(\caV)} \rangle
\]
differs from 0 iff the value of $(-1)^{x} \bss(x,t)$ is constant on each loop (see illustration in Figure\ \ref{fig anti}). In this case, its value is $\pm1$, as each bridge contributes $+1$ if the spins are constant, and $-1$ if they flip. Let us check that, in fact, it is always $+1$. If the bridge separates two loops with spins $a$ and $b$, the factor is
\be
(-1)^{a-b} = \e{\ii \pi a} \e{-\ii \pi b}.
\ee
Looking at the loop $\gamma$ with spin $a$, there is a factor $\e{\ii \pi a}$ for each jump A$\to$B (of the form $\ulcorner\!\urcorner$) and a factor $\e{-\ii \pi a}$ for each jump B$\to$A (of the form $\llcorner\!\lrcorner$). Since there is an identical number of both types of jumps, these factors precisely cancel.

The product
\[
\e{2t_{1} h \langle s_{1}^{(\caV)} | M_\Lambda |s_{1}^{(\caV)} \rangle} \e{2(t_{2}-t_{1}) h \langle s_{2}^{(\caV)} | M_\Lambda |s_{2}^{(\caV)} \rangle} \dots \e{2(\beta-t_{k}) h \langle s_{1}^{(\caV)} | M_\Lambda |s_{1}^{(\caV)} \rangle}
\]
also factorizes according to loops. The contribution of a loop $\gamma$ with spin $a$ is $\e{2h L_{\rm A}(\gamma) a - 2h L_{\rm B}(\gamma) a}$, where $L_{\rm A}, L_{\rm B}$ are the vertical lengths of $\gamma$ on the A and B sublattices. We have
\be
L_{\rm A}(\gamma) - L_{\rm B}(\gamma) = \beta w(\gamma).
\ee
The contribution is therefore $\e{2\beta h w(\gamma) a}$. Summing over $a = \pm \frac12$, we get the hyperbolic cosine of the expression for the partition function of Theorem \ref{thm AN}.

The expression that involves $S_{x}^{(3)}$ is similar; the only difference is that the loop that contains $(x,0)$ contributes $(-1)^{x} \sinh (\beta h w(\gamma))$ instead of $2 \cosh (\beta h w(\gamma))$, hence the hyperbolic tangent. Finally, the expression that involves $S_{x}^{(3)} S_{y}^{(3)}$ is similar but we need to treat separately the cases where $(x,0)$ and $(y,0)$ belong or do not belong to the same loop.
\end{proof}

\section{Thermodynamic limit and phase transitions}
\label{sec phase transitions}

Phase transitions are cooperative phenomena where a small change of the external parameters results in drastic alterations in the properties of the system. There was some confusion in the early days of statistical mechanics as to whether the formalism contained the possibility of describing phase transitions, as all finite volume quantities are smooth. It was eventually realized that the proper formalism involves a \emph{thermodynamic limit} where the system size tends to infinity, in such a way that the local behavior remains largely unaffected. The proofs of the existence of thermodynamic limits were fundamental contributions to the mathematical theory of phase transitions, and they were pioneered by Fisher and Ruelle in the 1960's; see \cite{MR0289084} for more references.

We show that the free energy converges in the thermodynamic limit along a sequence of boxes in $\bbZ^{d}$ of increasing size (Section \ref{sec thermolim}). We discuss various characterizations of ferromagnetic phase transitions in Section \ref{sec thermo limit ferro}, and magnetic long-range order in Section \ref{sec thermo limit anti}. In Section \ref{sec prob phase transitions} we consider the relations between the magnetization in the quantum models and the lengths of the cycles and loops.

\subsection{Thermodynamic limit}
\label{sec thermolim}

Despite our professed intention to treat arbitrary graphs, we now restrict ourselves to a very specific case, namely that of a sequence of cubes in $\Z^d$ whose side lengths tends to infinity.
Since $F_\Lambda(\beta,h)$ scales like the volume of the system, we define the \emph{mean free energy} $f_\Lambda$ to be
\be
f_\Lambda(\beta,h) = \frac1{|\caV|} F_\Lambda(\beta,h).
\ee
We consider the sequence of graphs $\Lambda_n = (\caV_n,\caE_n)$ where $\caV_n = \{1,\dots,n\}^d$ and $\caE_n$ is the set of nearest-neighbors, i.e., $\{x,y\} \in \caE_n$ iff $\|x-y\| = 1$.

% \colorbox{gray}{
% \begin{minipage}
\begin{theorem}[Thermodynamic limit of the free energy]
\label{thm thermo limit}
The sequence of functions $(f_{\Lambda_n}(\beta,h))_{n\geq1}$ converges pointwise to a function $f(\beta,h)$, uniformly on compact sets.
\end{theorem}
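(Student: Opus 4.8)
The plan is a Fekete-type subadditivity argument for $\beta F_{\Lambda_n}=-\log Z_{\Lambda_n}$, the only wrinkle being that cubes of different side lengths do not tile one another. Fix $\beta>0$, $h\in\bbR$, and abbreviate $c_n:=-\log Z_{\Lambda_n}(\beta,h)$, so $c_n=\beta|\caV_n|f_{\Lambda_n}(\beta,h)=\beta n^d f_{\Lambda_n}(\beta,h)$. The first step is the approximate superadditivity bound
\[
\bigl|c_{qm}-q^d c_m\bigr|\ \le\ \tfrac34\,\beta\,d\,q^d m^{d-1}\qquad(q,m\ge1).
\]

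To prove it, decompose $\caV_{qm}$ into the $q^d$ translates $B_{\vec\jmath}$, $\vec\jmath\in\{1,\dots,q\}^d$, of $\caV_m$, so that $\caH^{(\caV_{qm})}=\bigotimes_{\vec\jmath}\caH^{(B_{\vec\jmath})}$, and split $H_{\Lambda_{qm},h}=H_0+W$, where $H_0=\sum_{\vec\jmath}H_{B_{\vec\jmath},h}$ is a sum of Hamiltonians acting on disjoint tensor factors and $W=\mp\sum\vec S_x\cdot\vec S_y$ runs over the edges of $\Lambda_{qm}$ joining two different blocks (the field term $-h\sum_x S_x^{(3)}$ splits exactly among the blocks and so does not enter $W$). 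A direct count shows there are at most $d(q-1)(qm)^{d-1}\le d q^d m^{d-1}$ such cross edges, whence $\|W\|\le\tfrac34 d q^d m^{d-1}$ by Lemma~\ref{lem coupling}. Since the summands of $H_0$ act on disjoint factors and hence commute, $\Tr\e{-\beta H_0}=\prod_{\vec\jmath}Z_{B_{\vec\jmath}}(\beta,h)=Z_{\Lambda_m}(\beta,h)^{q^d}$.

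The second step is the elementary monotonicity: if $A\le B$ are self-adjoint then $\Tr\e A\le\Tr\e B$. This follows from Duhamel's formula (Proposition~\ref{prop Duhamel}), which gives $\tfrac{\dd}{\dd t}\Tr\e{A+t(B-A)}=\Tr\bigl((B-A)\e{A+t(B-A)}\bigr)\ge0$ because the trace of a product of two positive operators is nonnegative. Applying this to $-\beta H_0-\beta\|W\|\IdentityOperator_{\caV}\le-\beta H_{\Lambda_{qm},h}\le-\beta H_0+\beta\|W\|\IdentityOperator_{\caV}$ yields $\e{-\beta\|W\|}Z_{\Lambda_m}^{q^d}\le Z_{\Lambda_{qm}}\le\e{\beta\|W\|}Z_{\Lambda_m}^{q^d}$, i.e.\ $|c_{qm}-q^d c_m|\le\beta\|W\|$, which is the bound above. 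Dividing by $(qm)^d$ and using $c_n/n^d=\beta f_{\Lambda_n}$, we get, for all $q,m\ge1$ and all $(\beta,h)$,
\[
\bigl|f_{\Lambda_{qm}}(\beta,h)-f_{\Lambda_m}(\beta,h)\bigr|\ \le\ \frac{3d}{4m}.
\]

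The third step turns this into convergence. For arbitrary $m_1,m_2$, apply the last inequality with $n=m_1m_2$ to compare $f_{\Lambda_n}$ with each of $f_{\Lambda_{m_1}},f_{\Lambda_{m_2}}$; the triangle inequality gives $\bigl|f_{\Lambda_{m_1}}-f_{\Lambda_{m_2}}\bigr|\le\tfrac{3d}{4}\bigl(\tfrac1{m_1}+\tfrac1{m_2}\bigr)$ with a bound independent of $(\beta,h)$, so $(f_{\Lambda_n})_{n\ge1}$ is uniformly Cauchy and converges uniformly (in particular, uniformly on compact sets) to a limit $f(\beta,h)$, which is continuous as a uniform limit of the continuous $f_{\Lambda_n}$ (Lemma~\ref{lem F unif cont}). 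I do not expect a genuine obstacle: once one has the surface bound $\|W\|=O(q^d m^{d-1})$ (immediate from Lemma~\ref{lem coupling} and the exact splitting of the field term) and the monotonicity of $A\mapsto\Tr\e A$ (for which Duhamel's formula is at hand), the rest is bookkeeping; the only points needing care are the count of cross edges and the device of comparing incommensurable cubes through a common multiple.
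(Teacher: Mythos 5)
Your proof is correct, and it takes a route that differs from the paper's in two substantive ways. First, where you use the \emph{two-sided} operator bound $-\beta H_0-\beta\|W\|\,\IdentityOperator\le-\beta H_{\Lambda_{qm},h}\le-\beta H_0+\beta\|W\|\,\IdentityOperator$ together with the monotonicity of $A\mapsto\Tr\e{A}$ (your Duhamel derivation of this is fine, as is the cross-edge count $d(q-1)(qm)^{d-1}\le dq^dm^{d-1}$), the paper uses only the \emph{one-sided} inequality $\Tr\e{A+B}\ge\Tr\e{A}$ for $B\ge0$, after shifting the couplings to $h_{x,y}=\vec S_x\cdot\vec S_y+\tfrac34\IdentityOperator\ge0$ and simply discarding the cross-block terms; this yields approximate superadditivity of $\log Z$ and a $\limsup\le\liminf$ argument rather than a Cauchy estimate. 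Second, where you restrict to exact multiples $qm$ and then compare arbitrary $m_1,m_2$ through the common multiple $m_1m_2$, the paper decomposes $n=km+r$ directly and carries the $O(rn^{d-1})$ remainder through. The payoff of your version is real: the two-sided bound gives the quantitative estimate $|f_{\Lambda_{m_1}}-f_{\Lambda_{m_2}}|\le\tfrac{3d}{4}(m_1^{-1}+m_2^{-1})$ uniformly in $(\beta,h)$, so you get uniform convergence on the whole parameter space with an explicit rate and need neither the equicontinuity Lemma~\ref{lem F unif cont} nor Arzel\`a--Ascoli, both of which the paper invokes to upgrade pointwise convergence to uniform convergence on compacts. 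The paper's one-sided scheme is the more traditional thermodynamic-limit argument and is the one that survives when interactions are unbounded and only a sign (not a norm bound) on the surface terms is available; here, with $\|\vec S_x\cdot\vec S_y\|=\tfrac34$, your symmetric perturbation argument is available and is the cleaner of the two.
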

% end minipage

\bfig
\centerline{\includegraphics[width=60mm]{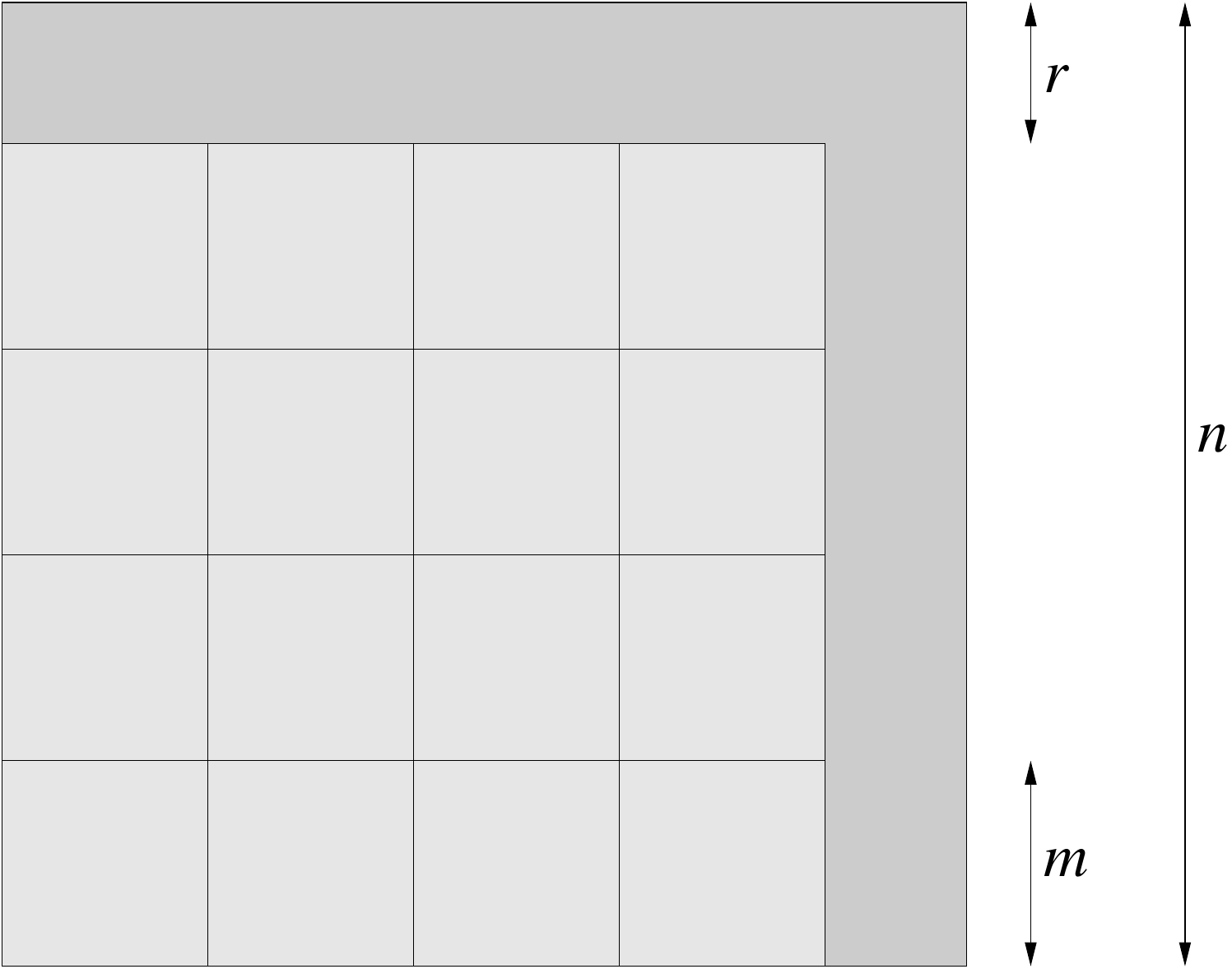}}
\caption{The large box of size $n$ is decomposed in $k^{d}$ boxes of size $m$; there are no more than $drn^{d-1}$ remaining sites in the darker area.}
\label{fig boxes}
\efig

\begin{proof}
We consider the ferromagnetic model, but the modifications for the antiferromagnetic model are straightforward. We use a subadditive argument. Notice that the inequality $\Tr \e{A+B} \geq \Tr \e{A}$ holds for all self-adjoint operators $A,B$ with $B \geq 0$. (This follows e.g.\ from the minimax principle, or from Klein's inequality.) We rewrite the Hamiltonian so as to have only positive definite terms. Namely, let
\be
h_{x,y} = \vec S_x \cdot \vec S_y + \tfrac34 \IdentityOperator.
\ee
Then
\be
Z_\Lambda(\beta,h) = \e{-\frac34 \beta |\caE|} \Tr \exp\Bigl( \beta \sum_{\{x,y\} \in \caE} h_{x,y} + \beta h \sum_{x \in \caV} S_x^{(3)} \Bigr).
\ee
Let $m,n,k,r$ be integers such that $n = km+r$ and $0 \leq r < m$. The box $\caV_n$ is the disjoint union of $k^d$ boxes of size $m$, and of some remaining sites (fewer than $dr n^{d-1}$); see Figure\ \ref{fig boxes} for an illustration. We get an inequality for the partition function in $\Lambda_n$ by dismissing all $h_{x,y}$ where $\{x,y\}$ are not inside a single box of size $m$. The boxes $\caV_{m}$ become independent, and
\be
\begin{split}
Z_{\Lambda_n}(\beta,h) &\geq \e{-\frac34 \beta |\caE_{n}|} \Bigl[ \Tr_{\caH^{(\caV_m)}} \exp \Bigl( \beta \sum_{\{x,y\} \in \caE_m} h_{x,y} + \beta h \sum_{x \in \caV_m} S_x^{(3)} \Bigr) \Bigr]^{k^{d}} \\
&= [Z_{\Lambda_m}(\beta,h)]^{k^d} \e{-\frac34 \beta |\caE_{n}|} \e{k^d \frac34 \beta |\caE_m|}.
\end{split}
\ee
We have neglected the contribution of $\e{\beta h S_{x}^{(3)}}$ for $x$ outside the boxes $\caV_{m}$, which is possible because their traces are greater than 1. It is not hard to check that
\be
|\caE_n| \leq k^d |\caE_m| + k^d d m^{d-1} + d^{2} r n^{d-1}.
\ee
We then obtain a subbaditive relation for the free energy, up to error terms that will soon disappear:
\be
f_{\Lambda_n}(\beta,h) \leq \frac{(km)^d}{n^d} f_{\Lambda_m}(\beta,h) + \frac{3k^d d m^{d-1}}{4n^d} + \frac{3d^{2}r}{4n}.
\ee
Then, since $\frac{km}n \to 1$ as $n\to\infty$,
\be
\limsup_{n\to\infty} f_{\Lambda_n}(\beta,h) \leq f_{\Lambda_m}(\beta,h) + \frac{3d}{4m}.
\ee
Taking the $\liminf$ over $m$ in the right side, we see that it is larger or equal to the $\limsup$, and so the limit necessarily exists.

Uniform convergence on compact intervals follows from Lemma \ref{lem F unif cont} (which implies that $(f_{\Lambda_n})$ is equicontinuous) and the Arzel\`a-Ascoli theorem (see e.g.\ Theorem 4.4 in Folland \cite{MR1681462}).
\end{proof}

% \colorbox{gray}{
% \begin{minipage}
\begin{corollary}[Thermodynamic limit with periodic boundary conditions]
\label{cor thermo limit}
Let $(\Lambda_n^{\rm per})$ be the sequence of cubes in $\bbZ^{d}$ of size $n$ with periodic boundary conditions and nearest-neighbor edges. Then $(f_{\Lambda_n^{\rm per}}(\beta,h))_{n\geq1}$ converges pointwise to the same function $f(\beta,h)$ as in Theorem \ref{thm thermo limit}, uniformly on compact sets.
\end{corollary}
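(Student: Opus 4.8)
The plan is to avoid re-running the subadditivity argument and instead to compare $\Lambda_n^{\rm per}$ directly with the free box $\Lambda_n$ of the \emph{same} size. These two graphs have the identical vertex set $\caV_n=\{1,\dots,n\}^d$, and $\caE_n\subset\caE_n^{\rm per}$ with $N_n:=|\caE_n^{\rm per}|-|\caE_n|=dn^{d-1}$ (the wrap-around edges). I would show that $|f_{\Lambda_n^{\rm per}}(\beta,h)-f_{\Lambda_n}(\beta,h)|=O(1/n)$ and then conclude with Theorem \ref{thm thermo limit}.

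To do so, I would again write $h_{x,y}=\vec S_x\cdot\vec S_y+\tfrac34\IdentityOperator\ge0$ as in the proof of Theorem \ref{thm thermo limit}, so that for any edge set $\caE$ on $\caV_n$,
\[
Z_\Lambda(\beta,h)=\e{-\frac34\beta|\caE|}\Tr\exp\Bigl(\beta\textstyle\sum_{\{x,y\}\in\caE}h_{x,y}+\beta h\sum_{x}S_x^{(3)}\Bigr),
\]
and then set $A=\beta\sum_{\{x,y\}\in\caE_n}h_{x,y}+\beta h\sum_x S_x^{(3)}$ and $B=\beta\sum_{\{x,y\}\in\caE_n^{\rm per}\setminus\caE_n}h_{x,y}$, so that the periodic partition function corresponds to $A+B$. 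Since $B\ge0$ and $X\mapsto\Tr\e{X}$ is monotone on self-adjoint operators (minimax principle / Klein's inequality, as already used above), one has $\Tr\e{A+B}\ge\Tr\e{A}$, hence $Z_{\Lambda_n^{\rm per}}(\beta,h)\ge\e{-\frac34\beta N_n}Z_{\Lambda_n}(\beta,h)$, i.e.\ $f_{\Lambda_n^{\rm per}}\le f_{\Lambda_n}+\tfrac{3N_n}{4n^d}$. Conversely, $B\le\|B\|\IdentityOperator$ with $\|B\|\le\tfrac32\beta N_n$ (each $h_{x,y}$ has norm at most $\tfrac32$, since $\|\vec S_x\cdot\vec S_y\|=\tfrac34$ by Lemma \ref{lem coupling}), so monotonicity once more gives $\Tr\e{A+B}\le\e{\frac32\beta N_n}\Tr\e{A}$, hence $Z_{\Lambda_n^{\rm per}}(\beta,h)\le\e{\frac34\beta N_n}Z_{\Lambda_n}(\beta,h)$ and $f_{\Lambda_n^{\rm per}}\ge f_{\Lambda_n}-\tfrac{3N_n}{4n^d}$. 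Since $N_n/n^d=d/n\to0$, Theorem \ref{thm thermo limit} gives $f_{\Lambda_n^{\rm per}}(\beta,h)\to f(\beta,h)$ pointwise.

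Uniform convergence on compact sets would then follow exactly as in the proof of Theorem \ref{thm thermo limit}: dividing the bound of Lemma \ref{lem F unif cont} (applied to the graph $\Lambda_n^{\rm per}$) by $|\caV_n|=n^d$ and using $|\caE_n^{\rm per}|/|\caV_n|=d$ shows that $(f_{\Lambda_n^{\rm per}})$ is equicontinuous on compacts with an $n$-independent modulus of continuity; together with the pointwise convergence just obtained and the Arzel\`a--Ascoli theorem, this yields uniform convergence on compact sets. The antiferromagnetic model is treated identically, using $-\vec S_x\cdot\vec S_y+\tfrac14\IdentityOperator=P^{(0)}_{xy}\ge0$ in place of $h_{x,y}$. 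I do not expect a genuine obstacle here; the only point requiring a little care is that the contribution of the extra edges must be controlled from \emph{both} sides --- the lower bound on $Z_{\Lambda_n^{\rm per}}$ coming from positivity of the added terms and the upper bound from their boundedness --- and in both directions the correction is of order $n^{d-1}=o(n^d)$.
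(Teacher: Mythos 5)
Your proposal is correct and is essentially the argument the paper has in mind: the paper's proof consists of asserting the comparison bound $|f_{\Lambda_n^{\rm per}}(\beta,h) - f_{\Lambda_n}(\beta,h)| \leq \frac{3d}{4n}$ (leaving its verification to the reader) and invoking Theorem \ref{thm thermo limit}, and you supply precisely that verification --- lower-bounding $Z_{\Lambda_n^{\rm per}}$ via positivity of the added terms $h_{x,y}$ and upper-bounding it via their operator norm, recovering the stated constant $\frac{3d}{4n}$. The equicontinuity/Arzel\`a--Ascoli step for uniform convergence on compacts is also handled exactly as in the proof of Theorem \ref{thm thermo limit}.
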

% end minipage

This follows from $|f_{\Lambda_n^{\rm per}}(\beta,h) - f_{\Lambda_n}(\beta,h)| \leq \frac{3d}{4n}$, which is not too hard to prove, and Theorem \ref{thm thermo limit}.

\subsection{Ferromagnetic phase transition}
\label{sec thermo limit ferro}

In statistical physics, an \magicword{order parameter} is a quantity which allows one to identify a phase, typically because it vanishes in all phases except one. The relevant order parameter here is the magnetization, which is expected to be zero at high temperatures and positive at low temperatures. There are actually three natural definitions for the magnetization; we show below that the first two are equivalent, and that the last one is smaller than the first two.

\begin{itemize}
\item {\bf Thermodynamic magnetization}. This is equal to (the negative of) the right-derivative of $f(\beta,h)$ with respect to $h$.  We are looking for a jump in the derivative, which is referred to as a \magicword{first-order} phase transition.
\be
m^*_{\rm th}(\beta) = -\lim_{h\to0+} \frac{f(\beta,h) - f(\beta,0)}h.
\ee
(The limit exists because $f$ is concave.)
\item {\bf Residual magnetization}. Imagine placing the ferromagnet in an external magnetic field, so that it becomes
magnetized.  Now gradually turn off the external field.  Does the system still display global magnetization?  
Mathematically, the relevant order parameter is
\be
m^*_{\rm res}(\beta) = \lim_{h\to0+} \liminf_{n\to\infty} \frac1{n^d} \langle M_{\Lambda_n} \rangle_{\Lambda_n,\beta,h}.
\ee
(We see below that the $\liminf$ can be replaced by the $\limsup$ without affecting $m^*_{\rm res}$. The limit over $h$ exists because $\langle M_{\Lambda_n} \rangle$ is the derivative of a concave function, and it is therefore monotone.)
\item {\bf Spontaneous magnetization} at $h=0$. Since $\langle M_\Lambda \rangle = 0$ (because of the spin flip symmetry), we rather consider
\be
m^*_{\rm sp}(\beta) = \liminf_{n\to\infty} \frac1{n^d} \langle |M_{\Lambda_n}| \rangle_{\Lambda_n,\beta,0}.
\ee
Here, $|M_{\Lambda}|$ denotes the absolute value of the matrix $M_{\Lambda}$.
\end{itemize}
A handier quantity, however, is the expectation of $M_{\Lambda}^{2}$, which can be expressed in terms of the two-point correlation function, see below. It is equivalent to $m^*_{\rm sp}$ in the sense that both are zero or both differ from zero:

% \colorbox{gray}{
% \begin{minipage}
\begin{lemma}
\label{lem magn vs corr}
\[
\bigl\langle \tfrac{|M_{\Lambda}|}{|\caV|} \bigr\rangle^{2}_{\Lambda, \beta, 0} \leq \bigl\langle \bigl( \tfrac{M_{\Lambda}}{|\caV|} \bigr)^{2} \bigr\rangle_{\Lambda,\beta,0} \leq \tfrac12 \bigl\langle \tfrac{|M_{\Lambda}|}{|\caV|} \bigr\rangle_{\Lambda,\beta,0}.
\]
\end{lemma}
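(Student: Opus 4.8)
The plan is to reduce both inequalities to elementary facts about a single self-adjoint operator, using that $M_\Lambda$ commutes with $H_{\Lambda,0}$ (the lemma of \S\ref{sec Gibbs states}), so that the Gibbs state $\langle\cdot\rangle_{\Lambda,\beta,0}$, restricted to operators that are functions of $M_\Lambda$, behaves like a genuine probability average over the spectrum of $M_\Lambda$.

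For the left-hand inequality, I would set $X = |M_\Lambda|/|\caV|$, a nonnegative self-adjoint operator that commutes with $H_{\Lambda,0}$ (being a function of $M_\Lambda$). As recorded around \eqref{gen CS ineq}, $(A,B)\mapsto\langle A^* B\rangle_{\Lambda,\beta,0}$ is an inner product on the operators commuting with $H_{\Lambda,0}$, so Cauchy--Schwarz applied to the pair $(X,\IdentityOperator_\caV)$ gives $\langle X\rangle^2 \le \langle X^2\rangle\,\langle \IdentityOperator_\caV\rangle = \langle X^2\rangle$. Since $X^2 = M_\Lambda^2/|\caV|^2 = (M_\Lambda/|\caV|)^2$, this is exactly the first inequality.

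For the right-hand inequality, I would compute the spectrum of $M_\Lambda = \sum_{x\in\caV} S_x^{(3)}$ directly on the product basis: $M_\Lambda |s^{(\caV)}\rangle = \bigl(\sum_{x\in\caV} s_x\bigr)|s^{(\caV)}\rangle$ with each $s_x = \pm\tfrac12$, so every eigenvalue $\mu$ of $M_\Lambda$ satisfies $|\mu| \le \tfrac12|\caV|$. Hence the scalar inequality $\mu^2 = |\mu|\cdot|\mu| \le \tfrac12|\caV|\,|\mu|$ holds on the spectrum of $M_\Lambda$; since $M_\Lambda^2$ and $|M_\Lambda|$ are both functions of $M_\Lambda$ (hence simultaneously diagonalized), this upgrades to the operator inequality $M_\Lambda^2 \le \tfrac12|\caV|\,|M_\Lambda|$. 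Applying $\langle\cdot\rangle_{\Lambda,\beta,0}$, which is monotone because $\e{-\beta H_{\Lambda,0}} \ge 0$ (so $\Tr(A\e{-\beta H_{\Lambda,0}}) \le \Tr(B\e{-\beta H_{\Lambda,0}})$ whenever $A \le B$), and dividing by $|\caV|^2$, I get $\bigl\langle (M_\Lambda/|\caV|)^2\bigr\rangle \le \tfrac1{2|\caV|}\langle|M_\Lambda|\rangle = \tfrac12\bigl\langle |M_\Lambda|/|\caV|\bigr\rangle$.

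There is no serious obstacle here; the only points needing a little care are the monotonicity of the Gibbs functional under operator ordering and the passage between scalar and operator inequalities for functions of one self-adjoint operator, both of which are standard.
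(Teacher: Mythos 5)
Your proposal is correct and follows essentially the same route as the paper: Cauchy--Schwarz for the inner product $(A,B)\mapsto\langle A^*B\rangle$ applied to $|M_\Lambda|$ and the identity for the first inequality, and the operator inequality $M_\Lambda^2\le\tfrac12|\caV|\,|M_\Lambda|$ (from $|M_\Lambda|\le\tfrac12|\caV|\,\IdentityOperator$) together with positivity of the Gibbs state for the second. The only difference is that you spell out the spectral bound on $M_\Lambda$ and the monotonicity of the trace functional, which the paper leaves implicit.
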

% end minipage

\begin{proof}
For the first inequality, use $|M_{\Lambda}| = |M_{\Lambda}| \IdentityOperator$ and then the Cauchy-Schwarz inequality \eqref{gen CS ineq}. For the second inequality, observe that $|M_{\Lambda}| \leq \frac12 |\caV| \IdentityOperator$ implies that $M^{2}_{\Lambda} \leq \frac12 |\caV| |M_{\Lambda}|$, and use the fact that the Gibbs state is a positive linear functional.
\end{proof}

\bfig
\centerline{\includegraphics[width=50mm]{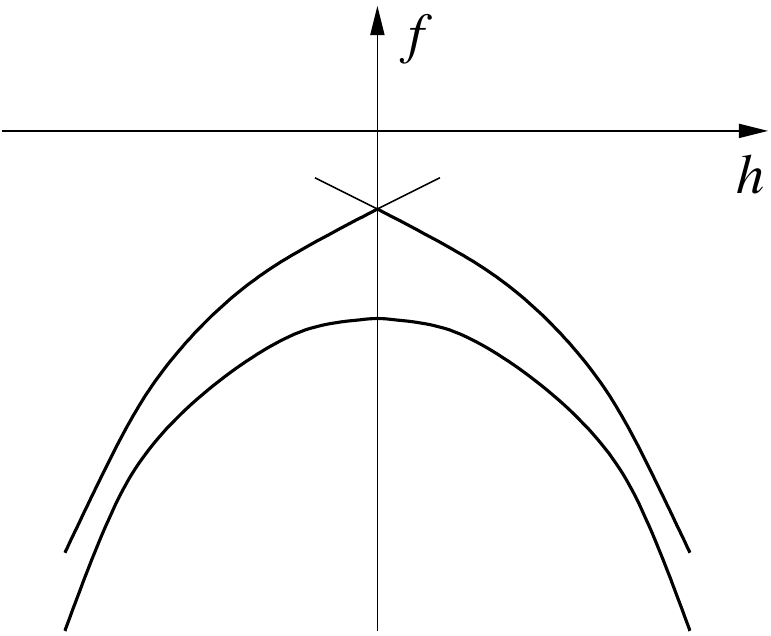}}
\caption{Qualitative graphs of the free energy $f(\beta,h)$ as a function of $h$, for $\beta$ large (top) and $\beta$ small (bottom).}
\label{fig fen}
\efig

% \colorbox{gray}{
% \begin{minipage}
\begin{proposition}
\label{prop ineq res th} The three order parameters above are related as follows:
\[
m^*_{\rm th}(\beta) = m^*_{\rm res}(\beta) \geq \tfrac12 m^*_{\rm sp}(\beta).
\]
\end{proposition}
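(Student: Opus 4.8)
The plan is to prove the two relations separately. For the equality $m^*_{\rm th}(\beta) = m^*_{\rm res}(\beta)$, I would exploit the concavity of $f$ in $h$ (established earlier via Lemma on concavity of $\beta F_\Lambda$) together with the fact that, for each finite $n$, the function $h \mapsto f_{\Lambda_n}(\beta,h)$ is differentiable with derivative $-\frac1{n^d}\langle M_{\Lambda_n}\rangle_{\Lambda_n,\beta,h}$. First I would note that $f_{\Lambda_n} \to f$ uniformly on compacts (Theorem \ref{thm thermo limit} / Corollary \ref{cor thermo limit}), and that each $f_{\Lambda_n}$ is concave in $h$; by a standard convexity fact, pointwise convergence of concave functions forces convergence of one-sided derivatives at points where the limit is differentiable, and in general convergence of the derivatives to the one-sided derivatives of $f$ along suitable subsequences. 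Concretely, for $h > 0$ the derivative $\partial_h f_{\Lambda_n}(\beta,h)$ lies between the difference quotients of $f_{\Lambda_n}$, which converge to those of $f$; taking $n\to\infty$ then $h\to 0+$ gives $m^*_{\rm res}(\beta) = -\lim_{h\to0+}\frac{f(\beta,h)-f(\beta,0)}{h} = m^*_{\rm th}(\beta)$. This same squeezing shows the $\liminf$ and $\limsup$ in the definition of $m^*_{\rm res}$ coincide, as remarked in the text.

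For the inequality $m^*_{\rm th}(\beta) \geq \tfrac12 m^*_{\rm sp}(\beta)$, the natural route is to compare the $h=0$ state with the $h>0$ state. I would start from the elementary bound relating the magnetization to its square: by Lemma \ref{lem magn vs corr}, $\bigl\langle (M_{\Lambda_n}/n^d)^2\bigr\rangle_{\Lambda_n,\beta,0} \geq \bigl\langle |M_{\Lambda_n}|/n^d\bigr\rangle^2_{\Lambda_n,\beta,0}$, so it suffices to control $\langle M_{\Lambda_n}^2\rangle_{\Lambda_n,\beta,0}$ by $m^*_{\rm res}$. The key step is a Gaussian-domination / Griffiths-type inequality (or a direct second-moment argument): since $[H_{\Lambda,h},M_\Lambda]=0$, the Gibbs weight at field $h$ is $e^{-\beta H_{\Lambda,0}}e^{\beta h M_\Lambda}$, and one can write $\langle M_{\Lambda_n}\rangle_{\Lambda_n,\beta,h}$ as $\langle M_{\Lambda_n} e^{\beta h M_{\Lambda_n}}\rangle_0 / \langle e^{\beta h M_{\Lambda_n}}\rangle_0$; differentiating in $h$ and using that the $h=0$ state is spin-flip symmetric, one gets at $h=0$ that $\frac{d}{dh}\langle M_{\Lambda_n}\rangle_{\Lambda_n,\beta,h}\big|_{h=0} = \beta\langle M_{\Lambda_n}^2\rangle_{\Lambda_n,\beta,0}$. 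Concavity then forces $\frac1{n^d}\langle M_{\Lambda_n}\rangle_{\Lambda_n,\beta,h} \leq \frac{\beta h}{n^d}\langle M_{\Lambda_n}^2\rangle_{\Lambda_n,\beta,0}$? — this is the wrong direction, so instead I would use the reverse bound coming from convexity of $h\mapsto\langle e^{\beta h M}\rangle_0$ (Jensen), namely $\langle M e^{\beta h M}\rangle_0 \geq \langle M\rangle_0\langle e^{\beta h M}\rangle_0 + \beta h \operatorname{Var}(M)\cdot(\text{something})$; the cleanest version is: $\langle |M_{\Lambda_n}|\rangle_{\Lambda_n,\beta,h}$ is nondecreasing in $h$ and bounded below, for small $h$, using symmetry, by a term of order $\beta h \langle M_{\Lambda_n}^2\rangle_0/(\text{const})$, which after letting $n\to\infty$ then $h\to0+$ yields $m^*_{\rm res}(\beta) \geq c\, m^*_{\rm sp}(\beta)$ with the sharp constant $c=\tfrac12$ coming from the second inequality of Lemma \ref{lem magn vs corr}.

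The main obstacle I anticipate is making the comparison in the second part rigorous with the exact constant $\tfrac12$: one must interchange the $h\to0+$ and $n\to\infty$ limits correctly and extract the factor $\tfrac12$ from $M_{\Lambda}^2 \leq \tfrac12|\caV||M_{\Lambda}|$ rather than losing it. The safest implementation is: fix $h>0$, use $\langle|M_{\Lambda_n}|\rangle_{\Lambda_n,\beta,0}\geq \langle M_{\Lambda_n}\operatorname{sgn}(M_{\Lambda_n})\rangle$ and a change-of-measure bound $\langle M_{\Lambda_n}\rangle_{\Lambda_n,\beta,h} \leq e^{\beta h \,\||M_{\Lambda_n}|\|/1}\,(\cdots)$ to relate the $h$-state magnetization to the $h=0$ second moment, then apply Lemma \ref{lem magn vs corr} to pass from $\langle M_{\Lambda_n}^2\rangle_0$ to $\langle|M_{\Lambda_n}|\rangle_0$, picking up exactly the $\tfrac12$; finally take $n\to\infty$ and $h\to0+$ and combine with the equality $m^*_{\rm th}=m^*_{\rm res}$ already proved. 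The differentiability/concavity bookkeeping (existence of the limits, monotonicity in $h$) is routine given the lemmas in the excerpt.
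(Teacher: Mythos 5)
Your first half --- the identity $m^*_{\rm th}=m^*_{\rm res}$ via concavity, squeezing $\partial_h f_{\Lambda_n}$ between difference quotients and using right-continuity of $D_+f$ --- is exactly the paper's argument and is fine.

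The second half has a genuine gap. You correctly sense that the naive route (Taylor-expand $\langle M_{\Lambda_n}\rangle_{\Lambda_n,\beta,h}$ at $h=0$ to get $\beta h\langle M_{\Lambda_n}^2\rangle_{\Lambda_n,\beta,0}$) fails, but your proposed repairs are placeholders rather than arguments: the ``Jensen'' inequality with a ``$(\text{something})$'' factor and the unspecified ``change-of-measure bound'' are not actual inequalities, and no bound of the form $\tfrac1{n^d}\langle M_{\Lambda_n}\rangle_h\gtrsim \beta h\, n^{-d}\langle M_{\Lambda_n}^2\rangle_0$ can work even in principle, since the left side is at most $\tfrac12$ while the right side diverges like $\beta h\, (m^*_{\rm sp})^2 n^d$ whenever $m^*_{\rm sp}>0$. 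The point is that the magnetization \emph{saturates} rather than grows linearly in $h$: one must exploit that the relevant eigenvalues $m_j$ of $M_{\Lambda_n}$ are of order $n^d$, so $\e{\beta h m_j}$ completely dominates $\e{-\beta h m_j}$ for any fixed $h>0$. The paper does this by simultaneously diagonalizing $H_{\Lambda_n,0}$ and $M_{\Lambda_n}$, pairing the eigenvalues $\pm m_j$ by spin-flip symmetry, and bounding
\[
\langle M_{\Lambda_n}\rangle_{\Lambda_n,\beta,h}\;\geq\;\frac{\sum_{j:m_j>0} m_j \e{-\beta e_j+\beta h m_j}\bigl(1-\e{-2\beta h m_j}\bigr)}{2\sum_{j:m_j>0}\e{-\beta e_j+\beta h m_j}+\sum_{j:m_j=0}\e{-\beta e_j}},
\]
where the factor $2$ in the denominator (from $\e{x}+\e{-x}\leq 2\e{x}$) is the true source of the constant $\tfrac12$ --- not Lemma \ref{lem magn vs corr}, which is not used here at all. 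After dividing by $n^d$ only the $m_j\sim n^d$ terms survive, so the parenthesis tends to $1$; the resulting expression is then compared with $\tfrac12 G_n'(h)$, where $G_n(h)=\tfrac1\beta\log\Tr\e{-\beta H_{\Lambda_n,0}+\beta h|M_{\Lambda_n}|}$ is convex, so that $G_n'(h)\geq G_n'(0)=\langle|M_{\Lambda_n}|\rangle_{\Lambda_n,\beta,0}$ \emph{uniformly in $n$}, which is precisely the uniformity your sketch lacks and which lets the $n\to\infty$ and $h\to0+$ limits be taken in the required order.
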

% end minipage

\begin{proof}[Proof of $m^*_{\rm th} = m^*_{\rm res}$]
We prove that whenever $f_n$ is a sequence of differentiable concave functions that converge pointwise to the (necessarily concave) function $f$, then
\be
\label{a very nice identity!}
D_+ f(0) = \lim_{h\to0+} \limsup_{n\to\infty} f_n'(h) = \lim_{h\to0+} \liminf_{n\to\infty} f_n'(h).
\ee
Up to the signs, the left side is equal to $m^*_{\rm th}$ and the right side to $m^*_{\rm res}$ and we obtain the identity in Proposition \ref{prop ineq res th}.
The proof of \eqref{a very nice identity!} follows from the general properties
\be
\begin{split}
&\limsup_i \bigl( \inf_j a_{ij} \bigr) \leq \inf_j \bigl( \limsup_i a_{ij} \bigr), \\
&\liminf_i \bigl( \sup_j a_{ij} \bigr) \geq \sup_j \bigl( \liminf_i a_{ij} \bigr),
\end{split}
\ee
and from the following expressions for left- and right-derivatives of concave functions:
\be
D_- f(h) = \inf_{s>0} \frac{f(h) - f(h-s)}s, \qquad D_+ f(h) = \sup_{s>0} \frac{f(h+s) - f(h)}s.
\ee
With these observations, the proof is straightforward. For $h>0$,
\be
\begin{split}
D_+ f(0) &\geq D_- f(h) = \inf_{s>0} \limsup_{n\to\infty} \frac{f_n(h) - f_n(h-s)}s \geq \limsup_{n\to\infty} f_n'(h) \\
&\geq \liminf_{n\to\infty} f_n'(h) \geq \sup_{s>0} \liminf_{n\to\infty} \frac{f_n(h+s) - f_n(h)}s = D_+ f(h).
\end{split}
\ee
Since right-derivatives of concave functions are right-continuous, the last term converges to $D_+ f(0)$ as $h \to 0+$. This proves Eq.\ \eqref{a very nice identity!}.
\end{proof}

\begin{proof}[Proof of $m^*_{\rm res} \geq \frac12 m^*_{\rm sp}$]
Let $h>0$, and let $\{\varphi_{j}\}$ be an orthonormal set of eigenvectors of $H_{\Lambda_{n},0}$ and $M_{\Lambda_{n}}$ with eigenvalues $e_{j}$ and $m_{j}$, respectively. Because of the spin flip symmetry, we have
\be
\label{joli developpement}
\begin{split}
\langle M_{\Lambda_{n}} \rangle_{\Lambda_{n},\beta,h} &= \frac{\sum_{j : m_{j} > 0} m_{j} \e{-\beta e_{j}} \bigl( \e{\beta hm_{j}} - \e{-\beta h m_{j}} \bigr)}{\sum_{j : m_{j} > 0} \e{-\beta e_{j}} \bigl( \e{\beta hm_{j}} + \e{-\beta h m_{j}} \bigr) + \sum_{j: m_{j}=0} \e{-\beta e_{j}}} \\
&\geq \frac{\sum_{j : m_{j} > 0} m_{j} \e{-\beta e_{j} + \beta hm_{j}} \bigl( 1 - \e{-2\beta h m_{j}} \bigr)}{2\sum_{j : m_{j} > 0} \e{-\beta e_{j} + \beta hm_{j}} + \sum_{j: m_{j}=0} \e{-\beta e_{j}}}.
\end{split}
\ee
After division by $n^{d}$, we only need to consider those $j$ with $m_{j} \sim n^{d}$, in which case $\e{-2\beta h m_{j}} \approx 0$. We can therefore replace the parenthesis by 1 in the limit $n\to\infty$.
On the other hand, consider the function $G_{n}(h) = \frac1\beta \log \Tr \e{-\beta H_{\Lambda_{n},0} + \beta h |M_{\Lambda_{n}}|}$. One can check that it is convex in $h$, see \eqref{c'est concave}, so $G_{n}'(h) \geq G_{n}'(0)$. Its derivative can be expanded as above, so that
\be
G_{n}'(h) = \frac{\sum_{j} |m_{j}| \e{-\beta e_{j} + \beta h |m_{j}|}}{\sum_{j} \e{-\beta e_{j} + \beta h |m_{j}|}}.
\ee
This is equal to twice the second line of \eqref{joli developpement} (without the parenthesis). Then
\be
m^{*}_{\rm res}(\beta) \geq \tfrac12  \lim_{h\to0+} \liminf_{n\to\infty} \frac1{n^{d}} G_{n}'(h) \geq \tfrac12 \liminf_{n\to\infty} \frac1{n^{d}} G_{n}'(0) = \tfrac12 m^{*}_{\rm sp}(\beta).
\ee
\end{proof}

\subsection{Antiferromagnetic phase transition}
\label{sec thermo limit anti}

While ferromagnets favor alignment of the spins, antiferromagnets favor staggered phases, where spins are aligned on one sublattice and aligned in the opposite direction on the other sublattice. The external magnetic field does not play much of a r\^ole. One could mirror the ferromagnetic situation by introducing a non-physical staggered magnetic field of the kind $h \sum_{x \in \caV} (-1)^{x} S_{x}^{(3)}$, which would lead to counterparts of the order parameters $m^{*}_{\rm th}$ and $m^{*}_{\rm res}$. We content ourselves with turning off the external magnetic field, i.e.\ setting $h=0$, and with looking at magnetic long-range order. For $x,y \in \caV$, we introduce the correlation function
\be
\sigma_{\Lambda,\beta}(x,y) = (-1)^{x} (-1)^{y} \langle S_{x}^{(3)} S_{y}^{(3)} \rangle_{\Lambda,\beta,0}.
\ee
One question is whether
\be
\label{sigma bar}
\overline\sigma^{2}(\beta) = \liminf_{n\to\infty} \frac1{|\caV_{n}|^{2}} \sum_{x,y\in\caV_{n}} \sigma_{\Lambda,\beta}(x,y)
\ee
differs from 0. A related question is whether the correlation function does not decay to 0 as the distance between $x$ and $y$ tends to infinity. One says that the system exhibits \emph{long-range order} if this happens.

In $\bbZ^{d}$ and for $\beta$ large enough, it is well-known that there is no long-range order and that the correlation function decays exponentially in $\|x-y\|$. Long-range order is expected in dimension $d\geq3$ but not in $d=1,2$. This is discussed in more detail in Section \ref{sec rigorous}.

\subsection{Phase transitions in cycle and loop models}
\label{sec prob phase transitions}

In this section, we clarify the relations between the order parameters of the quantum systems and the nature of cycles and loops. This yields probabilistic interpretations for the quantum results. We introduce two quantities, which apply simultaneously to cycles and loops. Recall that $\gamma_{x}$ denotes either the cycle that contains $(x,0)$ in the cycle model, or the loop that contains $(x,0)$ in the loop model. We write $\bbP$ and $\bbE$ for $\bbP^{\rm cycles}$ and $\bbE^{\rm loop}$ when equations hold in both cases.
\begin{itemize}
\item The fraction of vertices in \emph{infinite} cycles/loops:
\be
\eta_{\infty}(\beta,h) = \lim_{K \to \infty} \liminf_{n\to\infty} \frac1{n^{d}} \bbE_{\Lambda_{n}, \beta, h} \bigl( \#\{ x \in \caV_{n} : L(\gamma_{x}) > K \} \bigr).
\ee
\item The fraction of vertices in \emph{macroscopic} cycles/loops:
\be
\eta_{\rm macro}(\beta,h) = \lim_{\varepsilon \to 0+} \liminf_{n\to\infty} \frac1{n^{d}} \bbE_{\Lambda_{n}, \beta, h} \bigl( \#\{ x \in \caV_{n} : L(\gamma_{x}) > \varepsilon n^{d} \} \bigr).
\ee
\end{itemize}
It is clear that $\eta_{\infty}(\beta,h) \geq \eta_{\rm macro}(\beta,h)$. These two quantities relate to magnetization and long-range order as follows. The first two statements deal with cycles and the third statement deals with loops.

% \colorbox{gray}{
% \begin{minipage}
\begin{proposition}
\label{prop quantum vs prob}
\hfill
\begin{itemize}
\item[(a)] $\displaystyle m^{*}_{\rm res}(2\beta) \geq \tfrac12 \lim_{h \to 0+} \eta_{\infty}(\beta,h)$.
\item[(b)] $\displaystyle m^{*}_{\rm sp}(2\beta) > 0 \; \Longleftrightarrow \; \eta_{\rm macro}(\beta,0) > 0$.
\item[(c)] $\displaystyle \overline\sigma(2\beta) > 0 \; \Longleftrightarrow \; \eta_{\rm macro}(\beta,0) > 0$.
\end{itemize}
\end{proposition}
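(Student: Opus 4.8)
The plan is to read each of the three assertions off Theorems~\ref{thm Toth} and~\ref{thm AN} and then feed the result into one elementary principle about random partitions: if a configuration carries objects $\gamma$ with nonnegative ``sizes'' $\ell(\gamma)$, \emph{deterministic} total $\sum_{\gamma}\ell(\gamma)=S_{n}$, and $\ell(\gamma)\le S_{n}$ for each $\gamma$, then
\[
\liminf_{n\to\infty}\frac{1}{S_{n}^{2}}\,\bbE\Big[\sum_{\gamma}\ell(\gamma)^{2}\Big]>0
\quad\Longleftrightarrow\quad
\lim_{\varepsilon\to0+}\,\liminf_{n\to\infty}\frac{1}{S_{n}}\,\bbE\Big[\sum_{\gamma:\,\ell(\gamma)>\varepsilon S_{n}}\ell(\gamma)\Big]>0 .
\]
The ``$\Rightarrow$'' direction comes from $\sum_{\gamma}\ell^{2}=\sum_{\ell\le\varepsilon S_{n}}\ell^{2}+\sum_{\ell>\varepsilon S_{n}}\ell^{2}\le\varepsilon S_{n}^{2}+S_{n}\sum_{\ell>\varepsilon S_{n}}\ell$, the ``$\Leftarrow$'' direction from $\sum_{\gamma}\ell^{2}\ge\sum_{\ell>\varepsilon S_{n}}\ell^{2}\ge\varepsilon S_{n}\sum_{\ell>\varepsilon S_{n}}\ell$, together with routine $\liminf$/$\limsup$ bookkeeping.

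Part (a) does not need the principle. By Theorem~\ref{thm Toth}, $\langle M_{\Lambda_{n}}\rangle_{\Lambda_{n},2\beta,h}=\tfrac12\,\bbE^{\rm cycles}_{\Lambda_{n},\beta,h}\big[\sum_{x\in\caV_{n}}\tanh(hL(\gamma_{x}))\big]$. Since $h,L\ge0$ and $\tanh$ is increasing, $\sum_{x}\tanh(hL(\gamma_{x}))\ge\tanh(hK)\,\#\{x:L(\gamma_{x})>K\}$, so dividing by $n^{d}$, taking expectations, and taking $\liminf_{n}$ gives $\liminf_{n}n^{-d}\langle M_{\Lambda_{n}}\rangle_{\Lambda_{n},2\beta,h}\ge\tfrac12\tanh(hK)\,g_{h}(K)$ for every $K$, where $g_{h}(K):=\liminf_{n}n^{-d}\bbE^{\rm cycles}_{\Lambda_{n},\beta,h}[\#\{x:L(\gamma_{x})>K\}]$. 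Letting $K\to\infty$ (so $\tanh(hK)\to1$ while $g_{h}(K)$ decreases to $\eta_{\infty}(\beta,h)$) yields $\liminf_{n}n^{-d}\langle M_{\Lambda_{n}}\rangle_{\Lambda_{n},2\beta,h}\ge\tfrac12\eta_{\infty}(\beta,h)$, and letting $h\to0+$ on both sides gives (a).

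For (b) and (c) I would reduce to a second moment and then apply the principle. For (b): Lemma~\ref{lem magn vs corr} makes $m^{*}_{\rm sp}(2\beta)>0$ equivalent to $\liminf_{n}\langle(M_{\Lambda_{n}}/|\caV_{n}|)^{2}\rangle_{\Lambda_{n},2\beta,0}>0$, and Theorem~\ref{thm Toth} at $h=0$ gives $\langle M_{\Lambda_{n}}^{2}\rangle_{\Lambda_{n},2\beta,0}=\sum_{x,y}\langle S_{x}^{(3)}S_{y}^{(3)}\rangle_{\Lambda_{n},2\beta,0}=\tfrac14\,\bbE^{\rm cycles}_{\Lambda_{n},\beta,0}\big[\sum_{\gamma}(\#\{x:\gamma_{x}=\gamma\})^{2}\big]$; since $\#\{x:\gamma_{x}=\gamma\}=L(\gamma)/\beta$ for cycles and $\sum_{\gamma}L(\gamma)=\beta n^{d}$, the principle with $\ell=L$, $S_{n}=\beta n^{d}$, identifies this with $\eta_{\rm macro}(\beta,0)>0$ (the inessential $\varepsilon$-rescaling is absorbed by $\lim_{\varepsilon\to0+}$), giving (b). For (c): Theorem~\ref{thm AN} at $h=0$ gives $\sigma_{\Lambda_{n},2\beta}(x,y)=\tfrac14\,\bbP^{\rm loops}_{\Lambda_{n},\beta,0}(\gamma_{x}=\gamma_{y})$, so $\overline\sigma^{2}(2\beta)=\tfrac14\liminf_{n}n^{-2d}\,\bbE^{\rm loops}_{\Lambda_{n},\beta,0}\big[\sum_{\gamma}N(\gamma)^{2}\big]$ with $N(\gamma)=\#\{x\in\caV_{n}:\gamma_{x}=\gamma\}$ and $\sum_{\gamma}N(\gamma)=n^{d}$; the principle with $\ell=N$, $S_{n}=n^{d}$, shows this is positive iff the count-based fraction $\lim_{\varepsilon\to0+}\liminf_{n}n^{-d}\bbE[\#\{x:N(\gamma_{x})>\varepsilon n^{d}\}]$ is positive.

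The main obstacle is the last step of (c): for loops, unlike cycles, $N(\gamma)\ne L(\gamma)/\beta$ in general---a loop can meet the level-$0$ plane at many vertices while having small total vertical length, and conversely---so the count-based macroscopic fraction coming out of the Aizenman--Nachtergaele formula must still be shown to vanish exactly when the length-based $\eta_{\rm macro}(\beta,0)$ does. I would exploit that $\sum_{\gamma}L(\gamma)=\beta n^{d}$ forces at most $\beta/\varepsilon$ loops with $L(\gamma)>\varepsilon n^{d}$, and likewise at most $1/\varepsilon$ loops with $N(\gamma)>\varepsilon n^{d}$, so in either normalization the second moment is carried by $O(1/\varepsilon)$ ``large'' loops; the genuinely delicate point is then to show that, under the loop measure, the presence in expectation of a macroscopic loop in the length sense entails one in the count sense and vice versa. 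The remaining ingredients---the algebraic identities above and the elementary inequalities---are routine.
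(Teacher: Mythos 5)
Your parts (a) and (b) follow the paper's own proof essentially verbatim. Part (a) is the bound $\tanh(hL)\geq\tanh(hK)\,1_{L>K}$ fed into T\'oth's formula, followed by $K\to\infty$ and then $h\to0+$, exactly as in the paper. Part (b) is Lemma~\ref{lem magn vs corr} combined with the identity expressing $n^{-2d}\langle M_{\Lambda_n}^2\rangle$ as a constant times $\bbE^{\rm cycles}[L(\gamma_0)/n^d]$, and then the elementary first-moment/tail equivalence for the bounded variable $L(\gamma_0)/n^d$; your ``principle'' is this same equivalence written for the size-biased partition, and the two formulations coincide by exchangeability of the vertices (the step the paper dismisses with ``the result is then clear'').

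For (c) the paper simply asserts that the claim is ``identical to (b), with loops instead of cycles,'' whereas you correctly point out that it is not quite identical: the Aizenman--Nachtergaele formula at $h=0$ yields $\overline\sigma^2(2\beta)=\tfrac14\liminf_n n^{-2d}\,\bbE^{\rm loops}[\sum_\gamma N(\gamma)^2]$ with $N(\gamma)=\#\{x:(x,0)\in\gamma\}$, and for loops $N(\gamma)\neq L(\gamma)/\beta$ in general (loop lengths are not multiples of $\beta$, and a long loop may cross the $t=0$ level rarely or not at all). So your argument, like the paper's, really proves that $\overline\sigma(2\beta)>0$ is equivalent to positivity of the \emph{count-based} macroscopic fraction, and the identification with the \emph{length-based} $\eta_{\rm macro}(\beta,0)$ of the statement is an extra step. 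This is a genuine point, but it is not a defect of your proof relative to the source: the paper's proof silently elides exactly the same step. A natural way to close it would use invariance of the loop measure under vertical translations together with the identity $\int_0^\beta N_t(\gamma)\,\dd t=L(\gamma)$, where $N_t$ counts crossings of level $t$; neither you nor the paper carries this out. In summary: (a) and (b) are complete and match the paper; (c) matches the paper's route and is, if anything, more honest about the one step the paper leaves unjustified.
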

% end minipage

\begin{proof}
Let
\be
m(2\beta,h) = \liminf_{n\to\infty} \langle S_{0}^{(3)} \rangle_{\Lambda_{n},2\beta,h}.
\ee
We use $\tanh x \geq \tanh K \cdot 1_{x>K}$, which holds for any $K$, and Theorem \ref{thm Toth}, so as to get
\be
m(2\beta,h) \geq \tfrac12 \tanh(h K) \liminf_{n\to\infty} \bbP^{\rm cycles}_{\Lambda_{n},\beta,h}(L(\gamma_{0}) > K).
\ee
Taking $K\to\infty$, we get $m(2\beta,h) \geq \frac12 \eta_{\infty}(\beta,h)$. We now take $h\to0+$ to obtain (a).

For (b), we observe that, since the vertices of $\Lambda_{n}$ are exchangeable,
\be
\frac1{n^{2d}} \langle M^{2}_{\Lambda_{n}} \rangle_{\Lambda_{n},2\beta,0} = \frac1{2\beta} \bbE^{\rm cycles}_{\Lambda_{n},\beta,0} \Bigl( \frac{L(\gamma_{0})}{n^{d}} \Bigr).
\ee
It follows from Lemma \ref{lem magn vs corr} that
\be
m^{*}_{\rm sp}(2\beta) > 0 \; \Longleftrightarrow \; \liminf_{n\to\infty} \bbE^{\rm cycles}_{\Lambda_{n},\beta,0} \Bigl( \frac{L(\gamma_{0})}{n^{d}} \Bigr) > 0.
\ee
On the other hand, we have
\be
\eta_{\rm macro}(\beta,0) = \lim_{\varepsilon\to0+} \liminf_{n\to\infty} \bbP^{\rm cycles}_{\Lambda_{n},\beta,0} \Bigl( \frac{L(\gamma_{0})}{n^{d}} > \varepsilon \Bigr).
\ee
The result is then clear.

The claim (c) is identical to (b), with loops instead of cycles.
\end{proof}

It should be possible to extend Proposition \ref{prop quantum vs prob} (a) so that $m_{\rm res}^{*}(\beta) > 0 \Leftrightarrow \eta_{\infty}(\beta,0) > 0$. This suggests that $m^{*}_{\rm th}$ and $m^{*}_{\rm res}$ are related to the existence of infinite cycles, while $m^{*}_{\rm sp}$ is related to the occurrence of macroscopic cycles. The question is then whether there exists a phase in which a positive fraction of vertices belongs to mesoscopic cycles or loops. Such a phase could have something to do with the Berezinski\u\i-Kosterlitz-Thouless transition \cite{Berez,0022-3719-6-7-010}, which has been rigorously established in the classical XY model \cite{MR634447}. It is not expected in the Heisenberg model, though. The Mermin-Wagner theorem (Section \ref{sec MW}) rules out any kind of infinite cycles or loops in one and two dimensions.

\section{Rigorous results for the quantum models}
\label{sec rigorous}

Quantum lattice systems have seen a considerable amount of study in the past decades, and the effort is not abating. Physicists are interested in properties of the ground state (i.e., the eigenvector of the Hamiltonian with lowest eigenvalue), in dynamical behavior, and in the existence and nature of phase transitions. Out of many results, we only discuss two in this section, which have been chosen because of their direct relevance to the understanding of the cycle and loop models: the Mermin-Wagner theorem concerning the absence of spontaneous magnetization in one and two dimensions, and the theorem of Dyson, Lieb, and Simon concerning the existence of long-range order in the antiferromagnetic model.

\subsection{Mermin-Wagner theorem}
\label{sec MW}

This fundamental result of condensed matter physics states that a continuous symmetry cannot be broken in one and two dimensions \cite{Mermin:1966lr}. In particular, there is no spontaneous magnetization or long-range order in Heisenberg models.

\begin{theorem}
\label{thm MW}
Let $(\Lambda_{n}^{\rm per})_{n\geq1}$ be the sequence of cubic boxes in $\bbZ^{d}$ with periodic boundary conditions. For $d=1$ or 2, and for any $\beta \in [0,\infty)$,
\[
m^{*}_{\rm res}(\beta) = 0.
\]
\end{theorem}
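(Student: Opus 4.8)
The plan is to follow the Bogoliubov inequality approach, which is the standard route to Mermin-Wagner for quantum spin systems. Since we have shown $m^*_{\rm th} = m^*_{\rm res}$ (Proposition \ref{prop ineq res th}), it suffices to bound the thermodynamic magnetization, i.e.\ to show that $f(\beta,h)$ is differentiable at $h=0$ with derivative zero; equivalently, that $\lim_{h\to0+}\liminf_n n^{-d}\langle M_{\Lambda_n^{\rm per}}\rangle_{\Lambda_n^{\rm per},\beta,h} = 0$. Working with periodic boundary conditions lets us diagonalize in Fourier space: for each $k$ in the dual torus, introduce the spin-wave operators $\hat S^{(\pm)}_k = |\caV_n|^{-1/2}\sum_x \e{\ii k\cdot x} S_x^{(\pm)}$, where $S_x^{(\pm)} = S_x^{(1)} \pm \ii S_x^{(2)}$. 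The external field couples to $M_{\Lambda_n} = \hat S^{(3)}_0$ and breaks the would-be SU(2) (or at least U(1)) symmetry; the idea is that in $d=1,2$ the infrared divergence of the spin-wave modes prevents the symmetry from actually breaking.

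The key steps, in order: (1) State and prove the Bogoliubov inequality, $\tfrac12\langle \{A,A^*\}\rangle \,\langle [[C,H],C^*]\rangle \geq |\langle[C,A]\rangle|^2$, valid for any operators $A,C$ in a finite-dimensional Gibbs state; this is a Cauchy-Schwarz-type estimate for the Duhamel inner product and is in the same spirit as the convexity computation in \eqref{gen CS ineq}. (2) Apply it with $A = \hat S^{(-)}_{-k}$ and $C = \hat S^{(+)}_k$, so that $[C,A]$ reproduces (a Fourier component related to) the magnetization density $\langle S_0^{(3)}\rangle$. (3) Bound the double-commutator term $\langle[[C,H_{\Lambda_n,h}],C^*]\rangle$ from above by something of order $(k^2 + h)$ — the $k^2$ coming from the nearest-neighbor Heisenberg interaction (a discrete gradient squared) and the $h$ from the field term — uniformly in $n$; here one uses $\|\vec S_x\cdot\vec S_y\| = \tfrac34$ and $\|S_x^{(3)}\| = \tfrac12$. (4) Rearrange Bogoliubov to get $\langle\{\hat S^{(-)}_{-k},\hat S^{(+)}_k\}\rangle \geq c\, m_n(h)^2/(k^2+h)$ where $m_n(h) = n^{-d}\langle M_{\Lambda_n}\rangle$. (5) Sum over $k$ in the Brillouin zone: the left side is bounded by $\sum_x (S_x^{(+)}S_x^{(-)} + S_x^{(-)}S_x^{(+)})$-type quantities, hence $O(n^d)$, while the right side is $\geq c\,m_n(h)^2 \sum_k (k^2+h)^{-1} \sim c\, m_n(h)^2\, n^d \int_{\text{torus}} (|k|^2+h)^{-1}\dd k$. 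In $d=1,2$ this integral diverges as $h\to 0$ (like $h^{-1/2}$ in $d=1$, like $\log(1/h)$ in $d=2$), forcing $m_n(h)^2 = O(1/\int) \to 0$ uniformly in $n$, and then $\liminf_n$ followed by $h\to0+$ gives $m^*_{\rm res}(\beta) = 0$.

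The main obstacle is step (3): carefully controlling the double commutator $\langle[[\hat S^{(+)}_k,H_{\Lambda_n,h}],\hat S^{(-)}_{-k}]\rangle$ and extracting the clean upper bound of order $k^2 + h$, uniformly in the volume. One must expand the commutator of $\hat S^{(+)}_k$ with each term $\vec S_x\cdot\vec S_y$ of the Hamiltonian, use the commutation relations \eqref{space Pauli commutations} together with the identity $\vec S_x\cdot\vec S_y = \tfrac12(S_x^{(+)}S_y^{(-)}+S_x^{(-)}S_y^{(+)}) + S_x^{(3)}S_y^{(3)}$, and collect the phase factors $(1-\cos(k\cdot(x-y)))$ that produce the crucial $k^2$ behavior for $|k|$ small; the field term must be shown to contribute only $O(h)$. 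A secondary technical point is the exchange of limits $\liminf_n$ and $h\to0+$, but since the bound $m_n(h)^2 \leq C/\int_{\text{torus}}(|k|^2+h)^{-1}\dd k$ holds uniformly in $n$, this is harmless. Everything else — the Bogoliubov inequality itself and the choice of test operators — is standard once the commutator estimate is in place.
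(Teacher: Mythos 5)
Your proposal is correct and follows essentially the same route as the paper: Bogolubov's inequality applied to the Fourier-transformed raising/lowering operators $S^{(\pm)}(k)$, the double-commutator bound of order $\varepsilon(k)+h$ via the commutation relations and $\|\vec S_x\cdot\vec S_y\|=\tfrac34$, the sum rule over the Brillouin zone, and the infrared divergence of $\int(\lvert k\rvert^2+h)^{-1}\,\dd k$ in $d=1,2$. The only cosmetic difference is that the paper's field term in the denominator is $2\lvert h\,m_n(\beta,h)\rvert$ rather than $h$, which changes nothing in the conclusion.
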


By Proposition \ref{prop ineq res th}, all three ferromagnetic order parameters are zero, and there are no infinite cycles by Proposition \ref{prop quantum vs prob} in the cycle model that corresponds to the Heisenberg ferromagnet. The theorem can also be stated for the staggered magnetic field discussed in Section \ref{sec thermo limit anti}. One could establish antiferromagnetic counterparts to Lemma \ref{lem magn vs corr} and Proposition \ref{prop ineq res th}, and therefore prove that $\eta_{\infty}(\beta)$ is also zero in the loop model that corresponds to the Heisenberg antiferromagnet.

An open question is whether the theorem can be extended to more general measures of the form
\[
\vartheta^{|\caC(\omega)|} \dd\rho_{\caE,\beta}(\omega) \quad \text{and} \quad \vartheta^{|\caL(\omega)|} \dd\rho_{\caE,\beta}(\omega)
\]
(up to normalization), for values of $\vartheta$ other than $\vartheta=2$. The case $3^{|\caL(\omega)|}$ can actually be viewed as the representation of a model with Hamiltonian $-\sum_{\{x,y\} \in \caE} (\vec S_{x} \cdot \vec S_{y})^{2}$ (see \cite{MR1288152}) and the Mermin-Wagner theorem certainly holds in that case.

The theorem may not apply when $\vartheta$ is too large, and the system is in a phase with many loops, similar to the one studied in \cite{MR1749392}.

We present the standard proof \cite{MR0289084} that is based on Bogolubov's inequality.

\begin{proposition}[Bogolubov's inequality]
Let $\beta>0$ and $A,B,H$ be operators on a finite-dimensional Hilbert space, with $H$ self-adjoint. Then
\[
\bigl| \Tr [A,B] \e{-\beta H} \bigr|^{2} \leq \tfrac12 \beta \Tr (AA^{*} + A^{*}A) \e{-\beta H} \; \Tr \bigl[ [B,H], B^{*} \bigr] \e{-\beta H}.
\]
\end{proposition}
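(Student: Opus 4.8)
The plan is to introduce the Duhamel (or Kubo–Mori) inner product on operators and apply Cauchy–Schwarz to it, exactly as in the textbook argument. Work in the eigenbasis $\{\varphi_j\}$ of $H$, with $H\varphi_j = E_j\varphi_j$, and write $p_j = Z^{-1}\e{-\beta E_j}$ where $Z = \Tr\e{-\beta H}$; it suffices to prove the claimed inequality with $\e{-\beta H}$ replaced everywhere by the normalized Gibbs state $\langle\cdot\rangle$, since both sides scale by the same power of $Z$. For operators $X,Y$ define the Duhamel two-point function
\[
(X,Y) = \int_0^1 \bigl\langle X^* \e{s\beta H} Y \e{-s\beta H} \bigr\rangle \, \dd s,
\]
which in the eigenbasis is
\[
(X,Y) = \sum_{j,k} \overline{X_{jk}}\, Y_{jk}\, w_{jk}, \qquad
w_{jk} = \begin{cases} p_j & E_j = E_k,\\[1mm] \dfrac{p_k - p_j}{\beta(E_j - E_k)} & E_j \neq E_k.\end{cases}
\]
The first step is to check that $(\cdot,\cdot)$ is a genuine (positive semidefinite) inner product: this reduces to the elementary fact that $\frac{p_k - p_j}{E_j - E_k} > 0$ when $E_j \neq E_k$, since $t \mapsto \e{-\beta t}$ is decreasing, so all weights $w_{jk}$ are nonnegative. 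Hence the Cauchy–Schwarz inequality $|(X,Y)|^2 \leq (X,X)\,(Y,Y)$ holds.

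The second step is to evaluate the three relevant Duhamel products and relate them to the quantities in the statement. Taking $X = A^*$ and $Y = [B,H]$, a direct computation (using $([B,H])_{jk} = (E_k - E_j)B_{jk}$ together with the formula for $w_{jk}$, the two factors of $(E_j-E_k)$ cancelling) gives the telescoping identity
\[
(A^*, [B,H]) = \langle [A,B] \rangle,
\]
so the left side of Bogolubov's inequality is $|(A^*,[B,H])|^2$. Next, $(B,H\,\text{-commutator})$: one checks similarly that
\[
([B,H],[B,H]) = \bigl\langle [\,[B,H],B^*]\,\bigr\rangle,
\]
again because the weight $w_{jk}$ times $(E_j-E_k)^2$ collapses to $(p_j - p_k)(E_j - E_k)$, which is precisely the matrix element of the double commutator after symmetrizing over $j \leftrightarrow k$. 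Finally, the factor $(A^*,A^*)$ must be bounded above by the classical (non-Duhamel) quantity $\tfrac12\langle AA^* + A^*A\rangle$: since $(A^*,A^*) = \sum_{j,k} |A_{jk}|^2 w_{jk}$ and $w_{jk} = \frac{p_k - p_j}{\beta(E_j-E_k)} \leq \tfrac12(p_j + p_k)$ by convexity of $\e{-\beta t}$ (the logarithmic-mean / arithmetic-mean inequality), one gets $(A^*,A^*) \leq \tfrac12 \sum_{j,k}|A_{jk}|^2 (p_j + p_k) = \tfrac12\langle A^*A + AA^*\rangle$. Assembling Cauchy–Schwarz with these three evaluations yields
\[
\bigl|\langle [A,B]\rangle\bigr|^2 = |(A^*,[B,H])|^2 \leq (A^*,A^*)\,([B,H],[B,H]) \leq \tfrac12 \langle AA^* + A^*A\rangle\,\bigl\langle [\,[B,H],B^*]\,\bigr\rangle,
\]
and multiplying through by $Z^2 = (\Tr\e{-\beta H})^2$, noting $Z \cdot \tfrac12\beta^{-1}\cdots$—more precisely restoring the $\beta$ that was absorbed into $w_{jk}$—recovers the stated form with the factor $\tfrac12\beta$ on the right.

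The main obstacle is the inequality $w_{jk} \leq \tfrac12(p_j + p_k)$, i.e. that the logarithmic mean of $\e{-\beta E_j}$ and $\e{-\beta E_k}$ is dominated by their arithmetic mean; this is where convexity of the exponential enters and it is the only genuinely analytic (rather than bookkeeping) point. Everything else is careful index manipulation: getting the cancellations of energy differences right in the two commutator identities, and being consistent about where the factor $\beta$ and the normalization $Z$ sit. One should double-check signs — in particular that $[\,[B,H],B^*\,]$ has nonnegative Gibbs expectation, which is automatic once it is identified with $([B,H],[B,H]) \geq 0$.
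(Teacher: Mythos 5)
Your proof is correct and is essentially the same argument as the paper's: introduce the Duhamel/Kubo--Mori inner product in the eigenbasis of $H$, verify positivity of the weights, identify $\Tr[A,B]\e{-\beta H}$ and $\Tr[[B,H],B^*]\e{-\beta H}$ with the cross term and the norm of $[B,H]$, bound $(A^*,A^*)$ by the logarithmic-mean versus arithmetic-mean inequality, and apply Cauchy--Schwarz (the paper's version simply omits the equal-energy terms from the inner product and keeps the factor $\beta$ and the normalization $Z$ outside the weights). The only blemish is that your two intermediate identities as literally written are each off by a factor $1/\beta$ (and a sign in the first), but since you explicitly defer the $\beta$-bookkeeping to the final assembly and the stated inequality comes out with the correct $\tfrac12\beta$, this is a presentational issue rather than a gap.
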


\begin{proof}
We only sketch the proof; see \cite{MR0289084} for more details. Let $\{\varphi_{i}\}$ be an orthonormal set of eigenvectors of $H$ and $\{e_{i}\}$ the corresponding eigenvalues. We introduce the following inner product:
\be
\label{funny inner product}
(A,C) = \sum_{i,j: e_{i} \neq e_{j}} \langle \varphi_{i}, A^{*} \varphi_{j} \rangle \langle \varphi_{j}, C \varphi_{i} \rangle \frac{\e{-\beta e_{j}} - \e{-\beta e_{i}}}{e_{i}-e_{j}}.
\ee
One can check that
\be
\label{une borne sup}
(A,A) \leq \tfrac12 \beta \Tr (AA^{*} + A^{*}A) \e{-\beta H}.
\ee
We choose $C = [B^{*},H]$, and we check that
\be
\label{numero un}
\Tr [A,B] \e{-\beta H} = \overline{(A,C)}
\ee
and
\be
\label{numero deux}
\Tr \bigl[ [B,H], B^{*} \bigr] \e{-\beta H} = (C,C).
\ee
Inserting \eqref{numero un} and \eqref{numero deux} in the Cauchy-Schwarz inequality of the inner product \eqref{funny inner product}, and using \eqref{une borne sup}, we get Bogolubov's inequality.
\end{proof}

\begin{proof}[Proof of Theorem \ref{thm MW}]
Let $m_{n}(\beta,h) = n^{-d} \langle M_{\Lambda_{n}} \rangle_{\Lambda_{n},\beta,h}$. Let
\be
S_{x}^{(\pm)} = \tfrac1{\sqrt2} ( S_{x}^{(1)} \pm \ii S_{x}^{(2)} ).
\ee
One easily checks that
\be
\label{ca commute pas}
[S_{x}^{(+)}, S_{y}^{(-)}] = S_{x}^{(3)} \delta_{x,y}.
\ee
It is convenient to label the sites of $\Lambda_{n}^{\rm per}$ as follows
\be
\caV_{n} = \{ x \in \bbZ^{d} : -\tfrac n2 < x_{i} \leq \tfrac n2, i=1,\dots,d \}.
\ee
$\caE_{n}$ is again the set of nearest-neighbors in $\caV_{n}$ with periodic boundary conditions. For $k \in \frac{2\pi}n \caV_{n}$, we introduce
\be
\label{def Sk}
S^{(\cdot)}(k) = \frac1{n^{d/2}} \sum_{x \in \caV_{n}} \e{-\ii k x} S_{x}^{(\cdot)},
\ee
where $kx$ denotes the inner product in $\bbR^{d}$. Then, using \eqref{ca commute pas},
\be
\begin{split}
\langle [S^{(+)}(k), S^{(-)}(-k)] \rangle_{\Lambda_{n},\beta,h} &= \frac1{n^{d}} \sum_{x,y \in \caV_{n}} \e{-\ii k x} \e{\ii k y} \langle [S_{x}^{(+)}, S_{y}^{(-)}] \rangle_{\Lambda_{n},\beta,h} \\
&= m_{n}(\beta,h).
\end{split}
\ee
This will be the left side of Bogolubov's inequality. For the right side, tedious but straightforward calculations (expansions, commutation relations) give
\bm
\bigl\langle \bigl[ [S^{(+)}(k), H_{\Lambda_{n}}], S^{(-)}(-k) \bigr] \bigr\rangle_{\Lambda_{n},\beta,h} \\
= \frac2{n^{d}} \sum_{x,y : \{x,y\} \in \caE_{n}} (1 - \e{\ii k (x-y)}) \bigl\langle S_{x}^{(-)} S_{y}^{(+)} + S_{x}^{(3)} S_{y}^{(3)} \bigr\rangle_{\Lambda_{n},\beta,h} + h m_{n}(\beta,h).
\end{multline}
Despite appearances, this expression is real and positive for any $k$, as can be seen from \eqref{numero deux}. We get an upper bound by adding the same quantity, but with $-k$. This yields
\[
\frac4{n^{d}} \sum_{x,y : \{x,y\} \in \caE_{n}} (1 - \cos k(x-y)) \bigl\langle S_{x}^{(-)} S_{y}^{(+)} + S_{x}^{(3)} S_{y}^{(3)} \bigr\rangle_{\Lambda_{n},\beta,h} + 2h m_{n}(\beta,h).
\]
From Lemma \ref{lem coupling}, we have
\be
\bigl| \bigl\langle S_{x}^{(-)} S_{y}^{(+)} + S_{x}^{(3)} S_{y}^{(3)} \bigr\rangle_{\Lambda_{n},\beta,h} \bigr| = \bigl| \langle \vec S_{x} \cdot \vec S_{y} \rangle_{\Lambda_{n},\beta,h} \bigr| \leq \tfrac34.
\ee
Let us now introduce the ``dispersion relation'' of the lattice:
\be
\label{def eps k}
\varepsilon(k) = \sum_{i=1}^{d} (1 - \cos k_{i}).
\ee
Inserting all of this into Bogolubov's inequality, we get
\be
\frac{m_{n}(\beta,h)^{2}}{3 \varepsilon(k) + 2|h m_{n}(\beta,h)|} \leq \beta \bigl\langle S^{(+)}(k) S^{(-)}(-k) + S^{(-)}(-k) S^{(+)}(k) \bigr\rangle_{\Lambda_{n},\beta,h}.
\ee
Summing over all $k \in \frac{2\pi}n \caV_{n}$, and using $\sum_{k} \e{-\ii k (x-y)} = \delta_{x,y}$, we have
\bm
\sum_{k} \bigl\langle S^{(+)}(k) S^{(-)}(-k) + S^{(-)}(-k) S^{(+)}(k) \bigr\rangle_{\Lambda_{n},\beta,h} \\
= \sum_{x \in \caV_{n}} \bigl\langle S_{x}^{(+)} S_{x}^{(-)} + S_{x}^{(-)} S_{x}^{(+)} \bigr\rangle_{\Lambda_{n},\beta,h} = n^{d}.
\end{multline}
Then
\be
m_{n}(\beta,h)^{2} \frac1{n^{d}} \sum_{k \in \frac{2\pi}n \caV_{n}} \frac1{3 \varepsilon(k) + 2|h m_{n}(\beta,h)|} \leq \beta.
\ee
As $n\to\infty$, we get a Riemann integral,
\be
m(\beta,h)^{2} \frac1{(2\pi)^{d}} \int_{[-\pi,\pi]^{d}} \frac{\dd k}{3 \varepsilon(k) + 2|h m_{n}(\beta,h)|} \leq \beta.
\ee
Since $\varepsilon(k) \approx k^{2}$ around $k=0$, the integral diverges when $h\to0$, and so $m(\beta,h)$ must go to 0.
\end{proof}

Notice that the integral remains finite for $d\geq3$; the argument only applies to $d=1,2$.

\subsection{Dyson-Lieb-Simon theorem of existence of long-range order}
\label{sec DLS}

Following the proof of Fr\"ohlich, Simon and Spencer of a phase transition in the classical Heisenberg model \cite{MR0421531}, Dyson, Lieb and Simon proved the existence of long-range order in several quantum lattice models, including the antiferromagnetic quantum Heisenberg model in dimensions $d\geq5$ \cite{MR0496246}. Further observations of Neves and Perez \cite{Neves1986331}, and of Kennedy, Lieb and Shastry \cite{MR980167}, imply that long-range order is present for all $d \geq 3$.\footnote{We are indebted to the anonymous referee for pointing this out and for clarifying this to us. The following explanation is essentially taken from the referee's report.} These articles use the ``reflection positivity'' method, which was systematized and extended in \cite{MR506363,MR570370}. We recommend the Prague notes of T\'oth \cite{Toth} and  Biskup \cite{MR2581604} for excellent introductions to the topic. See also the notes of Nachtergaele \cite{MR2310198}.

Recall the definition of $\overline\sigma$ in Eq.\ \eqref{sigma bar}.

\begin{theorem}[Dyson-Lieb-Simon]
\label{thm DLS}
Let $(\Lambda_{n}^{\rm per})$ be the sequence of cubic boxes in $\bbZ^{d}$, $d\geq3$, with even side lengths and  periodic boundary conditions. There exists $\beta_{0}<\infty$ such that, for all $\beta>\beta_{0}$, the Heisenberg antiferromagnet has long-range order,
\[
\overline\sigma(\beta) > 0.
\]
\end{theorem}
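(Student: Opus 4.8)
The plan is to prove long-range order by the reflection-positivity method of Fr\"ohlich--Simon--Spencer \cite{MR0421531} and Dyson--Lieb--Simon \cite{MR0496246}: an \emph{infrared bound} on a two-point function, combined with a spin sum rule, forces a macroscopic staggered moment once $\beta$ is large. First I would perform the standard sublattice rotation, which trades the staggered observable for a uniform one and puts the Hamiltonian in reflection-positive form. On the bipartite torus $\Lambda_n^{\rm per} = (\caV_n,\caE_n)$, $\caV_n = \caV_{\rm A}\cup\caV_{\rm B}$, set $U = \prod_{x\in\caV_{\rm B}}\e{\ii\pi S_x^{(3)}}$, the unitary rotating each B-spin by $\pi$ about the $3$-axis. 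Then $U^{-1}S_x^{(3)}U = S_x^{(3)}$, $U^{-1}S_x^{(1)}U = (-1)^x S_x^{(1)}$, and $U^{-1}(\vec S_x\cdot\vec S_y)U = -S_x^{(1)}S_y^{(1)} - S_x^{(2)}S_y^{(2)} + S_x^{(3)}S_y^{(3)}$ for an edge $\{x,y\}$, so the transformed Hamiltonian $\widetilde H \eqdef U^{-1}H^{\rm anti}_{\Lambda_n^{\rm per},0}U$ has \emph{ferromagnetic} couplings in the $1$- and $2$-components. Writing $\langle\cdot\rangle_{\widetilde H}$ for the Gibbs state of $\widetilde H$ at inverse temperature $\beta$ and using the full $SO(3)$-invariance of $H^{\rm anti}$, one gets $\langle S_x^{(1)}S_y^{(1)}\rangle_{\widetilde H} = (-1)^{x+y}\langle S_x^{(1)}S_y^{(1)}\rangle_{\Lambda_n^{\rm per},\beta,0} = (-1)^{x+y}\langle S_x^{(3)}S_y^{(3)}\rangle_{\Lambda_n^{\rm per},\beta,0} = \sigma_{\Lambda_n^{\rm per},\beta}(x,y)$. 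Hence, with $M \eqdef \sum_{x\in\caV_n}S_x^{(1)}$, one has $\tfrac1{|\caV_n|^2}\langle M^2\rangle_{\widetilde H} = \tfrac1{|\caV_n|^2}\sum_{x,y}\sigma_{\Lambda_n^{\rm per},\beta}(x,y)$, and by \eqref{sigma bar} it suffices to bound $\langle M^2\rangle_{\widetilde H}$ from below by $c\,|\caV_n|^2$ with $c = c(\beta,d) > 0$, uniformly in $n$.

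Next I would establish that $\langle\cdot\rangle_{\widetilde H}$ is reflection positive with respect to reflections in hyperplanes of the torus (through sites and through bonds): the cross terms of $\widetilde H$ carry exactly the sign required by the quantum reflection-positivity condition --- this is the point of rotating to the ``$1,2$-ferromagnetic'' form --- while the $S^{(3)}S^{(3)}$ part is a classical nearest-neighbour interaction. From reflection positivity one derives \emph{Gaussian domination}: iterating the reflection bound gives $\Tr\e{-\beta\widetilde H(\vec h)} \le \Tr\e{-\beta\widetilde H}$, where $\widetilde H(\vec h)$ is $\widetilde H$ perturbed by a ``gradient field'' $\vec h$ on the edges. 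Differentiating this twice at $\vec h = 0$ against a plane wave and using translation invariance of the torus yields the \emph{infrared bound}
\[
\bigl(S^{(1)}(k),\,S^{(1)}(-k)\bigr)_{\widetilde H} \;\le\; \frac{C}{\beta\,\varepsilon(k)}, \qquad k\in\tfrac{2\pi}{n}\caV_n,\ k\neq0,
\]
where $(\cdot,\cdot)_{\widetilde H}$ is the Duhamel two-point function in the Gibbs state of $\widetilde H$, $S^{(1)}(k)$ is the Fourier mode \eqref{def Sk}, $\varepsilon(k)$ is the dispersion relation \eqref{def eps k}, and $C$ an explicit constant; the singularity sits at $k=0$, the ordering wavevector of $\widetilde H$.

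To conclude I would combine three ingredients. (i) The sum rule $\sum_k\langle S^{(1)}(k)S^{(1)}(-k)\rangle_{\widetilde H} = \sum_{x\in\caV_n}\langle(S_x^{(1)})^2\rangle_{\widetilde H} = \tfrac14|\caV_n|$ (using $(S_x^{(1)})^2 = \tfrac14\IdentityOperator$), whose $k=0$ term equals $\tfrac1{|\caV_n|}\langle M^2\rangle_{\widetilde H}$. (ii) A bound converting the Duhamel two-point function $b(k)$ into the symmetrized thermal one $g(k) = \langle S^{(1)}(k)S^{(1)}(-k)\rangle_{\widetilde H}$ in terms of the double commutator $c(k) = \langle[S^{(1)}(-k),[\widetilde H,S^{(1)}(k)]]\rangle_{\widetilde H}$ --- the Falk--Bruch inequality, which implies in particular $g(k) \le b(k) + \tfrac12\sqrt{\beta\,b(k)\,c(k)}$. (iii) An estimate $c(k) \le c_0\,\varepsilon(k)$, obtained by a commutator computation like the one in the proof of Theorem \ref{thm MW}, bounding the resulting nearest-neighbour correlations via $|\langle\vec S_x\cdot\vec S_y\rangle| \le \tfrac34$ (Lemma \ref{lem coupling}). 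Feeding the infrared bound and (iii) into (ii) gives $g(k) \le \tfrac{C}{\beta\,\varepsilon(k)} + \tfrac12\sqrt{C c_0}$ for $k\neq0$. Summing over $k\neq0$, dividing by $|\caV_n|$, and letting $n\to\infty$ (a Riemann-sum limit, as in the proof of Theorem \ref{thm MW}) turns $\tfrac1{|\caV_n|}\sum_{k\neq0}\varepsilon(k)^{-1}$ into the convergent integral $I_d \eqdef \tfrac1{(2\pi)^d}\int_{[-\pi,\pi]^d}\varepsilon(k)^{-1}\,\dd k$, which is finite \emph{exactly because $d\geq3$} (as $\varepsilon(k)$ vanishes quadratically at the origin). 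The sum rule then yields
\[
\liminf_{n\to\infty}\tfrac1{|\caV_n|^2}\langle M^2\rangle_{\widetilde H} \;\ge\; \tfrac14 - \tfrac12\sqrt{C c_0} - \tfrac{C}{\beta}\,I_d,
\]
and the right-hand side is strictly positive once $\beta > \beta_0(d)$, \emph{provided} $\tfrac12\sqrt{C c_0} < \tfrac14$. By the reduction of the first paragraph, this is $\overline\sigma(\beta) > 0$.

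The main obstacle is the second step: reflection positivity and Gaussian domination in the genuinely \emph{quantum}, non-commuting setting. The cross terms of $\widetilde H$ linking a half-torus to its mirror image do not commute, so the reflection bound must be run at the operator level, and the perturbation $\widetilde H(\vec h)$ has to be introduced in just the right way for $\Tr\e{-\beta\widetilde H(\vec h)}\le\Tr\e{-\beta\widetilde H}$ to survive. The second, quantitative, difficulty is the condition $\tfrac12\sqrt{C c_0} < \tfrac14$: with the crude constants above it holds only for $d\geq5$, and reaching $d=3,4$ requires the sharper a priori estimates of Neves--Perez \cite{Neves1986331} and Kennedy--Lieb--Shastry \cite{MR980167} --- essentially a better lower bound on the antiferromagnetic nearest-neighbour energy for spin $\tfrac12$, which shrinks $c_0$ --- exactly as recorded in the footnote to the statement.
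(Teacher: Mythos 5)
Your overall strategy --- reflection positivity, Gaussian domination, an infrared bound on the Duhamel two-point function, and a momentum-space sum rule --- is exactly the strategy the paper follows, and your opening reduction via the sublattice rotation is a correct (and purely cosmetic) reformulation of the paper's unrotated setup, in which the singularity of the infrared bound sits at $k=\vec\pi$ rather than $k=0$. For $d\geq5$ your argument is the classical Dyson--Lieb--Simon proof. The gap is that the theorem is stated for $d\geq3$, and your concluding step --- the sum rule $\sum_k g(k)=\tfrac14|\caV_n|$ combined with the Falk--Bruch inequality --- delivers, as you yourself concede, only $d\geq5$: the condition $\tfrac12\sqrt{Cc_0}<\tfrac14$ fails in $d=3,4$ with these constants. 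The cases $d=3,4$ are the only ones not already covered by the classical result, and they are the entire point of the paper's exposition; deferring them to the references with the gloss ``a better lower bound on the nearest-neighbour energy, which shrinks $c_0$'' both leaves the proof incomplete and misstates what is actually needed.

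Concretely, three distinct ingredients are required for $d=3,4$, none of which appears in your proposal. First, the Kennedy--Lieb--Shastry sum rule, which weights the momentum sum by $\cos k_i$ so that its value becomes the nearest-neighbour correlation $\langle S_0^{(3)}S_{e_i}^{(3)}\rangle=-e_n(\beta)/(3d)$ rather than the fixed constant $\tfrac14$; this is a different identity, not a sharpened constant in the old one. Second, the Neves--Perez form of the upper bound on the thermal two-point function, $\langle S^{(j)}(k)S^{(j)}(-k)\rangle\leq (e_n(\beta)/6d)^{1/2}\,(\varepsilon(k)/\varepsilon(k-\vec\pi))^{1/2}+3/(2\beta\varepsilon(k-\vec\pi))$ (Proposition \ref{prop Gaussian dom}(b)): its leading term is not a uniform constant $\tfrac12\sqrt{Cc_0}$ but carries the $k$-dependent ratio $(\varepsilon(k)/\varepsilon(k-\vec\pi))^{1/2}$, and it is precisely this ratio, integrated against the weight $(-\sum_i\cos k_i)_+$, that produces the small numbers $I(3)=1.049\ldots$ and $I(4)=1.017\ldots$ in \eqref{I am ugly}. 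Third, an a priori lower bound on the energy density, $\lim_{\beta\to\infty}e(\beta)\geq d/4$, which the paper derives from the Gibbs variational principle with the N\'eel trial state and which must be checked numerically against $I(d)$ to verify the sufficient condition \eqref{that is the question}. Without these three steps your argument proves the theorem only for $d\geq5$.
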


Clearly, this theorem has remarkable consequences for the loop model with weights $2^{|\caL(\omega)|}$. Indeed, there are macroscopic loops, $\eta_{\rm macro}(\beta,0) > 0$, provided that $\beta$ is large enough.

Despite many efforts and false hopes, there is no corresponding result for the Heisenberg ferromagnet, and hence for the cycle model.

The proof of Theorem \ref{thm DLS} for $d\geq5$ can be found in \cite{MR0496246} (see also \cite{MR506363,Toth} for useful clarifications). In the remainder of this section we explain how to use the observations of \cite{Neves1986331} and \cite{MR980167} in order to extend the result to dimensions $d=3$ and $d=4$. As these articles deal with ground state properties rather than positive temperatures, some modifications are needed. We warn the readers that this part of the notes is not really self-contained.

Recall the definitions of the operators $S^{(\cdot)}(k)$ in Eq.\ \eqref{def Sk}. We need the Duhamel two-point function, which is reminiscent of the Duhamel formula of Proposition \ref{prop Duhamel}.
\be
(S^{(j)}(k),S^{(j)}(-k))_{\Lambda,\beta,0} = \frac1{Z_{\Lambda}(\beta,0)} \int_{0}^{1} \! \Tr \e{-s\beta H_{\Lambda,0}} S^{(j)}(k) \e{-(1-s) \beta H_{\Lambda,0}} S^{(j)}(-k) \dd s.
\ee
Recall also the definition of $\varepsilon(k)$ in \eqref{def eps k}, and let $\vec\pi = (\pi,\dots,\pi) \in \bbR^{d}$. We have
\be
\varepsilon(k-\vec\pi) =  \sum_{i=1}^{d} (1 + \cos k_{i}).
\ee
Let $e_{n}(\beta)$ denote the negative of the mean energy per site, i.e.,
\be
e_{n}(\beta) = - \Bigl\langle \frac{H_{\Lambda,0}}{n^{d}} \Bigr\rangle_{\Lambda,\beta,0}.
\ee
One can show that $e_{n}(\beta)$ is nonnegative, increasing with respect to $\beta$, and that it converges pointwise as $n\to\infty$.

The main result of reflection positivity is the following ``Gaussian domination''.

\begin{proposition}
\label{prop Gaussian dom}
If $k \in \frac{2\pi}n \caV_{n}$ and $k \neq \vec\pi$, we have
\begin{itemize}
\item[(a)] $\displaystyle (S^{(j)}(k),S^{(j)}(-k))_{\Lambda_{n},\beta,0} \leq \frac1{2 \varepsilon(k-\vec\pi)}$,
\item[(b)] $\displaystyle \langle S^{(j)}(k) S^{(j)}(-k) \rangle_{\Lambda_{n},\beta,0} \leq \Bigl( \frac{e_{n}(\beta)}{6d} \Bigr)^{1/2} \Bigl( \frac{\varepsilon(k)}{\varepsilon(k-\vec\pi)} \Bigr)^{1/2} + \frac3{2\beta \varepsilon(k-\vec\pi)}$.
\end{itemize}
\end{proposition}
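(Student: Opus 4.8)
The plan is to derive (a) from \emph{Gaussian domination} --- the central estimate of the reflection positivity method --- and then to obtain (b) from (a) by means of the Falk--Bruch inequality. For (a): since $\Lambda_n^{\rm per}$ is bipartite with even side lengths, one first applies the standard sublattice rotation, a $\pi$-rotation of the spins on the odd sublattice about an axis perpendicular to the component $j$. This turns the staggered observable built from $S^{(j)}$ into a uniform one --- equivalently, it shifts the relevant Fourier mode from $k$ to $k-\vec\pi$, which is the origin of the $\varepsilon(k-\vec\pi)$ in the bound --- and brings $H^{\rm anti}_{\Lambda_n,0}$ into a form in which the $S^{(j)}$-part of each bond, after completing the square, appears in $\e{-\beta H}$ as a negative multiple of $(S_x^{(j)}-S_y^{(j)})^{2}$. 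One then introduces a real source field $(h_x)_{x\in\caV_n}$ entering through the differences $h_x-h_y$ inside each such square, forms the perturbed partition function $Z_{\Lambda_n}(h)=\Tr\e{-\beta\widetilde H(h)}$, and establishes \emph{Gaussian domination}, namely $Z_{\Lambda_n}(h)\le Z_{\Lambda_n}(0)$ for all real $h$, by the usual chessboard argument: one checks that $\widetilde H(h)$ is reflection positive with respect to the coordinate reflections of the torus --- it decomposes as $-C-\Theta C-\sum_i D_i\,\Theta(D_i)$, with $\Theta$ the reflection and $\{D_i\}$ a commuting family --- applies the Cauchy--Schwarz inequality attached to $\Theta$ in each of the $d$ lattice directions, and uses that $Z_{\Lambda_n}(h)$ is unchanged when $h$ is constant. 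Granting this, $h=0$ is a global maximum of the smooth function $h\mapsto Z_{\Lambda_n}(h)$, so its Hessian there is negative semidefinite; expanding $Z_{\Lambda_n}(h)$ to second order with the Duhamel formula of Proposition~\ref{prop Duhamel} (the linear term vanishes since $0$ is a critical point) and testing the Hessian against the plane wave $h_x=\cos(kx)$ produces a linear combination of $(S^{(j)}(k),S^{(j)}(-k))_{\Lambda_n,\beta,0}$ and the constant $\varepsilon(k-\vec\pi)$ coming from the $h$-quadratic part of the bond squares. That this combination is $\le 0$ is exactly (a), valid whenever $k\neq\vec\pi$ so that $\varepsilon(k-\vec\pi)>0$; the precise numerical constant is a matter of tracking the $\tfrac14$'s in $(S_x^{(j)}\pm S_y^{(j)})^2$ and the normalization of $S^{(j)}(k)$.

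For (b) one must pass from the Duhamel two-point function to the genuine thermal one, which is the content of the Falk--Bruch inequality: for an operator $A$ and self-adjoint $H$, writing $b=(A,A^*)$ for the Duhamel inner product, $c=\tfrac12\langle AA^*+A^*A\rangle$, and $g=\langle[A^*,[H,A]]\rangle$ for the double commutator --- which is non-negative, exactly as in \eqref{numero deux} --- one has $c\le\tfrac12\sqrt{bg}\,\coth\!\bigl(\tfrac12\sqrt{\beta g/b}\,\bigr)$, and hence, via $\coth y\le 1+y^{-1}$, a bound of the shape $c\lesssim\sqrt{bg}+b/\beta$. Apply this with $A=S^{(j)}(k)$. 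Then $b$ is controlled by part (a), $b\le\tfrac1{2\varepsilon(k-\vec\pi)}$, and this accounts both for the term $\tfrac3{2\beta\varepsilon(k-\vec\pi)}$ and for the factor $\varepsilon(k-\vec\pi)^{-1/2}$ in (b). The remaining ingredient is $g=\langle[S^{(j)}(-k),[H_{\Lambda_n,0},S^{(j)}(k)]]\rangle$, evaluated by a direct commutator computation of the type carried out in the proof of Theorem~\ref{thm MW}, yielding $g=\tfrac2{n^d}\sum_{\{x,y\}\in\caE_n}(1-\cos k(x-y))\,\langle(\text{local bond operator})\rangle$; bounding the bond operator uniformly (using $\|\vec S_x\cdot\vec S_y\|=\tfrac34$ and the like), grouping the edges by coordinate direction so that the weights assemble into $\varepsilon(k)$, and recognising the resulting energy density as $e_n(\beta)/d$ per direction, one finds $g$ bounded by a constant multiple of $\varepsilon(k)\,e_n(\beta)/d$; substituting into $\tfrac12\sqrt{bg}$ gives the term $\bigl(e_n(\beta)/6d\bigr)^{1/2}\bigl(\varepsilon(k)/\varepsilon(k-\vec\pi)\bigr)^{1/2}$, and (b) follows.

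The genuine obstacle is the verification that the rotated Hamiltonian $\widetilde H(h)$ is reflection positive. In contrast to the classical situation, the relevant reflection $\Theta$ combines the geometric reflection with complex conjugation in a carefully chosen basis, and one must check how each of the three bond terms transforms and confirm the decomposition $-C-\Theta C-\sum_i D_i\,\Theta(D_i)$ with the correct signs; this is precisely where the antiferromagnetic sign, the bipartiteness, and the even side lengths are essential, and it is the part of the argument that is not self-contained. The Falk--Bruch inequality itself, and the identity exhibiting $g$ as $(C',C')$ for a suitable $C'$ in the Duhamel-type inner product (whence $g\ge 0$), are the other technical pieces, but both are standard and can be quoted from \cite{MR0496246,MR506363,Toth}.
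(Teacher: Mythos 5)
Your outline reconstructs exactly the argument behind the paper's proof, which is only a two-line sketch citing Dyson--Lieb--Simon (Theorem 6.1) for the Gaussian-domination bound in (a) and Neves--Perez together with the double-commutator sum rule $\sum_{j}\bigl\langle\bigl[S^{(j)}(k),[H_{\Lambda,0},S^{(j)}(-k)]\bigr]\bigr\rangle=\tfrac4d\varepsilon(k)e_{n}(\beta)$ for the Falk--Bruch step in (b). You correctly identify both ingredients, their roles, and the genuinely delicate point (reflection positivity of the antiferromagnet with the conjugation-twisted reflection), which neither you nor the paper verifies in detail, so the proposal matches the paper's approach.
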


\begin{proof}[Sketch proof]
The claim (a) can be found in \cite{MR0496246}, Theorem 6.1. The claim (b) follows from Eqs (3), (5), and (6) of \cite{Neves1986331}, and from the relation
\be
\sum_{j=1}^{3} \bigl\langle \bigl[ S^{(j)}(k), [H_{\Lambda,0}, S^{(j)}(-k)] \bigr] \bigr\rangle_{\Lambda,\beta,0} = \tfrac4d \varepsilon(k) e_{n}(\beta).
\ee
This is Eq.\ (55) in \cite{MR0496246}.
\end{proof}

Next, let
\be
\overline\sigma_{n}(\beta) = \frac1{n^{2d}} \sum_{x,y\in\caV_{n}} (-1)^{x} (-1)^{y} \langle S_{x}^{(3)} S_{y}^{(3)} \rangle_{\Lambda,\beta,0}.
\ee
Then $\overline\sigma(\beta) = \liminf_{n} \overline\sigma_{n}(\beta)$, and the goal is to show that it differs from zero. For $k = \vec\pi$, we have
\be
\langle S^{(3)}(\vec\pi) S^{(3)}(-\vec\pi) \rangle_{\Lambda,\beta,0} = n^{d} \overline\sigma_{n}(\beta).
\ee
Kennedy, Lieb and Shastry \cite{MR980167} have proposed the following sum rule, which improves on the original one used in \cite{MR0421531,MR0496246}:
\be
\frac1{n^{d}} \sum_{k \in \frac{2\pi}n \caV_{n}} \langle S^{(3)}(k) S^{(3)}(-k) \rangle_{\Lambda,\beta,0} \cos k_{i} = \langle S_{0}^{(3)} S_{e_{i}}^{(3)} \rangle_{\Lambda,\beta,0},
\ee
where $e_{i}$ denotes the neighbor of the origin in the $i$th direction. Because of the symmetries of $\caV_{n}$ (translations and lattice rotations), we have
\be
\langle S_{0}^{(3)} S_{e_{i}}^{(3)} \rangle_{\Lambda,\beta,0} = -\frac{e_{n}(\beta)}{3d}.
\ee
The sum rule can be rewritten as
\be
\frac{e_{n}(\beta)}{3d} = \overline\sigma_{n}(\beta) + \frac1{d n^{d}} \sumtwo{k \in \frac{2\pi}n \caV_{n}}{k \neq \vec\pi} \langle S^{(3)}(k) S^{(3)}(-k) \rangle_{\Lambda,\beta,0} \Bigl( -\sum_{i=1}^{d} \cos k_{i} \Bigr).
\ee
By Proposition \ref{prop Gaussian dom} (b), we have
\be
\begin{split}
\frac{e_{n}(\beta)}{3d} \leq \overline\sigma_{n}(\beta) &+ \frac{e_{n}(\beta)^{1/2}}{(6d)^{1/2} d} \frac1{n^{d}} \sumtwo{k \in \frac{2\pi}n \caV_{n}}{k \neq \vec\pi} \Bigl( \frac{\varepsilon(k)}{\varepsilon(k-\vec\pi)} \Bigr)^{1/2} \Bigl( -\sum_{i=1}^{d} \cos k_{i} \Bigr)_{+} \\
&+ \frac3{2d\beta} \frac1{n^{d}} \sumtwo{k \in \frac{2\pi}n \caV_{n}}{k \neq \vec\pi} \frac1{\varepsilon(k-\vec\pi)} \Bigl( -\sum_{i=1}^{d} \cos k_{i} \Bigr)_{+}.
\end{split}
\ee
As $n\to\infty$, with $e(\beta) = \lim_{n} e_{n}(\beta)$, we get
\be
\label{I am ugly}
\frac{e(\beta)}{3d} \leq \overline\sigma(\beta) + \frac{e(\beta)^{1/2}}{(6d)^{1/2} d} \, I(d) + \frac3{2d\beta} \frac1{(2\pi)^{d}} \int_{[-\pi,\pi]^{d}} \frac1{\varepsilon(k-\vec\pi)} \Bigl( -\sum_{i=1}^{d} \cos k_{i} \Bigr)_{+} \dd k,
\ee
where
\be
I(d) = \frac1{(2\pi)^{d}} \int_{[-\pi,\pi]^{d}} \Bigl( \frac{\varepsilon(k)}{\varepsilon(k-\vec\pi)} \Bigr)^{1/2} \Bigl( -\sum_{i=1}^{d} \cos k_{i} \Bigr)_{+} \dd k.
\ee
The last integral in \eqref{I am ugly} is finite when $d\geq3$, and this term may be made arbitrarily small by choosing $\beta$ large enough. It follows that a sufficient condition for $\overline\sigma(\beta) > 0$ for large enough $\beta$, is that
\be
\label{that is the question}
\lim_{\beta\to\infty} \frac{e(\beta)^{1/2}}{3d} > \frac1{(6d)^{1/2} d} \, I(d).
\ee
The integral $I(d)$ can be calculated numerically: $I(3) = 1.04968...$ and $I(4) = 1.01754...$ It is then enough to show that $\lim_{\beta\to\infty} e(\beta) > 0.5509...$ in $d=3$ and $\lim_{\beta\to\infty} e(\beta) > 0.3883...$ in $d=4$. The following lemma allows us to conclude that long-range order indeed takes place in $d=3$ and $d=4$.

\begin{lemma}
\[
\lim_{\beta\to\infty} e(\beta) \geq \frac d4.
\]
\end{lemma}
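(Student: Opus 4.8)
The plan is to use that $e_n(\beta)$ is nondecreasing in $\beta$ with limit equal to minus the ground-state energy per site of $H^{\rm anti}_{\Lambda_n,0}$, so that the desired inequality is \emph{morally} a trial-state estimate for that ground-state energy, carried out with the classical N\'eel configuration. To turn this into a clean proof --- and in particular to interchange the limits $n\to\infty$ and $\beta\to\infty$ without fuss --- I would route the argument through the free energy and the concavity of $\beta\mapsto\beta F_{\Lambda_n}(\beta,0)$.

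\textbf{Step 1: the N\'eel trial state.} Since the side lengths of $\Lambda_n$ are even, $\caV_n$ splits into sublattices $\caV_{\rm A},\caV_{\rm B}$ with every edge joining the two. Put $\psi_n = \bigl(\bigotimes_{x\in\caV_{\rm A}}|\tfrac12\rangle_x\bigr)\otimes\bigl(\bigotimes_{x\in\caV_{\rm B}}|-\tfrac12\rangle_x\bigr)$. For each $\{x,y\}\in\caE_n$ the operators $S^{(1)}_xS^{(1)}_y$ and $S^{(2)}_xS^{(2)}_y$ flip both spins and so have zero diagonal matrix element in the state $\psi_n$, whereas $S^{(3)}_xS^{(3)}_y$ contributes $\tfrac12\cdot(-\tfrac12)=-\tfrac14$ since the endpoints carry opposite spins; hence $\langle\psi_n|\vec S_x\cdot\vec S_y|\psi_n\rangle=-\tfrac14$. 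As $|\caE_n|=dn^d$,
\[
\langle\psi_n\,|\,H^{\rm anti}_{\Lambda_n,0}\,|\,\psi_n\rangle = -\tfrac14|\caE_n| = -\tfrac d4\,n^d .
\]
By Jensen's inequality applied in the eigenbasis of $H^{\rm anti}_{\Lambda_n,0}$,
\[
Z_{\Lambda_n}(\beta,0) = \Tr\e{-\beta H^{\rm anti}_{\Lambda_n,0}} \;\geq\; \langle\psi_n|\e{-\beta H^{\rm anti}_{\Lambda_n,0}}|\psi_n\rangle \;\geq\; \e{-\beta\langle\psi_n|H^{\rm anti}_{\Lambda_n,0}|\psi_n\rangle} \;=\; \e{\beta dn^d/4},
\]
so that $F_{\Lambda_n}(\beta,0)\leq-\tfrac d4\,n^d$ for every $\beta>0$.

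\textbf{Step 2: concavity bound on the energy.} Set $g_n(\beta)=\beta F_{\Lambda_n}(\beta,0)=-\log Z_{\Lambda_n}(\beta,0)$. This is concave in $\beta$, being the restriction to $h=0$ of the function shown earlier to be concave in $(\beta,\beta h)$; it is smooth with $g_n'(\beta)=\langle H^{\rm anti}_{\Lambda_n,0}\rangle_{\Lambda_n,\beta,0}$, and $g_n(0)=-\log 2^{n^d}=-n^d\log 2$. Since $g_n'$ is nonincreasing, $g_n'(\beta)\leq\tfrac1\beta(g_n(\beta)-g_n(0))$, that is,
\[
\langle H^{\rm anti}_{\Lambda_n,0}\rangle_{\Lambda_n,\beta,0} \;\leq\; F_{\Lambda_n}(\beta,0) + \frac{n^d\log 2}{\beta} \;\leq\; -\frac d4\,n^d + \frac{n^d\log 2}{\beta}.
\]
Dividing by $-n^d$ gives $e_n(\beta)\geq\tfrac d4-\tfrac{\log 2}{\beta}$, a bound uniform in $n$.

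\textbf{Conclusion.} Letting $n\to\infty$ yields $e(\beta)\geq\tfrac d4-\tfrac{\log 2}{\beta}$ for all $\beta>0$, and then $\beta\to\infty$ gives $\lim_{\beta\to\infty}e(\beta)\geq\tfrac d4$. The energy computation for the N\'eel state and the Jensen step are routine; the only genuine subtlety is the order of the two limits, and this is precisely what the $n$-uniform estimate of Step~2 disposes of.
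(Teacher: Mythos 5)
Your argument is correct and is essentially the paper's: both rest on the N\'eel trial state bound $F_{\Lambda_n}(\beta,0)\leq -\tfrac d4 n^d$ together with an entropy-type correction of size $n^d\log 2/\beta$, yielding the $n$-uniform estimate $e_n(\beta)\geq \tfrac d4-\tfrac{\log 2}{\beta}$. The only (cosmetic) differences are that you obtain the trial-state bound via Jensen/Peierls--Bogoliubov rather than by invoking the Gibbs variational principle with the rank-one projector, and you extract the correction term from concavity of $\beta\mapsto\beta F_{\Lambda_n}(\beta,0)$ rather than from the bound $S_\Lambda(\rho)\leq|\caV|\log 2$ --- two routes to the same inequality.
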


\begin{proof}
The Gibbs variational principle states that
\be
\label{Gibbs principle}
F_{\Lambda}(\beta,h) \leq \Tr \rho H_{\Lambda,h} - \frac1\beta S_{\Lambda}(\rho)
\ee
for any operator $\rho$ in $\caH^{(\caV)}$ such that $\rho\geq0$ and $\Tr \rho = 1$. Here, $S_{\Lambda}$ is the Boltzmann entropy,
\be
S_{\Lambda}(\rho) = -\Tr \rho \log \rho.
\ee
See e.g.\ Proposition IV.2.5 in \cite{MR1239893} (the setting in \cite{MR1239893} involves a normalized trace, hence there are a few discrepancies between our formul\ae \  and those in the book).
It is known that the Gibbs state $\rho = Z_{\Lambda}(\beta,h)^{-1} \e{-\beta H_{\Lambda,h}}$ saturates the inequality, and that the entropy satisfies the bounds
\be
0 \leq S_{\Lambda}(\rho) \leq |\caV| \log2.
\ee
It follows that
\be
e(\beta) \geq -f(\beta,0) - \frac{\log2}\beta.
\ee
In order to get a bound for the free energy, we use \eqref{Gibbs principle} with the N\'eel state $\Psi_{\text{N\'eel}}$ as a trial state,
\be
\Psi_{\text{N\'eel}} = \bigotimes_{x\in\Lambda_{n}} \bigl| (-1)^{x} \tfrac12 \bigr\rangle.
\ee
With $\rho$ the projector onto $\Psi_{\text{N\'eel}}$, we have $S_{\Lambda_{n}}(\rho) = 0$, and
\be
\begin{split}
F_{\Lambda_{n}}(\beta,0) &\leq \langle \Psi_{\text{N\'eel}}, H_{\Lambda_{n},0} \Psi_{\text{N\'eel}} \rangle \\
&= d n^{d} \langle \tfrac12, -\tfrac12| \vec S_{x} \cdot \vec S_{y} |\tfrac12, -\tfrac12 \rangle.
\end{split}
\ee
The last inner product is in $\caH_{x} \otimes \caH_{y}$.
Using \eqref{coupling vs transposition}, we find that it is equal to $-\frac14$.
\end{proof}

These results do not apply to dimension $d=2$ because the last integral in \eqref{I am ugly} is divergent. We already know that the magnetization is zero for all finite values of $\beta$ by the Mermin-Wagner theorem. An important question, which remains open to this day, is whether long-range order occurs in the ground state of the two-dimensional antiferromagnet. The last integral in \eqref{I am ugly} disappears if the limit $\beta\to\infty$ is taken before the infinite volume limit, and the question is whether \eqref{that is the question} is true. Since $I(2) = 1.29361...$ one needs $\lim_{\beta\to\infty} e(\beta) > 1.255...$. But the limit is expected to be around 0.67 \cite{MR980167} and so the method does not apply.

In contrast to the antiferromagnet, the ground state of the ferromagnet is trivial with full magnetization. If $\beta$ is taken to infinity in the cycle model for a fixed graph, the spatial structure is lost and the resulting random permutation has Ewens distribution (that is, it is weighted by $2^{|\caC|}$). Almost all vertices belong to macroscopic cycles and the cycle lengths are distributed according to the Poisson-Dirichlet distribution PD$_{2}$.

\section{Rigorous results for cycle and loop models}

The cycle and loop representations in Theorems \ref{thm Toth} and \ref{thm AN} are interesting in 
their own right and can be studied using purely probabilistic techniques.  Without
the physical motivation, the external magnetic field is less relevant and 
more of an annoyance.  We prefer to switch it off.  The models in this simpler
situation are defined below, with the small generalization that the geometric 
weight on the number of cycles or loops is arbitrary.  This is analogous to how,
for example, one obtains the random cluster or Fortuin-Kasteleyn representation from the Ising model.  

\subsection{Cycle and loop models}\label{s:cycleloopmodel}

As usual we suppose that $\Lambda = (\caV,\caE)$ is a finite undirected graph.  Recall that the 
Poisson edge measure $\rho_{\caE,\beta}$ is obtained by attaching independent Poisson point processes on $[0,\beta]$ to each edge of $\caE$.  

For each realization $\omega$ of the Poisson edge process, we define cycles $\Cycles(\omega)$ and loops $\Loops(\omega)$ 
as in  \S\ref{sec Poisson conf}.  
% are defined on the set $\caV \times [0,\beta]_{\rm per}$ and the measures of interest are
The random cycle and loop models are obtained via a change of measure 
in which the number of cycles or loops receives a geometric weight $\vartheta > 0$.  
That is, the probability measures of interest are
\be
\label{nice measures}
\begin{split}
&\bbP_{\Lambda,\beta}^{\rm cycles}(\dd\omega) = Z_{\Lambda}^{\rm cycles}(\beta)^{-1} \vartheta^{|\caC(\omega)|} \rho_{\caE,\beta}(\dd\omega), \\
&\bbP_{\Lambda,\beta}^{\rm loops}(\dd\omega) = Z_{\Lambda}^{\rm loops}(\beta)^{-1} \vartheta^{|\caL(\omega)|} \rho_{\caE,\beta}(\dd\omega),
\end{split}
\ee
where $Z_{\Lambda}^{\cdots}(\beta)$ are the appropriate normalizations.
%As in Section \ref{sec Poisson conf}, $\rho_{\caE,\beta}$ is a product of independent Poisson measures on the edges of $\caE$, $\caC(\omega)$ and $\caL(\omega)$ are the sets of cycles and loops of the configuration $\omega$, and $Z_{\Lambda}^{\cdot}(\beta)$ are the normalizations. 
As remarked above, $\vartheta = 2$ is the physically relevant choice in both these measures.
%For $\vartheta=2$ these measures coincide with the representations of the Heisenberg ferromagnet (Section \ref{sec Toth}) and antiferromagnet (Section \ref{sec AN}) in absence of external magnetic field.

The main question deals with the possible occurrence of cycles or loops of diverging lengths. Recall the definitions of the fraction of vertices in infinite cycles/loops, $\eta_{\infty}(\beta)$, and the fraction of vertices in macroscopic cycles/loops, \linebreak $\eta_{\rm macro}(\beta)$, which were defined in Section \ref{sec prob phase transitions}. (We drop the dependence in $h$, since $h=0$ here.) In the case where the graph is a cubic box in $\bbZ^{d}$ with periodic boundary conditions, and $\vartheta=2$, the Mermin-Wagner theorem rules out infinite cycles in one and two dimensions, and the theorem of Dyson-Lieb-Simon shows that macroscopic loops are present in $d\geq3$, provided that the parameter $\beta$ is sufficiently large.

It is intuitively clear that there cannot be infinite cycles or loops when $\beta$ is small.  In Section \ref{sec high T} we prove this is indeed the case and give an explicit lower bound on the critical value of $\beta$.

The model for $\vartheta=1$ is known as random stirring or the interchange process.  The question of the existence of infinite cycles in this setting has been considered by several authors.  Angel considered the model on regular trees, and proved the existence of infinite cycles (for $\beta$ lying in an appropriate interval) when the degree of the tree is larger than 5 \cite{MR2042369}.  Schramm considered the model on the complete graph and obtained a fairly precise description of the asymptotic cycle length distribution \cite{MR2166362}.  We review this important result in Section \ref{s:complete}.  Recently, Alon and Kozma found a surprising formula for the probability that the permutation is cyclic, using representation theory \cite{2010arXiv1009.3723A}.

\subsection{No infinite cycles at high temperatures}
\label{sec high T}

We consider general graphs $\Lambda = (\caV,\caE)$. We let $\kappa$ denote the maximal degree of the graph, i.e., $\kappa = \sup_{x \in \caV} |\{ y : \{x,y\} \in \caE \}|$. Recall that $L(\gamma_{x})$ denotes the length of the cycle or loop that contains $x \times \{0\}$.  Let $a$ be the small parameter
\be
a = \begin{cases} \vartheta^{-1} (1-\e{-\beta}) & \text{if } \vartheta \leq 1, \\ 1-\e{-\beta} & \text{if } \vartheta \geq 1. \end{cases}
\ee
in the case of cycles and 
\be
a = \begin{cases} \vartheta^{-1} (1-\e{-\beta}) & \text{if } \vartheta \leq 1, \\ \e{-\beta} (\e{\beta\vartheta} - 1) & \text{if } \vartheta \geq 1. \end{cases}
\ee
in the case of loops.

\begin{theorem}\label{t:hightempexpdecay}
For either the cycle or the loop model, i.e., for either measure in \eqref{nice measures}, we have
\[
\bbP_{\Lambda,\beta}(L(\gamma_x) > \beta k) \leq (a (\kappa-1))^{-1} [a \kappa (1 - \tfrac1\kappa)^{-\kappa+1}]^k.
\]
for every $x \in \caV$.
\end{theorem}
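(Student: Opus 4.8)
The plan is to show that a long cycle or loop through $(x,0)$ forces many edges of $\Lambda$ to carry bridges, which is exponentially unlikely. First I would set up a geometric reduction. Following the trajectory of $\gamma_x$ starting from $(x,0)$, each bridge it meets sends it onto a neighbouring vertex, so the trajectory splits into ``legs'' --- maximal time-intervals spent at a single vertex, delimited by bridges --- and $L(\gamma_x)$ is the sum of the leg lengths. A leg cannot span more than one period, so each has length at most $\beta$; hence $\{L(\gamma_x)>\beta k\}$ forces $\gamma_x$ to consist of more than $k$ legs, i.e.\ to cross more than $k$ bridges before the trajectory first returns to $(x,0)$. The cycle and loop cases are treated the same way (using the A/B orientation convention for loops; for cycles $L(\gamma_x)/\beta$ is simply the length of the cycle of $x$ in the time-$0$ permutation).

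Next I would bound the probability that $\gamma_x$ has more than $k$ legs by exploring the legs one at a time, revealing at each step only the bridges needed to determine the current leg and where it exits. This produces a walk $x = x_0, x_1, x_2, \dots$ in $\Lambda$ (consecutive vertices joined by an edge, repetitions allowed), with a bounded number of choices --- at most $\kappa$ --- for each new vertex, and the conditional probability that the next leg exists and exits onto a given neighbour is controlled by the Poisson structure: even when a vertex recurs, the bridges on its edges not yet examined still form a fresh unit-rate process, so finding the required bridge has probability at most $1-\e{-\beta}$. Under the unweighted measure $\rho_{\caE,\beta}$ this yields a bound of the form $\kappa(\kappa-1)^{k-1}(1-\e{-\beta})^{k}$. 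To pass to the measures of \eqref{nice measures} I would compare with $\rho_{\caE,\beta}$: conditioning on prescribed edges carrying bridges changes $\vartheta^{|\caC(\omega)|}$ (resp.\ $\vartheta^{|\caL(\omega)|}$) by at most a controlled power of $\vartheta$, because adding or deleting a single bridge changes $|\caC|$ (resp.\ $|\caL|$) by exactly one. Carrying out this accounting --- together with a geometric sum over the bridge configurations on the (at most $\kappa-1$) other incident edges at each visited vertex, and, when $\vartheta\ge1$, a sum over how many bridges sit on each relevant edge --- replaces $1-\e{-\beta}$ by $a\,(1-\tfrac1\kappa)^{-\kappa+1}$ with $a$ as in the statement, and a final rearrangement of the powers of $\kappa$ and $\kappa-1$ gives the stated bound.

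The main obstacle is this transfer from $\rho_{\caE,\beta}$ to the $\vartheta$-reweighted measures. For $\vartheta\le1$ one can crudely lose a factor $\vartheta^{-1}$ per conditioned edge --- this is the $\vartheta^{-1}(1-\e{-\beta})$ in $a$ --- but for $\vartheta>1$ that is useless, and one must instead exploit that inserting a bridge on a ``fresh'' edge of the exploration tends to \emph{merge} two cycles (resp.\ loops), so the reweighting does not inflate the conditional bridge probabilities; moreover this must be done sharply enough to reach $a=1-\e{-\beta}$ for cycles but only $a=\e{-\beta}(\e{\beta\vartheta}-1)$ for loops, the discrepancy reflecting the different effect of repeated bridges on one edge on $|\caC|$ versus $|\caL|$. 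Making these comparisons quantitative (not merely qualitative), and keeping the leg-by-leg exploration honest when vertices recur, is where the real work lies; the reduction to counting legs and the final geometric summation are routine.
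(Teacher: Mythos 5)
Your reduction to counting ``legs'' and then exploring the trajectory bridge by bridge is a genuinely different route from the paper's, which instead works with the percolation cluster $C_x(\omega)=(V_x,E_x)$ of edges carrying at least one bridge, uses the deterministic bound $L(\gamma_x)\le\beta|V_x|$ together with Markov's inequality for $\alpha^{|V_x|}$, and closes with a generating-function induction over connected subgraphs of bounded depth. Your route could conceivably be made to work, but as written it has gaps located exactly where you yourself say ``the real work lies'' --- which means the proof is not there.

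The first and most serious gap is the transfer from $\rho_{\caE,\beta}$ to the weighted measures. The weight $\vartheta^{|\caC(\omega)|}$ is a global functional of the configuration, so the conditional probability that the next leg finds a bridge, given what the exploration has revealed, is not controlled by the unweighted Poisson structure plus ``a factor of $\vartheta^{\pm1}$ per bridge'': it depends on the whole unexplored configuration through the normalization. The paper resolves this by a global comparison: it defines $\phi(G')=\vartheta^{-|V'|}\int\Indi{[G(\omega)=G']}\vartheta^{|\caC(\omega)|}\dd\rho_{E',\beta}(\omega)$, writes the weighted expectation as in \eqref{e:expectationintermsofphi}, and shows the ratio of partition sums there is at most $1$ by injecting each exterior configuration $G'$ compatible with the cluster into a full configuration $G''$ of equal weight (the cluster's vertices are added as isolated points, each a single length-$\beta$ cycle). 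It then proves $\phi(G')\le a^{|E'|}$ using $|E'|\ge|V'|-1$ when $\vartheta\le1$, $|\caC(\omega)|\le|V'|$ for cycles when $\vartheta>1$, and $|\caL(\omega)|\le|V'|+|\omega|$ for loops when $\vartheta>1$ (this last inequality, which rests on Lemma \ref{l:addremswap}, is precisely what produces $a=\e{-\beta}(\e{\beta\vartheta}-1)$ in the loop case). None of these inequalities, nor any substitute for the ratio argument, appears in your proposal. A second gap is that the leg-by-leg exploration is not sound as described even under $\rho_{\caE,\beta}$: a single cycle may traverse the same bridge twice, and may string many consecutive legs across bridges lying on one and the same edge (bouncing between two vertices), so ``more than $k$ legs'' does not yield $k$ fresh independent edge events and the per-step factor $(\kappa-1)(1-\e{-\beta})$ does not follow. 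Relatedly, your walk count $\kappa(\kappa-1)^{k-1}$ does not rearrange into the stated $[a\kappa(1-\frac1\kappa)^{-\kappa+1}]^k$; that expression arises from the paper's induction $B(\ell+1)\le\alpha(1+ab)^{\kappa}$ over subtrees of the cluster and the optimal choice $b=\alpha(1-\frac1\kappa)^{-\kappa}$, a computation your sketch does not reproduce.
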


Of course, the theorem is useful only if the right-hand side is less than 1, in which case large cycles have exponentially small probability. This result is pretty reasonable on the square lattice with $\vartheta\leq1$. When $\vartheta>1$, configurations with many cycles are favored, and the domain should allow for larger $\beta$. Our condition does not show it. The case $\vartheta\gg1$ is close to the situation treated in \cite{MR1749392} with phases of closely packed loops. In the case of the complete graph on $N$ vertices and $\vartheta=1$, the maximal degree is $\kappa = N-1$ and the optimal condition is $\beta < 1/N$ (Erd\H{o}s-R\'enyi, \cite{MR0125031}). Using $a\kappa \leq \beta N$ and $(1 - \tfrac1\kappa)^{-\kappa+1} \leq \e{}$, we see that our condition is off by a factor of $\e{}$.

As a consequence of the theorem, we have $\eta_{\infty}(\beta)=0$ for small enough $\beta$. This implies that $m^{*}_{\rm sp}(\beta) = \overline\sigma(\beta) = 0$ in the corresponding Heisenberg ferromagnet and antiferromagnet. One could extend the claim so that $m^{*}_{\rm th}(\beta) = 0$ as well.

\begin{proof}
Given $\omega$, let $G(\omega) = (V,E)$ denote the subgraph of $\Lambda$ with edges
\be
E = \{ e_{i}: (e_{i},t_{i}) \in \omega \},
\ee
and $V = \cup_{i} e_{i}$ the set of vertices that belong to at least one edge. $G(\omega)$ can be viewed as the percolation graph of $\omega$, where an edge $e$ is open if at least one bridge of the form $(e,t)$ occurs in $\omega$. Then we denote $C_{x}(\omega) = (V_{x},E_{x})$ the connected component of $G(\omega)$ that contains $x$. It is clear that $L(\gamma_{x}) \leq \beta |V_{x}|$ for both cycles and loops. Then, using Markov's inequality,
\be
\bbP_{\Lambda,\beta}(L(\gamma_x) > \beta k) \leq \bbP_{\Lambda,\beta}(|V_x| > k) \leq \alpha^{-k} \, \bbE_{\Lambda,\beta}(\alpha^{|V_x|}),
\ee
for any $\alpha \geq 1$. 

We consider first the case of cycles.  Given a subgraph $G' = (V',E')$ of $\Lambda$, let
\be
\label{def phi}
\phi(G') = \vartheta^{-|V'|} \int \Indi{[G(\omega)=G']} \vartheta^{|\caC(\omega)|} \dd\rho_{E',\beta}(\omega).
\ee

By partitioning $\Omega$ according to the connected components of $G(\omega)$, then using the fact 
that $\rho_{\caE,\beta}$ is a product measure over edges and that cycles are contained entirely 
within connected components, we have
\be \label{e:expectationintermsofphi}
 \bbE^{\rm cycles}_{\Lambda,\beta}(\alpha^{|V_x'|}) = \sum_{C_x'} \phi(C_x') \alpha^{|V_x'|} 
 \frac{\sum_{G' \cap C_x' = \emptyset} \phi(G')}{\sum_{G''} \phi(G'')}
\ee

The first sum is over connected subgraphs $C_x' = (V_x',E_x')$ of $\Lambda$ that contain $x$.  The second sum is over
subgraphs $G' = (V',E')$ that are compatible with $C_x'$, in the sense that $V' \cap V_x' = \emptyset$ and $V' \cup C_x' = \caV$.  The sum in the denominator is over all subgraphs $G'' = (V'',E'')$ with $V'' = \caV$.  

Notice that for any $C_x'$, the corresponding compatible graph $G' = (V', E')$ can be enlarged to $G'' = (\caV, E')$ 
by adding the vertices from $V_x'$.  The new vertices from $V_x'$ are all disconnected in $G''$.  Thus, if $G(\omega) = G''$, 
each vertex in $V_x'$ necessarily forms a single cycle of length 1.  It follows that $\phi(G'') = \phi(G')$.  Furthermore,
different $G'$ give rise to different $G''$.  So, the ratio in \eqref{e:expectationintermsofphi} 
is less than 1.

Now we claim that
\be \label{e:phiboundaE}
\phi(G') \leq a^{|E'|}
\ee 
for any connected $G'$.  First consider $\vartheta \leq 1$.
Since $G'$ is connected we have $|E'| \geq |V'| - 1$.  So, 
$\vartheta^{-|V'| + |\Cycles(\omega)|} \leq \vartheta^{-|V'| + 1} \leq \vartheta^{-|E'|}$ for any $\omega$.
When $\vartheta > 1$, 
use $|\Cycles(\omega)| \leq |V'|$ to see $\vartheta^{-|V'| + |\Cycles(\omega)|} \leq 1$.

On the other hand, $G(\omega)=G'$ holds if and only if the Poisson process for each edge of $G'$ contains at least one point.  So,
\be 
\int \Indi{[G(\omega)=G']} \dd\rho_{E',\beta}(\omega) = (1-\e{-\beta})^{|E'|},
\ee
and \eqref{e:phiboundaE} follows in the case of cycles.

%Since the Poisson processes for each edge are independent, $\rho_{E',\beta}(G = G') = (1-\e{-\beta})^{|E'|}$.

The same bound also holds for the loop model when $\vartheta^{|\caC(\omega)|}$ is replaced by $\vartheta^{|\caL(\omega)|}$ in \eqref{def phi}.  For $\vartheta \leq 1$ the argument is the same as before.  
For $\vartheta > 1$, we use the inequality $|\Loops(\omega)| \leq |V'| + |\omega|$ that holds for any $\omega$, where $|\omega|$ is the number of bridges in $\omega$.  
This follows from the fact that each bridge in $\omega$ either splits a loop into two or merges two loops (see Lemma \ref{l:addremswap}), and that $|\Loops(\omega)| = |V'|$ when $\omega = \emptyset$.  Hence,
\be
\phi(G') \leq \int \vartheta^{|\omega|} \Indi{[G(\omega)=G']} \dd\rho_{E',\beta}(\omega) = \Bigl( \e{-\beta} \sum_{n=1}^{\infty} \frac{(\vartheta\beta)^{n}}{n!} \Bigr)^{|E'|},
\ee
which gives the bound \eqref{e:phiboundaE} for loops.

%One can check that
%\be
%\phi(G') \leq \begin{cases} [\vartheta^{-1} (1-\e{-\beta})]^{|E'|} & \text{if } \vartheta \leq 1, \\ (1-\e{-\beta})^{|E'|} & \text{if } \vartheta \geq 1. \end{cases}
%\ee

%Summing over all possible connected graphs $C_{x}' = (V_{x}', E_{x}')$ that contain $x$, and then over compatible subgraphs, we have
Combining \eqref{e:expectationintermsofphi} and \eqref{e:phiboundaE} shows that for either loops or cycles,
\be
\bbE_{\Lambda,\beta}(\alpha^{|V_x|}) = \sum_{C_x'} \phi(C_x') \alpha^{|V_x'|} \frac{\sum_{G' \cap C_x' = \emptyset} \phi(G')}{\sum_{G'} \phi(G')} \leq \sum_{C_x'} \alpha^{|V_{x}'|} a^{|E_{x}'|}.
\ee
Let $\delta(C_{x}')$ denote the ``depth'' of the connected graph $C_{x}'$, i.e., the minimal number of edges of $E_{x}'$ that must be crossed in order to reach any point of $V_{x}'$. Let
\be
B(\ell) = \sum_{C_x', \delta(C_{x}') \leq \ell} \alpha^{|V_{x}'|} a^{|E_x'|}.
\ee
We want an upper bound for $B(\ell)$ for any $\ell$. We show by induction that $B(\ell) \leq b$ for a number $b$ to be determined shortly. We proceed by induction on $\ell$. The case $\ell=0$ is $\alpha \leq b$. For $\ell+1$, we write the sum over graphs with depth less than $\ell+1$, attached at $x$, as a sum over graphs of depth less than $\ell$, attached at neighbors of $x$. Neglecting overlaps gives the following upper bound:
\be
\begin{split}
B(\ell+1) &\leq \alpha \prod_{y : \{x,y\} \in \caE} \Bigl( 1 + a \sum_{C_y', \delta(C_{y}') \leq \ell} \alpha^{|V_{y}'|} a^{|E_y'|} \Bigr) \\
&\leq \alpha (1 + ab)^\kappa.
\end{split}
\ee
This needs to be less than $b$; this condition can be written $a \leq b^{-1} ((b/\alpha)^{1/\kappa} - 1)$. The optimal choice that maximizes the possible values of $a$ is $b = \alpha (1-\frac1\kappa)^{-\kappa}$. A sufficient condition is then
\be
a \leq \tfrac1{\alpha\kappa} (1 - \tfrac1\kappa)^{\kappa-1}
\ee
We have obtained that
\be
\bbP_{\Lambda,\beta}(L(\gamma_{x}) > \beta k) \leq \alpha^{-k+1} (1-\tfrac1\kappa)^{-\kappa},
\ee
and this holds for all $1 \leq \alpha \leq \frac1{a\kappa} (1-\tfrac1\kappa)^{\kappa-1}$. Choosing the maximal value for $\alpha$, we get the bound of the theorem.
\end{proof}

\subsection{Rigorous results for the complete graph}
\label{s:complete}

%The cycle model with $\vartheta = 1$ also is known as random stirring or the interchange process.
% and has been studied 
%by several researchers.  
 
%Angel \cite{MR2042369} proved that there are infinite cycles when the graph is 
%a regular tree with vertex degree $d \geq 5$.

Suppose $T_{1}, T_{2}, T_{3}, \ldots$ are independent random transpositions of pairs of elements of $\{1,2, \ldots, n\}$ 
and $\pi_{k} = T_{1} \circ T_{2} \circ \ldots \circ T_{k}$.  Write $\lambda(\pi_{k})$ 
for the vector of cycle lengths in $\pi_{k}$, sorted into decreasing order.  So, $\lambda_{i}(\pi_{k})$
is the size of the $i^{th}$ largest cycle and if there are fewer than $i$ cycles in $\pi_{k}$, 
we take $\lambda_{i}(\pi_{k}) = 0$. 

Note the simple connection between cycles here and the cycles in our model; if $N$ is a Poisson random 
variable with mean $\beta n(n-1)/2$, independent of the $T_{i}$, then
$\lambda(\pi_{N})$ has exactly the distribution of the ordered cycle lengths in $\Cycles$ under $\rho_{K_{n},\beta}$, 
where $K_{n}$ is the complete graph with $n$ vertices.

% Schramm \cite{Sch05} proved that 
Schramm proved that for $c > 1/2$, an asymptotic fraction $\eta_{\infty} = \eta_{\infty}(2c)$ of elements from $\{1,2,\ldots,n\}$ lie in 
infinite cycles of $\pi_{\lfloor cn \rfloor}$ as $n \to \infty$.  The (non-random) fraction $\eta_{\infty}(2c)$ 
turns out to be the asymptotic fraction of vertices lying in the giant component of the Erd\H{o}s-R\'enyi 
random graph with edge probability $c/n$.  Equivalently, $\eta_{\infty}(s)$ is the survival probability for a Galton-Watson branching
process with Poisson offspring distribution with mean $s$.  Berestycki \cite{Ber10} proved a similar result.

Furthermore, Schramm also showed that the normalised cycle lengths converge to the Poisson-Dirichlet(1) distribution.  

% More precisely, $C(\pi_{\lfoor cn \floor})/n \eta(2c)$ converges to a point 

% Let $\eta(c)$ be the survival
\begin{theorem}[Schramm \cite{MR2166362}]
Let $c > 1/2$.  The law of $\lambda(\pi_{\lfloor cn \rfloor})/(n\eta_{\infty}(2c))$ converges weakly to $\PDlaw_{1}$ as $n \to \infty$.
\end{theorem}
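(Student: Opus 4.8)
The plan is to split the cycle structure of $\pi_{\lfloor cn\rfloor}$ into ``small'' cycles, which carry no mass after rescaling by $n$, and ``large'' cycles, which live inside the Erd\H{o}s--R\'enyi giant; to recognise that the large cycles evolve under an approximate uniform split--merge dynamics; and then to use the fact that $\PDlaw_{1}$ is the unique invariant law of the split--merge process and is reached in bounded (rescaled) time. For the first step, I would introduce the multigraph $G_{k}$ on $\{1,\dots,n\}$ with edge set $\{T_{1},\dots,T_{k}\}$ and note two things: each cycle of $\pi_{k}$ is contained in a single connected component of $G_{k}$ (since $\pi_{k}$ lies in the group generated by $T_{1},\dots,T_{k}$, whose orbits are the components of $G_{k}$), and $G_{\lfloor cn\rfloor}$ is, up to edge multiplicities, an Erd\H{o}s--R\'enyi graph of mean degree $\approx 2c>1$. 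Standard random-graph estimates then give, with probability tending to $1$, a unique giant component $\mathcal{G}_{n}$ with $|\mathcal{G}_{n}|/n\to\eta_{\infty}(2c)$ and all other components of size $O(\log n)$. Hence every cycle of size $\geq \delta n$ is contained in $\mathcal{G}_{n}$, the cycles outside $\mathcal{G}_{n}$ are negligible in the weak limit of $\lambda(\pi_{\lfloor cn\rfloor})/(n\eta_{\infty}(2c))$, and it remains to identify the limit of the ordered sizes of the cycles \emph{inside} $\mathcal{G}_{n}$, rescaled by $|\mathcal{G}_{n}|$.

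\emph{Effective split--merge.} Next I would fix $c_{0}\in(1/2,c)$, condition on $\pi_{\lfloor c_{0}n\rfloor}$ and on $\mathcal{G}_{n}$, and analyse the transpositions $T_{\lfloor c_{0}n\rfloor+1},\dots,T_{\lfloor cn\rfloor}$. Applying $T_{k+1}=(ij)$ merges the cycles of $i$ and $j$ if they differ, and otherwise splits the common cycle at a uniformly chosen point; the pair $(i,j)$ falls inside a given cycle of size $a$ with probability $\propto a^{2}$ and across cycles of sizes $a,b$ with probability $\propto ab$. Keeping only the transpositions with both endpoints in $\mathcal{G}_{n}$ (an asymptotic fraction $\eta_{\infty}(2c)^{2}$ of them), rescaling cycle sizes by $n$ and speeding up time by $n$, I would argue that the vector of large-cycle sizes converges to the $\PSs$-valued uniform split--merge flow --- pairs of clusters of sizes $x,y$ coalesce at rate $\propto xy$, and a cluster of size $x$ fragments into $(Ux,(1-U)x)$ with $U\sim\mathrm{Unif}[0,1]$ at rate $\propto x^{2}$ --- run over the $\Theta(1)$ units of rescaled time corresponding to $[c_{0}n,cn]$; the transpositions touching $\mathcal{G}_{n}$ in at most one vertex, and the microscopic fragments shed by splits, enter only as lower-order perturbations.

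\emph{Invariant measure and conclusion.} I would then invoke the theorem of Diaconis, Mayer-Wolf, Zeitouni and Zerner that $\PDlaw_{1}$ is the unique invariant distribution of the uniform split--merge process, together with a coupling/mixing estimate showing that from the relevant starting configurations (total mass bounded below, most of the mass in boundedly many not-too-small parts --- as is guaranteed for $\pi_{\lfloor c_{0}n\rfloor}$ by Step~1) the process relaxes to $\PDlaw_{1}$ within bounded rescaled time. Since $[c_{0}n,cn]$ affords $\Theta(1)$ such time, the ordered large-cycle sizes at time $\lfloor cn\rfloor$, rescaled by $n$, converge in law to $\eta_{\infty}(2c)\,\PDlaw_{1}$; rescaling instead by $|\mathcal{G}_{n}|$, and using $|\mathcal{G}_{n}|/n\to\eta_{\infty}(2c)$ together with the negligibility of the cycles outside $\mathcal{G}_{n}$, yields $\lambda(\pi_{\lfloor cn\rfloor})/(n\eta_{\infty}(2c))\Rightarrow\PDlaw_{1}$.

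The hard part will be making the last two steps rigorous: proving that the large-cycle process is genuinely close to an \emph{exact} split--merge process even though the mass of $\mathcal{G}_{n}$ is not conserved --- small cycles keep merging into the giant, and the giant sheds small fragments as $c$ grows from $c_{0}$ --- and combining this with a sufficiently quantitative form of convergence to the $\PDlaw_{1}$ equilibrium. I expect to handle both by a coupling that runs the true process alongside a split--merge process on a fixed mass, using the random-graph control of Step~1 to ensure that the exchanged mass always arrives in $o(n)$-sized pieces, so that it can never create a spurious macroscopic cycle and its cumulative effect on the ranked sequence is asymptotically invisible.
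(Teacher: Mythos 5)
The paper does not actually prove this theorem --- it is quoted from Schramm's article with a citation, and the surrounding text (together with Sections 6--7) only describes the same architecture you propose: giant component of the transposition graph, effective split--merge dynamics on cycle lengths, and invariance/uniqueness of $\PDlaw_{1}$. So your overall plan is the right one and matches how Schramm's argument is organised. There is, however, one genuine gap in your sketch, and it happens to be the main technical content of Schramm's paper.

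The gap is your claim that Step~1 (the Erd\H{o}s--R\'enyi estimates) guarantees that at time $\lfloor c_{0}n\rfloor$ ``most of the mass is in boundedly many not-too-small parts''. The random-graph structure only tells you that each cycle of $\pi_{k}$ lies inside a single component of $G_{k}$ and that the giant component has $\sim\eta_{\infty}(2c_0)n$ vertices; it says nothing about how finely the permutation partitions the giant component into cycles. A priori the giant could be covered entirely by mesoscopic cycles of size $o(n)$ (indeed, just above the critical time $n/2$ all cycles \emph{are} small), in which case your rescaled vector $\lambda(\pi_{\lfloor cn\rfloor})/(n\eta_{\infty}(2c))$ would converge to the zero sequence rather than to an element of $\PSs$ with total mass $1$, and no amount of split--merge mixing started from such ``dust'' in bounded rescaled time would produce $\PDlaw_{1}$. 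Proving that for every $\varepsilon>0$ there is $\delta>0$ such that, with high probability, cycles of length $\geq\delta n$ contain at least $(\eta_{\infty}(2c)-\varepsilon)n$ elements is the heart of Schramm's proof; it is obtained not from the graph structure but from a quantitative analysis of the split--merge dynamics itself (small cycles inside the giant merge into large ones at a rate proportional to their total mass, while large cycles shed only small fragments, giving a drift argument that forces the dust fraction down). Once that is in place, your remaining steps --- coupling the large-cycle evolution with an exact uniform split--merge on fixed mass, controlling the $O((\eta(2c)-\eta(2c_0))n)$ of mass exchanged with the complement of the giant, and invoking uniqueness of $\PDlaw_{1}$ plus a coalescing coupling for mixing --- are essentially Schramm's, and you correctly identify them as the delicate part. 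But as written, the emergence of macroscopic cycles is assumed rather than proved, and it does not follow from anything in your Step~1.
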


\section{Uniform split-merge and its invariant measures}

We now take a break from spin systems and consider a random evolution on partitions of $[0,1]$
in which blocks successively split or merge.  Stochastic processes incorporating the phenomena of coalescence and fragmentation have been much studied in the recent probability literature (see, for example, \cite{AldousSurvey,BertoinCoagFrag} or Chapter 5 of \cite{PitmanStFl}, and their bibliographies).  The space of partitions of $[0,1]$ provides a natural setting for such processes.  The particular model we will discuss here has the property that the splitting and merging can be seen to balance each other out in the long run, so that there exists a stationary (or invariant) distribution.  Our aim is to summarise what is known 
about this invariant distribution.  Only a basic familiarity with probability theory 
is assumed and we will recall the essentials as we go.  This section is self-contained and can be read independently of the first.  As is the way among probabilists,
we assume there is a phantom probability space $(\Omega, \cF, \Prob)$ that hosts all our 
random variables.  It is summoned only when needed.

\subsection{Introduction}

Let $\PSs$ denote the space of (decreasing, countable) partitions of $[0,1]$.  Formally
\begin{equation}
\PSs \eqdef \Bigl\{ p \in [0,1]^{\N}: \; p_{1} \geq p_{2} \geq \ldots, \; \sum_{i} p_{i} = 1  \Bigr\},
\end{equation}
where the size of the $i^{th}$ part (or block) of $p \in \PSs$ is $p_{i}$.  
We define split and merge operators $S^{u}_{i}, M_{ij}:\PSs \to \PSs$, $u \in (0,1)$ as follows:
\begin{itemize}
\item $S^{u}_{i}p$ is the non-increasing sequence obtained by splitting $p_{i}$ into 
two new parts of size $up_{i}$ and $(1-u)p_{i}$, and
\item $M_{ij}p$ is the non-increasing sequence obtained by merging $p_{i}$ and $p_{j}$
into a part of size $p_{i} + p_{j}$.
\end{itemize}

\begin{figure}[htbp]
\begin{center}
\begin{picture}(0,0)%
\includegraphics{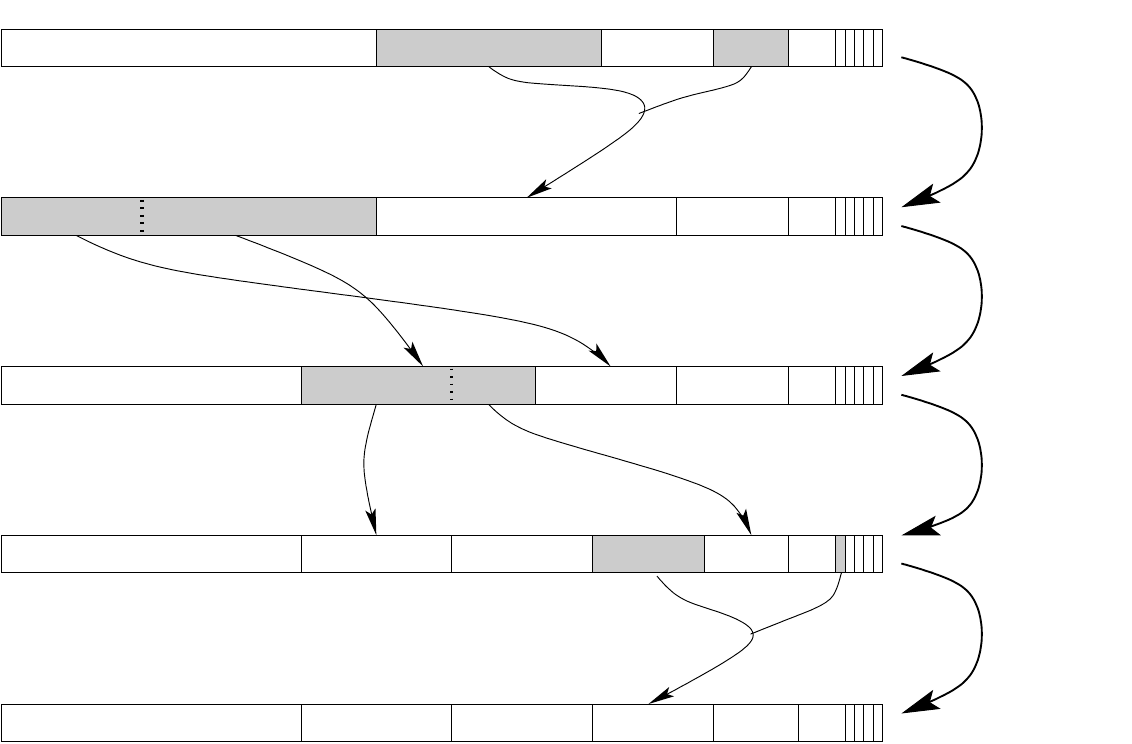}%
\end{picture}%
\setlength{\unitlength}{2368sp}%
\begingroup\makeatletter\ifx\SetFigFont\undefined%
\gdef\SetFigFont#1#2#3#4#5{%
  \reset@font\fontsize{#1}{#2pt}%
  \fontfamily{#3}\fontseries{#4}\fontshape{#5}%
  \selectfont}%
\fi\endgroup%
\begin{picture}(9054,5919)(1189,-6073)
\put(4726,-3586){\makebox(0,0)[lb]{\smash{{\SetFigFont{7}{8.4}{\rmdefault}{\mddefault}{\updefault}{\color[rgb]{0,0,0}$u$}%
}}}}
\put(4876,-286){\makebox(0,0)[lb]{\smash{{\SetFigFont{7}{8.4}{\rmdefault}{\mddefault}{\updefault}{\color[rgb]{0,0,0}{$2$}}%
}}}}
\put(7051,-286){\makebox(0,0)[lb]{\smash{{\SetFigFont{7}{8.4}{\rmdefault}{\mddefault}{\updefault}{\color[rgb]{0,0,0}$4$}%
}}}}
\put(6226,-4336){\makebox(0,0)[lb]{\smash{{\SetFigFont{7}{8.4}{\rmdefault}{\mddefault}{\updefault}{\color[rgb]{0,0,0}$4$}%
}}}}
\put(4726,-2986){\makebox(0,0)[lb]{\smash{{\SetFigFont{7}{8.4}{\rmdefault}{\mddefault}{\updefault}{\color[rgb]{0,0,0}$2$}%
}}}}
\put(7801,-4336){\makebox(0,0)[lb]{\smash{{\SetFigFont{7}{8.4}{\rmdefault}{\mddefault}{\updefault}{\color[rgb]{0,0,0}$7$}%
}}}}
\put(2551,-1636){\makebox(0,0)[lb]{\smash{{\SetFigFont{7}{8.4}{\rmdefault}{\mddefault}{\updefault}{\color[rgb]{0,0,0}$1$}%
}}}}
\put(2296,-2191){\makebox(0,0)[lb]{\smash{{\SetFigFont{7}{8.4}{\rmdefault}{\mddefault}{\updefault}{\color[rgb]{0,0,0}$u$}%
}}}}
\put(9151,-1111){\makebox(0,0)[lb]{\smash{{\SetFigFont{9}{10.8}{\rmdefault}{\mddefault}{\updefault}{\color[rgb]{0,0,0}$M_{2,4}$}%
}}}}
\put(9151,-2536){\makebox(0,0)[lb]{\smash{{\SetFigFont{9}{10.8}{\rmdefault}{\mddefault}{\updefault}{\color[rgb]{0,0,0}$S_1^u$}%
}}}}
\put(9151,-3886){\makebox(0,0)[lb]{\smash{{\SetFigFont{9}{10.8}{\rmdefault}{\mddefault}{\updefault}{\color[rgb]{0,0,0}$S_2^u$}%
}}}}
\put(9151,-5236){\makebox(0,0)[lb]{\smash{{\SetFigFont{9}{10.8}{\rmdefault}{\mddefault}{\updefault}{\color[rgb]{0,0,0}$M_{4,7}$}%
}}}}
\end{picture}%
\end{center}
\caption{Illustration for the split-merge process. The partition undergoes a merge followed by two splits and another merge.}
% \label{}
\end{figure}

The basic uniform split-merge transformation of a partition $p$ is defined as follows.  
First we choose two parts of $p$ at random, with the $i^{th}$ part being chosen 
with probability $p_{i}$ (this is called size-biased sampling). 
The two parts, which we call $p_{I}$ and $p_{J}$, are chosen independently 
 and we allow repetitions.
%In practice this means sampling a uniform random 
%%variable on $[0,1]$ and choosing whichever block it lands in.
%Each partition induces a natural distribution on its parts -- the $i^{th}$ part is sampled 
%with probability $p_{i}$.  In practice this means sampling a uniform random 
%variable on $[0,1]$ and choosing whichever block it lands in.  This is called size-biased sampling.
%Size-biased sample two parts, $p_{I}$ and $p_{J}$ say, of $p$.  
%Each sample is made independently and we allow repetitions. 
 If the 
same part is chosen twice, i.e. $I = J$, sample a uniform random variable $U$ on $[0,1]$ and 
split $p_{I}$ into two new parts of size $Up_{I}$ and $(1-U)p_{J}$ (i.e. apply $S^{U}_{I}$).  
If different parts are chosen, i.e. $I \neq J$, then merge them by applying $M_{IJ}$.
This transformation gives a new (random) element of $\PSs$.  Conditional on plugging a state $p \in \PSs$ into the transformation, the distribution of the new element of $\PSs$ obtained is given by the so-called transition kernel
\begin{equation}
K(p,\cdot) \eqdef \sum_{i}p^{2}_{i} \int_{0}^{1} \delta_{S^{u}_{i}p}(\cdot) du +  \sum_{i \neq j} p_{i}p_{j}\delta_{M_{ij}p}(\cdot).
\end{equation}

Repeatedly applying the transformation gives a sequence $P = ( P^{k})_{k = 0,1,2,\ldots}$
of random partitions evolving in discrete time.   We assume that the updates at each step are independent.  
So, given $P^{k}$, the distribution of $P^{k+1}$ is independent of $P^{k-1},\ldots,P^{0}$.  
In other words, $P$ is a discrete time Markov process on $\PSs$ with transition kernel $K$.  We call
it the basic split-merge chain.

Several authors have studied the large time behaviour of $P$, and the related
issue of invariant probability measures, i.e. $\mu$ such that $\mu K = \mu$ (if the initial value $P^{0}$ is distributed according to $\mu$, then $P^{k}$ also has distribution given by $\mu$ at all subsequent 
times $k = 1,2,\ldots$).  

Recent activity began with Tsilevich \cite{MR1751188}.  In that paper the author showed that the
Poisson-Dirichlet($\theta$) distribution (defined in \S \ref{s:PoissDiricDefn} below 
and henceforth denoted $\PDlaw_{\theta}$) with parameter $\theta = 1$ is invariant.  
The paper contains the conjecture (of  Vershik) that $\PDlaw_{1}$ is the only invariant measure.

Uniqueness within a certain class of analytic measures was established by Mayer-Wolf, Zerner and Zeitouni in \cite{MR1902841}.  In fact they extended the basic split-merge transform described above
to allow proposed splits and merges to be rejected with a certain probability.  In particular, splits and merges are proposed as above but only accepted with probability $\beta_{\rm s} \in (0,1]$ and $\beta_{\rm m} \in (0,1]$ respectively, independently at different times.  The corresponding kernel is
\be
\begin{split}
K_{\beta_{\rm s},\beta_{\rm m}}(p,\cdot) \eqdef &\beta_{\rm s}\sum_{i}p^{2}_{i} \int_{0}^{1} \delta_{S^{u}_{i}p}(\cdot) du +  
\beta_{\rm m}\sum_{i \neq j} p_{i}p_{j}\delta_{M_{ij}p}(\cdot) \\ 
&+ \Bigl(1 - \beta_{\rm s}\sum_{i}p^{2}_{i} - \beta_{\rm m}\sum_{i \neq j} p_{i}p_{j} \Bigr)\delta_{p}(\cdot).
\end{split}
\ee

We call this $(\beta_{\rm s}, \beta_{\rm m})$ split-merge (the basic chain, of course, corresponds to $\beta_{\rm s} = \beta_{\rm m} = 1$).   The Poisson-Dirichlet distribution is still invariant, but the parameter is now $\theta = \beta_{\rm s}/\beta_{\rm m}$ (note that, in fact, any invariant distribution for the chain can depend on $\beta_{\rm s}$ and $\beta_{\rm m}$ only 
through $\theta$ since multiplying both acceptance probabilities by the same positive constant only affects the speed of the chain). 
% and uniqueness is conjectured.  
%Recurrence and transience (whether the chain returns to a given state at a later time) of the chain 
%are also investigated when splits are not uniform.  

Tsilevich \cite{tsilevich2001simplest} provided another insight into the large time behaviour of the 
the basic split-merge process ($\beta_{\rm s} = \beta_{\rm m} = 1$).   The main theorem is that 
if $P^{0} = (1,0,0,\ldots) \in \PSs$, then the law of $P$, sampled at a random Binomial($n,1/2$)-distributed time, converges to Poisson-Dirichlet(1) as $n \to \infty$.  

Pitman \cite{MR1930355} studied a related split-merge transformation, and by developing
results of Gnedin and Kerov, reproved Poisson-Dirichlet invariance and refined the uniqueness
result of \cite{MR1902841}.  In particular, the Poisson-Dirichlet distribution is the only invariant measure 
under which Pitman's split-merge transformation composed with `size-biased permutation' 
is invariant. 

Uniqueness for the basic chain's invariant measure was finally established by 
Diaconis, Mayer-Wolf, Zerner and Zeitouni in \cite{MR2044670}.  They coupled
the split-merge process to a discrete analogue on integer partitions of $\{1,2,\ldots,n\}$ and
then used representation theory to show the discrete chain is close to equilibrium
before decoupling occurs.

Schramm \cite{MR2166362} used a different coupling to give another uniqueness proof 
for the basic chain.  His arguments readily extend to allow $\beta_{\rm s} / \beta_{\rm m} \in (0,1]$. In summary,

\begin{theorem} \label{thm:invariance} \hfill
\begin{itemize}
\item[(a)] Poisson-Dirichlet($\beta_{\rm s}/\beta_{\rm m}$) is invariant for the 
uniform split-merge chain with $\beta_{\rm s}, \beta_{\rm m} \in (0,1]$.

\item[(b)] If $\beta_{\rm s} / \beta_{\rm m} \leq 1$, it is the unique invariant measure.
\end{itemize}
\end{theorem}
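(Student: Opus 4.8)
For part (a) the plan is to work with the size-biased, or stick-breaking, representation of $\PDlaw_\theta$, following Tsilevich and Pitman. Recall that if $P \sim \PDlaw_\theta$ then a size-biased permutation of its parts has the $\mathrm{GEM}(\theta)$ law, i.e.\ it equals $(W_1, W_2, \dots)$ with $W_i = V_i \prod_{j < i}(1 - V_j)$ for $V_1, V_2, \dots$ independent $\mathrm{Beta}(1,\theta)$, and that $\PDlaw_\theta$ is recovered from $\mathrm{GEM}(\theta)$ by decreasing rearrangement. The two parts $p_I$ and $p_J$ appearing in the split-merge step are exactly the outcomes of two independent size-biased picks, so the kernel $K_{\beta_{\rm s},\beta_{\rm m}}$ lifts to a Markov kernel on stick-breaking sequences, with $\PSs$ recovered by rearrangement. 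The structural input I would use is the regenerative (residual allocation) property of $\mathrm{GEM}(\theta)$: a size-biased part has the $\mathrm{Beta}(1,\theta)$ law, and after deleting it and renormalizing one recovers an \emph{independent} $\mathrm{GEM}(\theta)$ sequence. Combined with the elementary beta-algebra describing how a uniform split $(U p_I, (1-U) p_I)$ of a $\mathrm{Beta}(1,\theta)$ mass, and the merge of two such masses, interact with an independent $\mathrm{GEM}(\theta)$ stick, this lets one check that the pushforward of $\mathrm{GEM}(\theta)$ under the lifted kernel is again $\mathrm{GEM}(\theta)$; rearranging gives $\mu K_{\beta_{\rm s},\beta_{\rm m}} = \mu$ for $\mu = \PDlaw_\theta$. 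That only the ratio $\theta = \beta_{\rm s}/\beta_{\rm m}$ enters follows from the remark made in the text: multiplying $\beta_{\rm s}$ and $\beta_{\rm m}$ by a common positive constant changes only the holding probability, hence the speed of the chain, and leaves the invariant measures unchanged.

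A more transparent, if less self-contained, route to part (a) is to realize the whole picture inside a Chinese-restaurant-type construction: run a reversible chain on set partitions of $\{1,\dots,n\}$ (or on integer partitions of $n$) whose stationary law has ranked block frequencies converging to $\PDlaw_\theta$ and whose induced dynamics on those frequencies converges, as $n \to \infty$, to $(\beta_{\rm s},\beta_{\rm m})$ split-merge; invariance then passes to the limit. I would present the $\mathrm{GEM}$ computation as the proof proper and mention this construction as a remark, since it is also the natural bridge to part (b).

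For part (b), with $\theta = \beta_{\rm s}/\beta_{\rm m} \le 1$, the plan is to follow Schramm's coupling argument. Let $\mu$ be any invariant probability measure; the goal is $\mu = \PDlaw_\theta$. Run a copy $(P^k)$ of the chain started from $\mu$ and a copy $(\tilde P^k)$ started from $\PDlaw_\theta$, and construct, on a common probability space, a coupling under which a suitable Wasserstein- or total-variation-type distance between the laws of $P^k$ and $\tilde P^k$ tends to $0$; since both marginals are stationary, this forces $\mu = \PDlaw_\theta$. Concretely I would pass through the finite-$n$ picture of the previous paragraph and couple two (Ewens-weighted) random-transposition-type walks so that their cycle partitions coincide after a number of steps that, on the relevant timescale, is tight in $n$; this is where rapid mixing and, crucially, the hypothesis $\theta \le 1$ enter, the latter ensuring that merges are frequent enough relative to splits for the discrepancy between the two copies to be absorbed. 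One then transports the coupling through the $n \to \infty$ limit to the continuum split-merge chains. As an alternative one may invoke the coupling of Diaconis, Mayer-Wolf, Zerner and Zeitouni, which couples split-merge to the random-transposition walk on $S_n$ and uses representation theory to show the discrete chain is near equilibrium before decoupling.

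The routine part is the bookkeeping in (a): writing the lifted kernel explicitly and pushing the beta-algebra through it, with the decreasing-rearrangement map needing some care (this is where the published invariance proofs differ in presentation). The genuinely hard part is the coupling in (b) --- designing it and proving that it contracts. The hypothesis $\theta \le 1$ there is essential rather than cosmetic: for $\theta > 1$ splits are frequent enough to keep injecting fresh small blocks and the coupling as constructed does not close, which is why uniqueness in that regime remains open and only the analytic-class uniqueness of Mayer-Wolf, Zerner and Zeitouni is available.
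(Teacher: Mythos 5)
Your part (a) is correct in outline but follows a genuinely different route from the paper. The paper proves invariance entirely inside Kingman's Poisson point process representation: it writes $\E[g(F(P))]$ for a symmetric test function $g$, with $P$ given by the normalized ranked atoms of a $\mathrm{PPP}(\theta x^{-1}e^{-x}\,\dd x)$, converts the size-biased sums into integrals via the Palm formula, absorbs the new masses using the Beta--Gamma algebra and the independence of $\sum_i \xi_i$ from the normalized vector, and then runs the Palm formula backwards to recover $\E[g(P)]$; the identity $\theta\beta_{\rm m}=\beta_{\rm s}$ enters exactly once, to make the split and merge contributions recombine. You instead propose the Tsilevich--Pitman route through the $\mathrm{GEM}(\theta)$ stick-breaking representation and the residual-allocation (regenerative) property of a size-biased pick. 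Both are legitimate and well documented; the Palm-calculus proof buys a computation that never leaves the ranked space $\PSs$ and needs no re-ordering argument, whereas your route buys more probabilistic transparency at the cost of the step you yourself flag but do not carry out: after a split or a merge the lifted sequence is no longer in size-biased order, so one must show that the kernel composed with size-biased permutation preserves $\mathrm{GEM}(\theta)$ and that this suffices for invariance of the ranked law. That re-randomization step is exactly where Pitman's treatment does real work, and in a self-contained write-up it cannot be left as ``needs some care.'' Your observation that only the ratio $\beta_{\rm s}/\beta_{\rm m}$ matters is the same remark the paper makes.

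For part (b) you and the paper are in the same position: the paper does not prove uniqueness but cites Schramm's coupling (and Diaconis--Mayer-Wolf--Zerner--Zeitouni for $\theta=1$), and your sketch of a coupling through finite-$n$ Ewens-weighted transposition chains is a plausible description of that argument rather than a proof of it. That is acceptable here since the paper itself only proves (a), but be explicit that (b) is being quoted, not derived; the contraction of the coupling under $\theta\le 1$ is the genuinely hard content and nothing in your plan establishes it.
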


We give a short proof of part (a) in Section~\ref{ss:splitmergeinvariance} below.

\subsection{The Poisson-Dirichlet distribution}\label{s:PoissDiricDefn}

%Let us get some formalities out of the way.  Endow $\PSs$ with the topology induced by the Euclidean product topology on $\R^{\N}$.  For a topological space $X$,  denote the Borel $\sigma$-algebra  by $\BorelSets(X)$
%and write $\Measures_{1}(X)$ for the set of probability measures on $(X, \BorelSets(X))$.   

Write $\Measures_{1}(\PSs)$ for the set of probability measures on $\PSs$.  The Poisson-Dirichlet distribution $\PDlaw_{\theta} \in \Measures_{1}(\PSs)$, $\theta > 0$, 
is a one parameter family of laws introduced by Kingman in \cite{MR0368264}.  
It has cropped up in combinatorics, population genetics, number theory, Bayesian statistics
and probability theory.  
The interested reader may consult \cite{MR2663265,MR591166,MR2032426,MR1434129} for details of  
applications and extensions.  We will simply define it and give some basic properties.

There are two important characterizations of $\PDlaw_{\theta}$.  
We will introduce both, since one will serve to provide intuition and the other will be useful for calculations.  We start with the so-called `stick-breaking' construction.  Let 
$T_{1}, T_{2}, \ldots$ be independent Beta($1,\theta$) random variables
%(The Beta($a,b$) distribution has density $\frac{\Gamma(a+b)}{\Gamma(a) \Gamma(b)} t^a (1-t)^b$ on $[0,1]$.  
(that is, $\Prob(T_{i} > s) = (1-s)^{\theta}$; if $U$ is uniform on $[0,1]$, one can check that $1 - U^{1/\theta}$ is Beta($1,\theta$) distributed).
Form a random partition from the $T_{i}$ by letting 
the $k^{th}$ block take fraction $T_{k}$ of the unallocated mass.  
That is, the first block has size $P_{1} = T_{1}$, the second $P_{2} = T_{2}(1-P_{1})$ and 
$P_{k+1} = T_{k+1}(1 - P_{1} - \ldots - P_{k})$.  One imagines taking 
a stick of unit length and breaking off a fraction $T_{k+1}$ of what remains 
after $k$ pieces have already been taken.   A one-line induction argument shows that
$1 - P_{1} - \ldots - P_{k} = (1-T_{1})(1-T_{2})\ldots(1-T_{k})$, giving
\begin{equation}
P_{k+1} = T_{k+1}(1-T_{1})(1-T_{2})\ldots(1-T_{k}).
\end{equation}
In case it is unclear that $\sum_{i=1}^{\infty} P_{i} = 1$ almost surely, note that
\begin{equation}
\E\Bigl[1 - \sum_{i=1}^{k} P_{i}\Bigr] = \E\Bigl[\prod_{i=1}^{k} (1-T_{i}) \Bigr] = \Bigl(\int_{0}^{1} \theta t (1-t)^{\theta - 1} \Bigr)^{k} = (\theta + 1)^{-k} \to 0
\end{equation}
as $k \to \infty$.  So, the vector $(P_{[1]}, P_{[2]}, \ldots)$ of the $P_{i}$ sorted into decreasing order 
is an element of $\PSs$.   
It determines a unique measure $\PDlaw_{\theta} \in \Measures_{1}(\PSs)$.  It is interesting to note that the original vector $(P_1, P_2, \ldots)$ is obtained from $(P_{[1]}, P_{[2]}, \ldots)$ by size-biased re-ordering; its distribution is called the GEM (Griffiths-Engen-McCloskey) distribution.  In other words, consider the interval $[0,1]$ partitioned into lengths $(P_{[1]}, P_{[2]}, \ldots)$. Take a sequence $U_1, U_2, \ldots$ of i.i.d.\ uniform random variables on $[0,1]$.  Now list the blocks ``discovered'' by the uniforms in the order that they are found.  The resulting sequence has the same distribution as $(P_1, P_2, \ldots)$.

\subsubsection{Poisson Point processes}\label{s:PPPmu} 

Kingman's original characterization of $\PDlaw_{\theta}$ was made in terms of a suitable random 
point process on $\R_+$, which is a generalization of the usual Poisson counting process.  We now provide a crash course in the theory of such processes on a measurable space $(X,\BorelSets)$.  (The standard reference is \cite{Kingman}.)  Although we will only need this theory for $X = \R_+$, there is no extra cost for introducing it in general.   Let $\Measures(X)$ denote the set of $\sigma$-finite measures on $X$.  

Suppose that $\mu \in \Measures(X)$ and consider the special case $\mu(X) < \infty$.  Thus, $\mu(\cdot)/\mu(X)$ is a probability measure 
and we can sample, independently, points $Y_{1}, Y_{2}, \ldots$ according to 
this distribution.  Let $N_{0}$ be Poisson($\mu(X)$) distributed, so that $\Prob(N_{0} = n) = \frac{\mu(X)^{n}}{n!} \e{-\mu(X)}$.  
Conceptually, the Poisson point process with intensity measure $\mu$ 
is simply the random collection $\{Y_{1}, \ldots, Y_{N_{0}} \}$.  

Formally, the point process is defined in terms of a random counting measure $N$ which counts the number of random points 
lying in sets $A \in \BorelSets$ i.e.\ $N(A) = \sum_{i = 1}^{N_{0}} \Indi{Y_{i} \in A}$.
%
%Then $N = \sum_{i = 1}^{N_{0}} \delta_{Y_{i}}$ (i.e. $N(A) = \sum_{i = 1}^{N_{0}} \Indi{Y_{i} \in A}$
%for any $A \in \BorelSets(X)$) is a PPP($\mu$).  %The atoms $Y_{i}$ are called the points of the process.
% Definition \ref{def:PPP} can be checked easily:
%
Thus $N(A)$ is a random variable, which has Poisson($\mu(A)$) distribution.  Indeed,
\begin{equation}
\begin{split}
\Prob(N(A) = k) &= \sum_{n=k}^{\infty}\Prob(N_{0} = n)\Prob\Bigl(\sum_{i = 1}^{N_{0}} \Indi{Y_{i} \in A} = k \Big| N_0 = n\Bigr) \\
& = \sum_{n=k}^{\infty} \frac{\mu(X)^{n}}{n!} \e{-\mu(X)} \left(\frac{n!}{k!(n-k)!}\right) \left( \frac{\mu(A)}{\mu(X)} \right)^{k}
\left( 1- \frac{\mu(A)}{\mu(X)} \right)^{n-k} \\
& = \e{-\mu(X)} \frac{\mu(A)^{k}}{k!} \sum_{n=k}^{\infty} \frac{1}{(n-k)!} (\mu(X) - \mu(A))^{n-k}  \\
& = \frac{\mu(A)^{k}}{k!} \e{-\mu(A)}.
\end{split}
\end{equation}
Similar calculations show that if $A_{1}, \ldots, A_{k} \in \BorelSets$ 
are disjoint then $N(A_{1}), \ldots, N(A_{k})$ are independent.    These properties turn out to be sufficient to completely specify the distribution of the random measure $N$.

%These are the defining properties of the random measure $N$.  Note, 
%to deal rigorously with random measures we really need a $\sigma$-algebra on $\Measures(X)$. 
%An appropriate choice is that %Equip $\Measures(X)$ with the $\sigma$-algebra 
% generated by the functions $\mu \mapsto \mu(A)$, $A \in \BorelSets$.

% We take the standard approach via random counting measures.

%$N(A)$ counts the number of random points that lie in $A \in \BorelSets(X)$.  
%If $\mu(X) < \infty$ is finite, the total number of points is Poisson($\mu(X)$)
%and the process can be constructed by 
%\begin{enumerate}
%\item independently sampling $Y_{1}, Y_{2}, \ldots$ according to 
%the probability measure $\mu(\cdot)/\mu(X)$ and
%\item $N_{0} \sim$ Poisson($\mu(X)$), independently of the $Y_{i}$.
%\end{enumerate}

\begin{definition}[Poisson point process]\label{def:PPP}
A Poisson point process on $X$ with intensity $\mu \in \Measures(X)$ 
(or PPP($\mu$) for short) is a random counting measure $N:\BorelSets(X) \to \N \cup \{0\} \cup \{\infty\}$ such that 
\begin{itemize}
\item for any $A \in \BorelSets(X)$, $N(A)$ has Poisson($\mu(A)$) distribution.
%\begin{equation}
%\Prob(N(A) = k) = \exp(-\mu(A)) \mu(A)^{k}/k!,\; k = 0,1,2,\ldots,
%\end{equation}
%if $\mu(A) < \infty$.  
By convention, $N(A) = \infty$ a.s.\ if $\mu(A) = \infty$.
  
\item If $A_{1}, A_{2}, \ldots, A_{k} \in \BorelSets$ are disjoint,
the random variables $N(A_{1}), \ldots$, \linebreak $N(A_{k})$ are independent.
\end{itemize}
\end{definition}

% If $\mu(A) = \infty$, then $N(A) = \infty$ almost surely.  To see this, write
% $A$ as a disjoint union $\bigcup_{i} A_{i}$ with $\mu(A_{i}) < \infty$ 
% (we can do this by the $\sigma$-finiteness of $\mu$).  Then, for any $M > 0$
% and $n \in \N$ such that $\sum_{i=1}^{n}\mu(A_{i}> 1$, 
% \begin{equation}
%  \Prob(N(A) \leq M) \leq \Prob\left(N(\cup_{i=1}^{n} A_{i}\right) \leq M) \leq \exp\left( -\sum_{i=1}^{n}\mu(A_{i})\right)
%  (M+1)\left(\sum_{i=1}^{n}\mu(A_{i})\right)^{M}
% \end{equation}
% since $N(\bigcup_{i=1}^{n} A_{i})$ is Poisson($\sum_{i=1}^{n}\mu(A_{i}))$ for any $n \in N$.  
% The right hand side tends to zero as $n \to \infty$.  Thus the finite dimensional distributions of $N$
% are completely determined and $N$ is unique in law.

For general $\sigma$-finite intensity measures, we can construct $N$ by superposition.  
Suppose that $X = \bigcup_{i} X_{i}$ where the $X_{i}$ are disjoint and $\mu(X_{i}) < \infty$.  
Use the recipe given at the start of this section to construct,
independently,  a PPP($\mu|_{X_{i}}$) $N_{i}$ on each subspace $X_{i}$.  Then 
$N(A) = \sum_{i=1}^{\infty}N_{i}(A)$ is the desired measure.  
It is purely atomic, and the atoms $Y_1, Y_2, \ldots$ are called
the points of the process.  In applications it is useful to know moments and 
Laplace transforms of functionals of the process.  %For $f:X \to \R$ measurable 
%and $\eta \in \Measures(X)$ let
%\begin{equation}
% \langle \eta, f \rangle = \int_{X} f(y)\eta(\mathrm{d} y).
%\end{equation}
%Note that
%\begin{equation}
% \langle N, f \rangle = \sum_{i}f(Y_{i}).
%\end{equation}

\begin{lemma}\label{l:PPPmomlaplacepalm} \hfill
\begin{enumerate}
\item First moment: If $f \geq 0$ or $f \in L^{1}(\mu)$ then 
\[
\E \Bigl[ \sum_i f(Y_i) \Bigr] = \int_{X} f(y) \mu(\mathrm{d} y)
\]
(we agree that both sides  can be $\infty$).
\item Campbell's formula: If $f \geq 0$ or $1 - e^{-f} \in L^{1}(\mu)$ then
\begin{equation*}
 \E\Bigl[\exp\Bigl(-\sum_i f(Y_i)\Bigr)\Bigr] = \exp \Bigl(-\int_X (1-e^{-f(y)}) \mu(\mathrm{d} y)\Bigr)
\end{equation*}
(we agree that $\exp(-\infty)=0$).
\item Palm's formula: Let $\tilde{\Measures}(X) \subset \Measures(X)$ denote the space of point measures on $X$; let $G:X \times \tilde{\Measures} \to \R_{+}$ 
be a measurable functional of the points; and suppose $f$ is as in (2).  Then
\begin{equation*}
\E\Bigl[ \sum_i f(Y_i) G(Y_i, N)\Bigr] = \int_{X}\E[G(y, \delta_{y} + N)]f(y)\mu(\mathrm{d} y).
\end{equation*}
\end{enumerate}
\end{lemma}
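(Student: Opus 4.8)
The plan is to establish each identity first in the elementary case $\mu(X) < \infty$, in which $N$ is the empirical measure $\sum_{i=1}^{N_{0}}\delta_{Y_{i}}$ of $N_{0} \sim \mathrm{Poisson}(\mu(X))$ i.i.d.\ samples from the probability measure $\nu := \mu(\cdot)/\mu(X)$, and then to lift to general $\sigma$-finite $\mu$ through the superposition $N = \sum_{i}N_{i}$ over a disjoint decomposition $X = \bigcup_{i}X_{i}$ with $\mu(X_{i}) < \infty$. Under the hypotheses ($f \geq 0$, resp.\ $1-\e{-f} \in L^{1}(\mu)$, resp.\ $G \geq 0$) the integrands are nonnegative, so Tonelli's theorem licenses every interchange of sum and integral; the extensions to the stated integrability conditions then follow from the nonnegative case by standard truncation and linearity arguments.

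For (1), condition on $N_{0}$: given $N_{0} = n$ the sum $\sum_{i=1}^{n}f(Y_{i})$ has mean $n\int f\,\dd\nu$, hence $\E[\sum_{i}f(Y_{i})] = \E[N_{0}]\int f\,\dd\nu = \mu(X)\cdot\mu(X)^{-1}\int f\,\dd\mu = \int f\,\dd\mu$. For (2), given $N_{0}=n$ independence gives $\E[\e{-\sum_{i=1}^{n}f(Y_{i})}\mid N_{0}=n] = (\int \e{-f}\,\dd\nu)^{n}$, and summing the Poisson series yields $\e{-\mu(X)}\exp(\int \e{-f}\,\dd\mu) = \exp(-\int(1-\e{-f})\,\dd\mu)$. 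Passing to general $\mu$ by summing (1) over the blocks $X_{i}$ (monotone convergence) and by multiplying the finite Laplace functionals over the blocks using independence of the $N_{i}$ (dominated convergence of the partial products, each bounded by $1$ when $f\geq0$) gives both formulas; when $f\geq0$ and $\int(1-\e{-f})\,\dd\mu=\infty$ both sides of (2) equal zero. Statement (1) is in any case the special case $G\equiv1$ of (3) below.

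The crux is Palm's formula (3). In the finite case, expand over $N_{0}=n\geq1$ and use exchangeability of $Y_{1},\dots,Y_{n}$ to rewrite $\E[\sum_{i=1}^{n}f(Y_{i})G(Y_{i},N)]$ as $n\,\E[f(Y_{1})G(Y_{1},\delta_{Y_{1}}+\sum_{j=2}^{n}\delta_{Y_{j}})]$; conditioning on $Y_{1}=y$ and noting that $\sum_{j=2}^{n}\delta_{Y_{j}}$ is the empirical measure of $n-1$ i.i.d.\ $\nu$-points, call it $N^{(n-1)}$, one gets
\[
\E\Bigl[\sum_{i}f(Y_{i})G(Y_{i},N)\Bigr] = \sum_{n\geq1}\e{-\mu(X)}\frac{\mu(X)^{n}}{n!}\,n\int_{X}f(y)\,\E\bigl[G(y,\delta_{y}+N^{(n-1)})\bigr]\,\nu(\dd y).
\]
The identity $\e{-\mu(X)}\frac{\mu(X)^{n}}{n!}\,n = \mu(X)\,\e{-\mu(X)}\frac{\mu(X)^{n-1}}{(n-1)!}$ lets us reindex by $k=n-1$ and recognise $\sum_{k\geq0}\e{-\mu(X)}\frac{\mu(X)^{k}}{k!}\E[G(y,\delta_{y}+N^{(k)})] = \E[G(y,\delta_{y}+N)]$, while the leftover factor $\mu(X)$ cancels the $\mu(X)^{-1}$ hidden in $\nu$, producing $\int_{X}f(y)\E[G(y,\delta_{y}+N)]\,\mu(\dd y)$. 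For general $\sigma$-finite $\mu$, write $N = N_{i}+N^{(i)}$ with $N^{(i)} := \sum_{j\neq i}N_{j}$ independent of $N_{i}$; conditioning on $N^{(i)}=\eta$ and applying the finite-case formula to $N_{i}$ with the functional $(y,\zeta)\mapsto G(y,\zeta+\eta)$ gives $\E[\sum_{Y\in N_{i}}f(Y)G(Y,N)] = \int_{X_{i}}f(y)\,\E[G(y,\delta_{y}+N)]\,\mu(\dd y)$, and summing over $i$ completes the proof.

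I expect the main obstacle to be the bookkeeping in part (3): tracking the point-measure arguments through the successive conditionings, verifying that $(y,\eta)\mapsto G(y,\delta_{y}+\eta)$ is jointly measurable so that $y\mapsto\E[G(y,\delta_{y}+N)]$ is a legitimate integrand, and confirming that the nonnegativity (or $L^{1}$) hypotheses genuinely justify each Tonelli interchange — in particular in the infinite superposition and in the reindexing of the Poisson series. None of these points is deep, but each needs to be checked with care.
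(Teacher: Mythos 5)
Your proof is correct, but it follows a genuinely different route from the one in the paper. For (1) and (2) the paper works with simple functions $f=\sum_k c_k\Indi{A_k}$ and uses only the distributional axioms of the process --- $N(A_k)\sim\mathrm{Poisson}(\mu(A_k))$ with independence over disjoint sets --- so that the first moment is immediate from $\E[N(A_k)]=\mu(A_k)$ and Campbell's formula from the Poisson Laplace transform, followed by a (deliberately omitted) passage to general nonnegative $f$; you instead exploit the explicit construction (a $\mathrm{Poisson}(\mu(X))$ number of i.i.d.\ $\mu/\mu(X)$-points, then superposition), conditioning on $N_0$. The real divergence is in (3): the paper derives the Palm formula from Campbell's formula by introducing an auxiliary parameter $q$, differentiating the Laplace functional identity at $q=0$ for functionals of the special form $G(y,N)=\sum_k c_k\Indi{y\in A_k}\exp(-\sum_i g_k(Y_i))$, and then invoking a monotone class argument to reach general measurable $G\ge 0$. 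You instead prove the identity directly for arbitrary nonnegative $G$ via exchangeability of the $Y_i$ and the Poisson size-biasing identity $n\,\e{-\mu(X)}\mu(X)^n/n! = \mu(X)\,\e{-\mu(X)}\mu(X)^{n-1}/(n-1)!$, then lift through the superposition by conditioning on the independent remainder $N^{(i)}$. Your argument is the standard ``Mecke equation'' proof: it is more elementary (no differentiation under the expectation, whose justification the paper glosses over) and handles general $G$ in one pass without a monotone class step, at the price of leaning on the constructive representation and doing the superposition bookkeeping explicitly; the paper's route is shorter on the page and stays entirely within the axiomatic characterization of Definition 6.1, which is arguably cleaner given that the process was just defined that way. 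Your observation that (1) is the case $G\equiv 1$ of (3) is a nice economy the paper does not make. The only points you should spell out if writing this in full are the ones you already flag: joint measurability of $(y,\eta)\mapsto G(y,\delta_y+\eta)$, and the Tonelli justifications for the infinite superposition sum.
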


The formulation here is that of Lemma 2.3 of \cite{BertoinCoagFrag}.  We include sketch proofs to give a flavor of the calculations involved.
  
\begin{proof}

Let $f = \sum_{k=1}^{n}c_{k}\Indi{A_{k}}$, be a simple function with $\mu(A_{k}) < \infty$.

(1) We have
\begin{equation}
\E\Bigl[ \sum_i f(Y_i) \Bigr] = \E\Bigl[ \sum_{k=1}^{n} c_{k} N(A_{k}) \Bigr] = \sum_{k=1}^{n} c_{k} \mu(A_{k}) 
= \int_X f(y) \mu(\mathrm{d} y).
\end{equation}

(2) We have
\be
\begin{split}
\E&\Bigl[\exp\Bigl(-\sum_i f(Y_i)\Bigr)\Bigr] = \E\Bigl[e^{-\sum_{k} c_{k} N(A_{k})}\Bigr] = \prod_{k=1}^{n} \E\left[e^{-\sum_{k} c_{k} N(A_{k})}\right] \\% \sum_{l=0}^{\infty}\Prob(N(A_{k}) = l) e^{-lc_{k}} \\
& = \prod_{k=1}^{n} \exp(-\mu(A_{k}) (1 - e^{-c_{k}})) = \exp \Bigl(-\int_X (1-e^{-f(y)}) \mu(\mathrm{d} y)\Bigr).
\end{split}
\ee

Both (1) and (2) extend to measurable $f \geq 0$ using standard arguments, which we omit. 
Part (1) for $f \in L^{1}(\mu)$ follows immediately.  
Part (2) for $1 - e^{-f} \in L^{1}(\mu)$ is also omitted.

(3) First suppose $G$ is of the form $G(N) = \exp(-\sum_i g(Y_i))$ for some non-negative measurable $g$. 
Campbell's formula gives, for $q \geq 0$,
\begin{equation}
\E\Bigl[\exp\Bigl(-q\sum_{i}f(Y_{i})\Bigr) G(N)\Bigr]
 = \exp\Bigl( - \int_{X} (1 - e^{-qf(y) - g(y)}) \mu(\mathrm{d} y) \Bigr).
\end{equation}
Differentiating this identity in $q$ at 0 gives 
\begin{equation}
\begin{split}
\E\Bigl[\sum_{i}f(Y_{i})G(N)\Bigr]
 & = \int_{X} f(y) e^{-g(y)} \mu(\mathrm{d} y) \exp\Bigl( - \int_{X} (1 - e^{-g(y)}) \mu(\mathrm{d} y) \Bigr) \\
 & = \int_{X} f(y) e^{-g(y)} \mu(\mathrm{d} y) \E\Bigl[\exp\Bigl(-\sum_i g(Y_i) \Bigr)\Bigr]\\
 & = \int_{X} f(y)  \E\Bigl[\exp\Bigl(-\sum_i g(Y_i) - g(y) \Bigr)\Bigr] \mu(\mathrm{d} y) \\
 & = \int_{X} f(y)  \E[G(N + \delta_{y})] \mu(\mathrm{d} y),
\end{split}
\end{equation}
where Campbell's formula is used to get the second and last lines.

Now, suppose $G(y,N) = \sum_{k=1}^{n} c_{k }\Indi{y \in A_{k}} \exp(-\sum_ig_{k}(Y_i))$
for $A_{1}, \ldots, A_{n} \in \BorelSets$ and measurable $g_{k}:X \to [0,\infty)$.   By linearity, the preceding calculations 
give

\begin{equation}
\begin{split}
\E\Bigl[ \sum_i f(Y_i) G(Y_i, N)\Bigr] & = \int_{X} \sum_{k=1}^{n} c_{k} \Indi{y \in A_{k}} f(y)  \E\Bigl[\exp \Bigl(-\sum_i g_k(Y_i) - g_k(y) \Bigr)\Bigr] \mu(\mathrm{d} y) \\
 & = \int_{X}f(y)\E[G(y,N + \delta_{y})] \mu(\mathrm{d} y).
\end{split}
\end{equation}

From here it is a standard monotone class argument.  %by approximating arbitrary measurable $G$ with functions of this form.  
% We omit the details.
\end{proof}

\subsubsection{The Poisson-Dirichlet distribution via a PPP} 

Consider the PPP with intensity measure given by $\eta(\mathrm{d} x) = \theta x^{-1} \exp(-x)\mathrm{d} x$ on $[0,\infty)$. (Note that $\eta$ is an infinite measure, but is $\sigma$-finite since $\eta(2^{-k-1},2^{-k}] \leq \theta$.)
A practical way to construct this process is given in Tavar\'e \cite{MR896431}.  Let $T_{1} < T_{2} <  \ldots$ be the points 
of a Poisson counting process of rate $\theta$ (that is, the differences $T_{i+1} - T_{i}$
are independent exponential variables of rate $\theta$) and $E_{1}, E_{2}, \ldots$
be exponentially distributed with rate 1.  Then, the points in our PPP($\eta$) 
can be expressed as $\xi_{i} = \exp(-T_{i}) E_{i}$, $i \geq 1$.
%PD<-function(theta,n) {
%dT <- rexp(n,theta)
%T <- cumsum(dT)
%E <- rexp(n,1)
%sort(exp(-T)*E)
%}
%Xi <- PD(5,200)
%P<-cumsum(Xi)/sum(Xi)
%plot(x=P,y=rep(0,200),pch=".")

\bfig
\centerline{
\includegraphics[width=100mm]{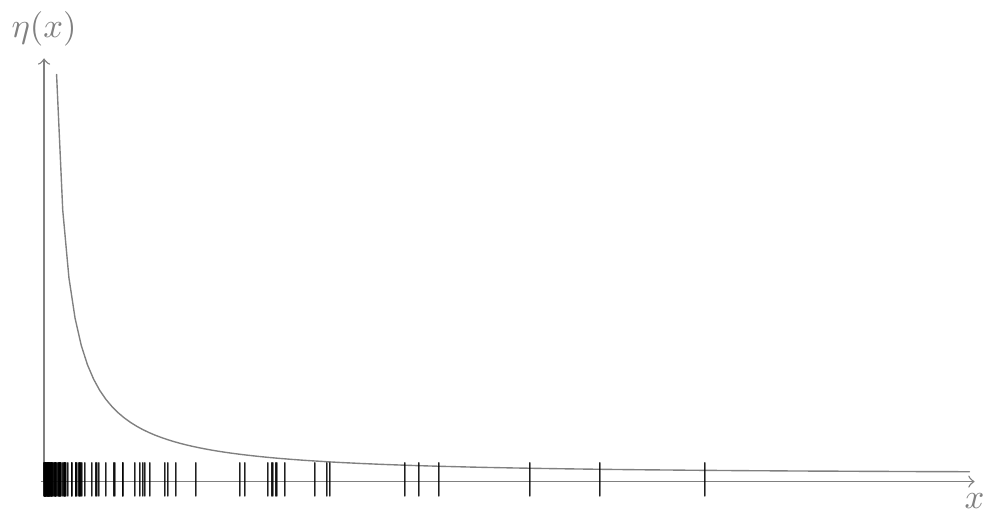}
}
\caption{A sample of the Poisson Point Process (points marked by bars) with intensity measure $\eta$ (overlaid in gray).  Note that the points are dense around the origin.}
%\label{fig split merge}
\efig

The probability that all points are less than $K > 0$ is 
\begin{equation}
\Prob(N(K, \infty) = 0) = \exp\Bigl(-\int_{K}^{\infty} \theta x^{-1} \exp(-x)\mathrm{d} x \Bigr) \to 1
\end{equation}
as $K \to \infty$.  Thus, there is a largest point and we can order the points in decreasing order so that
$\xi_1 \ge \xi_2 \ge \ldots \ge 0$.  
The sum $\sum_{i=1}^{\infty} \xi_{i}$ is finite 
almost surely.  Indeed, we can say much more.  Recall that the Gamma($\gamma, \lambda$) distribution has density
\[
\frac{1}{\Gamma(\gamma)} \lambda^{\gamma} x^{\gamma-1} \exp(-\lambda x).
\]
\begin{lemma} We have
\[
\sum_{i=1}^{\infty} \xi_{i} \sim \mathrm{Gamma}(\theta,1).
\]
\end{lemma}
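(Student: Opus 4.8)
The plan is to compute the Laplace transform of $\sum_{i=1}^{\infty}\xi_{i}$ directly from Campbell's formula and recognize it as the Laplace transform of a Gamma variable. Apply Lemma~\ref{l:PPPmomlaplacepalm}(2) with the nonnegative function $f(x)=qx$ for a fixed $q>0$; since $f\geq0$ the formula applies without further hypotheses, and it gives
\[
\E\Bigl[\exp\Bigl(-q\sum_{i=1}^{\infty}\xi_{i}\Bigr)\Bigr]=\exp\Bigl(-\int_{0}^{\infty}(1-e^{-qx})\,\theta x^{-1}e^{-x}\,\mathrm{d}x\Bigr).
\]
A quick check shows the integral is finite: near $0$ one has $1-e^{-qx}\sim qx$, so the integrand stays bounded, and near $\infty$ the factor $\theta x^{-1}e^{-x}$ decays exponentially.

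Next I would evaluate $I(q):=\int_{0}^{\infty}(1-e^{-qx})x^{-1}e^{-x}\,\mathrm{d}x$, which is a Frullani-type integral. Differentiating under the integral sign (justified by dominated convergence, since $xe^{-(1+q)x}$ is bounded uniformly for $q$ in a neighbourhood of any fixed point and lies in $L^{1}(\mathrm{d}x)$), one gets $I'(q)=\int_{0}^{\infty}e^{-(1+q)x}\,\mathrm{d}x=(1+q)^{-1}$, and $I(0)=0$, hence $I(q)=\log(1+q)$. Substituting back,
\[
\E\Bigl[\exp\Bigl(-q\sum_{i=1}^{\infty}\xi_{i}\Bigr)\Bigr]=\exp\bigl(-\theta\log(1+q)\bigr)=(1+q)^{-\theta}.
\]

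Finally I would compare with the Gamma law: for $G\sim\mathrm{Gamma}(\theta,1)$ one computes $\E[e^{-qG}]=\frac{1}{\Gamma(\theta)}\int_{0}^{\infty}x^{\theta-1}e^{-(1+q)x}\,\mathrm{d}x=(1+q)^{-\theta}$ by the change of variables $y=(1+q)x$. Since the Laplace transform on $[0,\infty)$ determines the distribution uniquely, the lemma follows. (That $\sum_{i}\xi_{i}<\infty$ almost surely is automatic once the Laplace transform is finite and continuous at $q=0$; alternatively it follows from Lemma~\ref{l:PPPmomlaplacepalm}(1), which gives $\E[\sum_{i}\xi_{i}]=\int_{0}^{\infty}\theta e^{-x}\,\mathrm{d}x=\theta<\infty$.) There is no genuine obstacle here; the only point deserving care is the evaluation of the Frullani integral together with the justification of differentiation under the integral sign.
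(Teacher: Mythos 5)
Your proof is correct and follows essentially the same route as the paper: Campbell's formula applied to $f(x)=qx$, evaluation of the resulting Frullani-type integral as $\log(1+q)$ by differentiating in the parameter, and identification of $(1+q)^{-\theta}$ as the Laplace transform of $\mathrm{Gamma}(\theta,1)$. Your added remarks on finiteness of the integral and on justifying the differentiation are fine and merely make explicit what the paper leaves implicit.
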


\begin{proof}
Since $\sum_i \xi_i$ is a non-negative random variable, its distribution is determined by its Laplace transform. By Campbell's formula, this is given by
\be
\begin{split}
\E\Bigl[\exp \Bigl(-r\sum_{i} \xi_{i} \Bigr)\Bigr] &= \exp\Bigl(-\theta \int_{0}^{\infty} (1-e^{-rx}) x^{-1} \exp(-x)\mathrm{d} x \Bigr) \\
& = \exp\Bigl(-\theta \int_{0}^{r} \int_{0}^{\infty} \exp(-x(1+r))\mathrm{d} x \mathrm{d} r \Bigr) \\
& = (1 + r)^{-\theta},
\end{split}
\ee
for $|r| < 1$, implying that $\sum_{i} \xi_{i}$ is Gamma($\theta,1$) distributed.  
\end{proof}

%\begin{definition}[Poisson-Dirichlet]
The Poisson-Dirichlet($\theta$) distribution, $\PDlaw_{\theta} \in \Measures(\PSs)$, is the law of the ordered points, normalised by their sum, i.e.
\begin{equation}\label{e:orderednormedvect}
\frac{1}{\sum_{i} \xi_{i}}\left(\xi_{1}, \xi_{2}, \xi_{3}, \ldots \right).
\end{equation}
%\end{definition}

In the next section, we will wish to appeal to various properties of Beta and Gamma random variables which are often known collectively as the ``Beta-Gamma algebra''.  Recall that the Beta($a,b$) distribution has density $\frac{\Gamma(a+b)}{\Gamma(a) \Gamma(b)} t^a (1-t)^b$ on $[0,1]$.

\begin{lemma} \label{lem:betagamma}
Suppose that $\Gamma^{\lambda}_{\alpha} \sim \mathrm{Gamma}(\alpha,\lambda)$ and $\Gamma_{\beta}^{\lambda} \sim \mathrm{Gamma}(\beta,\lambda)$ are independent.  Then 
\begin{itemize}
\item $\Gamma_{\alpha}^{\lambda} + \Gamma_{\beta}^{\lambda} \sim \mathrm{Gamma}(\alpha+\beta,\lambda)$, 
\item $\Gamma_{\alpha}^{\lambda}/(\Gamma_{\alpha}^{\lambda} + \Gamma_{\beta}^{\lambda}) \sim \mathrm{Beta}(\alpha,\beta)$,
\item The two random variables above are independent.
\end{itemize}
\end{lemma}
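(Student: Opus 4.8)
The plan is to establish all three assertions simultaneously by a single change of variables. Write $X = \Gamma^{\lambda}_{\alpha}$ and $Y = \Gamma^{\lambda}_{\beta}$; by independence their joint density on $(0,\infty)^{2}$ is
\be
f_{X,Y}(x,y) = \frac{\lambda^{\alpha}}{\Gamma(\alpha)}\, x^{\alpha-1} \e{-\lambda x} \cdot \frac{\lambda^{\beta}}{\Gamma(\beta)}\, y^{\beta-1} \e{-\lambda y}.
\ee
First I would introduce the bijection $(x,y) \mapsto (s,r)$ of $(0,\infty)^{2}$ onto $(0,\infty) \times (0,1)$ given by $s = x+y$ and $r = x/(x+y)$, with inverse $x = rs$, $y = (1-r)s$ and Jacobian of absolute value $s$. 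Substituting, the joint density of $(S,R) := \bigl( X+Y,\ X/(X+Y) \bigr)$ comes out as
\be
f_{S,R}(s,r) = \Bigl( \frac{\lambda^{\alpha+\beta}}{\Gamma(\alpha+\beta)}\, s^{\alpha+\beta-1} \e{-\lambda s} \Bigr) \cdot \Bigl( \frac{\Gamma(\alpha+\beta)}{\Gamma(\alpha)\Gamma(\beta)}\, r^{\alpha-1}(1-r)^{\beta-1} \Bigr), \qquad s>0,\ r \in (0,1).
\ee

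The three conclusions then follow by inspection. The density factorizes as a function of $s$ alone times a function of $r$ alone, so $S$ and $R$ are independent; the first factor is the $\mathrm{Gamma}(\alpha+\beta,\lambda)$ density, yielding the first bullet, and the second factor is the $\mathrm{Beta}(\alpha,\beta)$ density, yielding the second. The particular split of the overall constant $\lambda^{\alpha+\beta}/(\Gamma(\alpha)\Gamma(\beta))$ into the two displayed normalizations is forced by the requirement that each marginal integrate to $1$; equivalently, it is exactly the Beta integral $\int_{0}^{1} r^{\alpha-1}(1-r)^{\beta-1}\,\dd r = \Gamma(\alpha)\Gamma(\beta)/\Gamma(\alpha+\beta)$. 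If one prefers to avoid densities for the additivity statement, it also drops straight out of Laplace transforms, in the style of the preceding lemma: $\E\bigl[\e{-t\Gamma^{\lambda}_{\alpha}}\bigr] = (1+t/\lambda)^{-\alpha}$, so by independence $\E\bigl[\e{-t(\Gamma^{\lambda}_{\alpha} + \Gamma^{\lambda}_{\beta})}\bigr] = (1+t/\lambda)^{-(\alpha+\beta)}$, which is the Laplace transform of $\mathrm{Gamma}(\alpha+\beta,\lambda)$.

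There is no real obstacle here; the argument is a one-screen computation. The only points needing a little care are the bookkeeping of the Jacobian factor $s$ — it is precisely what turns the $s^{\alpha+\beta-2}$ coming from the substitution into the $s^{\alpha+\beta-1}$ of a Gamma density — the observation that the image of the change of variables is exactly $(0,\infty)\times(0,1)$ so that no boundary contributions arise, and the correct splitting of the normalization constant, which is where the Beta-function identity enters and which is, in effect, the whole content of the lemma.
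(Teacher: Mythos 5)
Your proposal is correct and follows essentially the same route as the paper: the change of variables $(x,y)\mapsto(s,r)=(x+y,\,x/(x+y))$ with Jacobian $s$, followed by reading off the factorized Gamma--Beta joint density. The extra remarks on the Beta-integral normalization and the alternative Laplace-transform check of additivity are fine but not needed.
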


Note that the converse also follows: if $B \sim \mathrm{Beta}(\alpha,\beta)$ is independent of $\Gamma_{\alpha+\beta}^{\lambda} \sim \mathrm{Gamma}(\alpha+\beta,\lambda)$ then $B \Gamma_{\alpha+\beta}^{\lambda} \sim \mathrm{Gamma}(\alpha,\lambda)$, $(1-B) \Gamma_{\alpha+\beta}^{\lambda} \sim \mathrm{Gamma}(\beta, \lambda)$ and these last two random variables are independent.

\begin{proof}
In order to simplify the notation, let $X= \Gamma_{\alpha}^{\lambda}$ and $Y = \Gamma_{\beta}^{\lambda}$.  We will find the joint density of $S = X+Y$ and $R = X/(X+Y)$.  We first find the Jacobian corresponding to this change of variables: we have
\begin{alignat*}{3}
\frac{\partial x}{\partial s} &= r, & \frac{\partial x}{\partial r} &= s  \\
\frac{\partial y}{\partial s} &= 1-r, \quad&  \frac{\partial y}{\partial r} &= -s 
\end{alignat*}
and so the Jacobian is $|-rs - (1-r)s| = s$.  Noting that $X = RS$ and $Y = (1-R)S$, we see that $S$ and $R$ have joint density
\begin{equation}
\begin{split}
&s \frac{1}{\Gamma(\alpha)} \lambda^{\alpha} (rs)^{\alpha-1} e^{-\lambda rs} \frac{1}{\Gamma(\beta)}((1-r)s)^{\beta-1} e^{-\lambda (1-r)s} \\
&= \frac{1}{\Gamma(\alpha + \beta)} \lambda^{\alpha+\beta} s^{\alpha + \beta -1} e^{-\lambda s} \cdot
\frac{\Gamma(\alpha + \beta)}{\Gamma(\alpha) \Gamma(\beta)} r^{\alpha-1} (1-r)^{\beta-1}.
\end{split}
\end{equation}
Since this factorizes with the factors being the correct Gamma and Beta densities, the result follows.
\end{proof}

In the next lemma, we will see the power of the Beta-Gamma algebra.  We use it to make a connection between our two different representations of the Poisson-Dirichlet distribution.  This will serve as a warm up for the calculations in the next section.

\begin{lemma} \label{lem:sbpick}
Suppose that $P = (P_1,P_2,\ldots) \sim \mathrm{PD}_{\theta}$.  Let $P_{*}$ be a size-biased pick from amongst $P_1, P_2, \ldots$.  Then $P_* \sim \mathrm{Beta}(1,\theta)$.
\end{lemma}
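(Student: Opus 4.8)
The plan is to combine the Poisson point process (PPP) representation of $\PDlaw_{\theta}$ with Palm's formula (Lemma~\ref{l:PPPmomlaplacepalm}(3)) and the Beta--Gamma algebra (Lemma~\ref{lem:betagamma}). Let $N$ be a $\mathrm{PPP}(\eta)$ with $\eta(\dd x) = \theta x^{-1} \e{-x} \dd x$ on $(0,\infty)$, let $\xi_{1} \geq \xi_{2} \geq \cdots$ be its points in decreasing order, and set $\Sigma = \sum_{i} \xi_{i}$; by definition $(\xi_{1},\xi_{2},\ldots)/\Sigma \sim \PDlaw_{\theta}$. A size-biased pick from this vector selects the value $\xi_{i}/\Sigma$ with probability $\xi_{i}/\Sigma$, and this does not depend on the ordering of the parts, so for any bounded measurable $g$,
\be
\E[g(P_{*})] = \E\Bigl[ \sum_{i} \frac{\xi_{i}}{\Sigma}\, g\bigl(\xi_{i}/\Sigma\bigr) \Bigr] , \qquad \Sigma = \sum_{j} \xi_{j} .
\ee

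Next I would apply Palm's formula with $f(y) = y$ and $G(y,N) = \Sigma(N)^{-1} g\bigl(y/\Sigma(N)\bigr)$, where $\Sigma(N)$ denotes the sum of the atoms of $N$; note $f \geq 0$ and $G$ is bounded, so the hypotheses hold, and $\int y\,\eta(\dd y) = \theta < \infty$. Passing to $\delta_{y} + N$ on the right-hand side turns $\Sigma(N)$ into $y + \Sigma$, where $\Sigma$ is now the total mass of an \emph{independent} fresh copy of the process, so $\Sigma \sim \mathrm{Gamma}(\theta,1)$ by the preceding lemma; since $y \cdot \theta y^{-1} \e{-y}\,\dd y = \theta \e{-y}\,\dd y$, this gives
\be
\E[g(P_{*})] = \theta \int_{0}^{\infty} \E\Bigl[ \tfrac{1}{y+\Sigma}\, g\bigl( \tfrac{y}{y+\Sigma} \bigr) \Bigr] \e{-y} \dd y = \theta\, \E\Bigl[ \tfrac{1}{Y+\Sigma}\, g\bigl( \tfrac{Y}{Y+\Sigma} \bigr) \Bigr] ,
\ee
with $Y \sim \mathrm{Gamma}(1,1)$ independent of $\Sigma$.

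Finally I would invoke the Beta--Gamma algebra: $W := Y + \Sigma \sim \mathrm{Gamma}(\theta+1,1)$ and $R := Y/W \sim \mathrm{Beta}(1,\theta)$ are independent, hence $\E[g(P_{*})] = \theta\, \E[1/W]\, \E[g(R)]$; a one-line Gamma integral gives $\E[1/W] = \Gamma(\theta)/\Gamma(\theta+1) = 1/\theta$, so $\E[g(P_{*})] = \E[g(R)]$ and therefore $P_{*} \sim \mathrm{Beta}(1,\theta)$. The step I expect to be the main obstacle is the bookkeeping in Palm's formula: the normalising sum $\Sigma$ in the summand includes the sampled point $\xi_{i}$ itself, which is exactly why inserting $\delta_{y}$ produces the shifted denominator $y+\Sigma$ on the right and makes the Beta--Gamma decomposition come out cleanly; getting this right (rather than accidentally writing $\Sigma$ for an independent total that excludes $y$) is the crux. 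As a sanity check one may note the alternative one-line route: the GEM$(\theta)$ vector is the size-biased reordering of a $\PDlaw_{\theta}$ vector, so $P_{*}$ is distributed as its first coordinate $T_{1} \sim \mathrm{Beta}(1,\theta)$.
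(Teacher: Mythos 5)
Your proof is correct and follows essentially the same route as the paper's: express $\E[g(P_*)]$ as $\E[\sum_i (\xi_i/\Sigma) g(\xi_i/\Sigma)]$ via the PPP representation, apply the Palm formula to shift to $y+\Sigma$ with $y$ integrated against $\theta \e{-y}\dd y$, and then invoke the Beta--Gamma algebra to factor out an independent $\mathrm{Beta}(1,\theta)$ variable. The only cosmetic differences are that you compute $\E[1/W]=1/\theta$ explicitly where the paper simply asserts the normalization, and you add the (valid) sanity check via the GEM/stick-breaking characterization.
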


So $P_*$ has the same distribution as the length of the first stick in the stick-breaking construction. 

\begin{proof}
Note that, conditional on $P_1, P_2, \ldots$, we have that
\begin{equation}
P_* = P_i \quad \text{ with probability $P_i$, $i \ge 1$}.
\end{equation}
In order to determine the distribution of $P_*$, it suffices to find $\E[f(P_*)]$ for all bounded measurable test functions $f: [0,1] \to \R_+$.  (Indeed, it would suffice to find $\E[f(P_*)]$ for all functions of the form $f(x) = \exp(-qx)$ i.e.\ the Laplace transform. However, our slightly unusual formulation will generalize better when we consider random variables on $\PSs$ in the next section.) Conditioning on $P_1, P_2, \ldots$ and using the Tower Law we see that
\begin{equation}
\E [f(P_*)] = \E[ \E[ f(P_*) | P_1, P_2, \ldots]] = \E \Bigl[ \sum_{i =1}^{\infty} P_i f(P_i) \Bigr].
\end{equation}
Now use the representation (\ref{e:orderednormedvect}) to see that this is equal to 
\begin{equation}
\E \Bigl[ \sum_{i=1}^{\infty} \frac{\xi_i}{\sum_{j=1}^{\infty} \xi_j} f \Bigl(  \frac{\xi_i}{\sum_{k=1}^{\infty} \xi_k} \Bigr) \Bigr].
\end{equation}
This is in a form to which we can apply the Palm formula; we obtain
\begin{equation}
\E \Bigl[ \int_0^{\infty} \frac{y}{y+ \sum_{i=1}^{\infty} \xi_i} f \Bigl(  \frac{y}{y+ \sum_{j=1}^{\infty} \xi_j} \Bigr) \theta y^{-1} e^{-y} \mathrm{d} y\Bigr].
\end{equation}
After cancelling $y$ and $y^{-1}$, we recognise the density of the Exp(1) (= Gamma(1,1)) distribution and so we can write
\begin{equation}
\E \Bigl[ \frac{\theta}{\Gamma+ \sum_{i=1}^{\infty} \xi_i} f \Bigl(  \frac{\Gamma}{\Gamma+ \sum_{j=1}^{\infty} \xi_j} \Bigr) \Bigr],
\end{equation}
where $\Gamma \sim \mathrm{Exp}(1)$ is independent of $\xi_1, \xi_2, \ldots$.  Recall that $\sum_{i=1}^{\infty} \xi_i \sim \mathrm{Gamma}(\theta,1)$.  Then by Lemma~\ref{lem:betagamma}, $\Gamma + \sum_{i=1}^{\infty} \xi_i$ has a $\mathrm{Gamma}(\theta + 1,1)$ distribution and is independent of $\Gamma/( \Gamma + \sum_{i=1}^{\infty} \xi_i)$, which has a Beta($1,\theta$) distribution.  Hence, we get
\begin{equation}
\E \Bigl[ \frac{\theta}{\Gamma+ \sum_{i=1}^{\infty} \xi_i} \Bigr] \E \bigl[ f(B) \bigr],
\end{equation}
where $B \sim \mathrm{Beta}(1,\theta)$.  We conclude by observing that 
\be
\E\Bigl[ \frac{\theta}{\Gamma+ \sum_{i=1}^{\infty} \xi_i} \Bigr] = 1. \qedhere
\ee
\end{proof}

We close this section by noting an important property of the PPP we use to create the Poisson-Dirichlet vector.

\begin{lemma} \label{lem:independenceofsum}
The random variable $\sum_{i = 1}^{\infty} \xi_i$ is independent of
\[
\frac{1}{\sum_{i} \xi_{i}}\left(\xi_{1}, \xi_{2}, \xi_{3}, \ldots \right).
\]
\end{lemma}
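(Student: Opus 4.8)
Write $S = \sum_{i\ge 1}\xi_i$ and $W = S^{-1}(\xi_1,\xi_2,\ldots)$, so that $W$ has law $\PDlaw_{\theta}$. The plan is to prove the factorization
\[
\E\bigl[e^{-rS}F(W)\bigr] = \E\bigl[e^{-rS}\bigr]\,\E\bigl[F(W)\bigr]
\]
for every $r\ge 0$ and every bounded measurable $F:\PSs\to\R$. Since we have already shown $S\sim\mathrm{Gamma}(\theta,1)$ (so $\Prob_S$ is determined by its Laplace transform) and $\PSs$ is a separable metric space, this factorization, applied with $F=\Indi{B}$ for $B$ in a countable generating algebra, yields the independence of $S$ and $W$ by the standard argument: the finite signed measure $\bigl(\E[F(W)\mid S=s]-\E[F(W)]\bigr)\Prob_S(\mathrm{d}s)$ has vanishing Laplace transform on $[0,\infty)$, hence is zero, so $\E[F(W)\mid S]=\E[F(W)]$ a.s., and a monotone class argument closes it.

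The engine is exponential tilting of the Poisson point process. Fix $r\ge 0$ and recall $\E[e^{-rS}] = (1+r)^{-\theta}\in(0,\infty)$. Since $S<\infty$ a.s., the quantity $(1+r)^{\theta}e^{-rS}$ is a strictly positive random variable of mean $1$, so we may define a probability measure $\tilde\Prob_r$ by $\mathrm{d}\tilde\Prob_r/\mathrm{d}\Prob = (1+r)^{\theta}e^{-rS} = (1+r)^{\theta}\exp(-\sum_i r\xi_i)$. First I would check that, under $\tilde\Prob_r$, the collection $(\xi_i)$ is again a Poisson point process, now with intensity $\tilde\eta_r(\mathrm{d}x) = \theta x^{-1}e^{-(1+r)x}\mathrm{d}x$. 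This follows from Campbell's formula (Lemma~\ref{l:PPPmomlaplacepalm}(2)): for measurable $f\ge 0$, using that $f(x)+rx\ge 0$,
\[
\tilde\E_r\Bigl[\exp\Bigl(-\sum_i f(\xi_i)\Bigr)\Bigr] = (1+r)^{\theta}\,\E\Bigl[\exp\Bigl(-\sum_i\bigl(f(\xi_i)+r\xi_i\bigr)\Bigr)\Bigr] = \exp\Bigl(-\theta\int_0^{\infty}\bigl(1-e^{-f(x)}\bigr)\,x^{-1}e^{-(1+r)x}\,\mathrm{d}x\Bigr),
\]
which is the Laplace functional of $\mathrm{PPP}(\tilde\eta_r)$; testing against simple $f$ recovers the defining properties of Definition~\ref{def:PPP}.

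Next I would rescale. The change of variables $x\mapsto(1+r)x$ maps $\tilde\eta_r$ to $\eta$ (the factor $x^{-1}\mathrm{d}x$ is scale invariant), so by the mapping property of Poisson processes the rescaled points $\bigl((1+r)\xi_i\bigr)_{i}$ form, under $\tilde\Prob_r$, a $\mathrm{PPP}(\eta)$ — exactly the law that $(\xi_i)$ has under $\Prob$. Since the normalized vector is unchanged by rescaling,
\[
W = \frac1S(\xi_1,\xi_2,\ldots) = \frac1{(1+r)S}\bigl((1+r)\xi_1,(1+r)\xi_2,\ldots\bigr),
\]
the law of $W$ under $\tilde\Prob_r$ is $\PDlaw_{\theta}$, the same as under $\Prob$. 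Therefore
\[
\E\bigl[(1+r)^{\theta}e^{-rS}F(W)\bigr] = \tilde\E_r\bigl[F(W)\bigr] = \E\bigl[F(W)\bigr],
\]
i.e. $\E[e^{-rS}F(W)] = (1+r)^{-\theta}\E[F(W)] = \E[e^{-rS}]\,\E[F(W)]$, which is the desired factorization.

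I expect the only delicate points to be bookkeeping: confirming that the tilting density is a genuine (positive, mean-one) Radon–Nikodym derivative, that Campbell's formula is legitimately applied in the $\sigma$-finite/infinitely-many-points regime (it is, the test function being nonnegative), and the final passage from ``the Laplace transform in $r$ factorizes for each bounded $F$'' to bona fide independence, which needs a countable separating class on $\PSs$. None of these is substantial; the content is entirely in the two moves ``tilt, then rescale.''
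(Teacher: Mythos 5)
The paper does not actually prove this lemma: it simply remarks that the independence is ``another manifestation of the Beta--Gamma algebra'' and refers to Kingman. Your argument is a complete and correct proof, and the tilt-then-rescale mechanism is exactly the right one --- it is essentially the classical explanation of why the normalized jumps of a Gamma subordinator are independent of their sum (the same scale-invariance of $x^{-1}\,\mathrm{d}x$ that underlies the Beta--Gamma algebra itself). All the key steps check out: the Radon--Nikodym derivative $(1+r)^{\theta}e^{-rS}$ is positive with mean one because $\E[e^{-rS}]=(1+r)^{-\theta}$; Campbell's formula applied to $f(x)+rx\ge 0$ does identify the tilted Laplace functional as that of $\mathrm{PPP}(\theta x^{-1}e^{-(1+r)x}\,\mathrm{d}x)$, and since the Laplace functional determines the law of a point process this pins down the tilted law; the pushforward of $\tilde\eta_r$ under $x\mapsto(1+r)x$ is $\eta$; and $W$ is a scale-invariant measurable functional of the point measure, so its law is unchanged under tilting. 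The concluding passage from the factorized Laplace transform to genuine independence via uniqueness of Laplace transforms of finite signed measures on $[0,\infty)$ and a monotone class argument is standard and correctly flagged. Compared with the route the paper gestures at (finite-dimensional Dirichlet marginals and the Beta--Gamma computations of Lemma~\ref{lem:betagamma}), your proof works directly at the level of the point process and avoids any finite-dimensional reduction; the price is the measure-theoretic bookkeeping you already identified, none of which is problematic.
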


This is another manifestation of the independence in the Beta-Gamma algebra; see \cite{Kingman}.

\subsection{Split-merge invariance of Poisson-Dirichlet} \label{ss:splitmergeinvariance}

We use the method that we exploited in the proof of Lemma~\ref{lem:sbpick} to prove part (a) of Theorem~\ref{thm:invariance}.

First define a random function $F: \PSs \to \PSs$ corresponding to $(\beta_{\rm s}, \beta_{\rm m})$ split-merge 
as follows.  Fix $p \in \PSs$ and let $I(p)$ and $J(p)$ be 
the indices of the two independently size-biased parts of $p$, that is
\begin{equation}
\Prob(I(p) = k) = \Prob(J(p) = k) = p_k, \quad k \ge 1.
\end{equation}
Now let $U$ and $V$ be independent $\text{U}(0,1)$ random variables, independent of $I(p)$ and $J(p)$.  Let
\begin{equation}
F(p) = 
\begin{cases}
S^{U}_{i}p & \text{if $I(p) = J(p) = i$ and $V \le \beta_{\rm s}$} \\
M_{ij}p & \text{if $I(p) =i \neq J(p) = j$ and $V \le \beta_{\rm m}$} \\
p & \text{otherwise}.
\end{cases}
\end{equation}

We wish to prove that if $P \sim \text{PD}_{\theta}$ then $F(P) \sim \text{PD}_{\theta}$ also.  Let $g: \PSs \to \R_+$ be a bounded measurable test function which is symmetric in its arguments (this just means that we can forget about ordering the elements of our sequences).  Then, conditioning on $P$, considering the different cases and using the Tower Law, we have
\begin{equation}
\begin{split}
\E[g(F(P))] = & \E \Bigl[\E \Bigl[ \Indi{V \le \beta_{\rm s}} \sum_{i=1}^{\infty} \Indi{I(P) = J(P) = i}  g(S_i^U P) \Big| P \Bigr] \Bigr] \\
&+ \E \Bigl[\E \Bigl[ \Indi{V > \beta_{\rm s}} \sum_{i=1}^{\infty} \Indi{I(P) = J(P) = i}  g(P) \Big| P \Bigr] \Bigr]  \\
&+ \E \Bigl[ \E \Bigl[ \Indi{V \le \beta_{\rm m}} \sum_{i\neq j} \Indi{I(P) = i} \Indi{J(P) = j}  g(M_{ij} P) \Big| P \Bigr] \Bigr]  \\
&+ \E \Bigl[ \E \Bigl[ \Indi{V > \beta_{\rm m}} \sum_{i\neq j} \Indi{I(P) = i} \Indi{J(P) = j}  g(P) \Big| P \Bigr] \Bigr].
\end{split}
\end{equation}
Note that, conditional on $P$, $I(P) = i, J(P) = j$ with probability $P_iP_j$, so that we get
\begin{equation}
\begin{split}
\E[g(F(P))] = & \beta_{\rm s} \E \Bigl[ \sum_{i=1}^{\infty} P_i^2 g(S_i^U P) \Bigr] + (1-\beta_{\rm s}) \E \Bigl[ \sum_{i=1}^{\infty} P_i^2 g(P) \Bigr] \\
& + \beta_{\rm m} \E \Bigl[ \sum_{i \neq j} P_i P_j g(M_{ij} P) \Bigr] + (1 - \beta_{\rm m}) \E \Bigl[ \sum_{i \neq j} P_i P_j g(P)\Bigr].
\end{split}
\end{equation}
Now use the symmetry of $g$ to write
\be
g(S_k^U P) = g \left((P_k U, P_k(1-U), (P_i)_{i \ge 1, i \neq k}) \right)
\ee
and
\be
g(M_{ij}P) = g \left((P_i + P_j, (P_k)_{k \ge 1, k \neq i,j}) \right).
\ee
Set $(P_1, P_2, \ldots) = \frac{1}{\sum_{i=1}^{\infty} \xi_i}(\xi_1, \xi_2, \ldots)$ as in (\ref{e:orderednormedvect}) to obtain
\begin{equation}
\begin{split}
\E [g(F(P))] & = \beta_{\rm s} \E \biggl[ \sum_{k=1}^{\infty} \frac{\xi_k^2}{\bigl(\sum_{i=1}^{\infty} \xi_i \bigr)^2} g \Bigl(\frac{1}{\sum_{i=1}^{\infty} \xi_i} (\xi_k U, \xi_k(1-U), (\xi_i)_{i \ge 1, i \neq k}) \Bigr) \biggr]   \\
& \quad + (1 - \beta_{\rm s}) \E \biggl[\sum_{k=1}^{\infty} \frac{\xi_k^2}{\left(\sum_{i=1}^{\infty} \xi_i \right)^2} g\Bigl(\frac{1}{\sum_{i=1}^{\infty} \xi_i}(\xi_i)_{i \ge 1}\Bigr) \biggr]  \\
& \quad + \beta_{\rm m} \E \biggl[ \sum_{i \neq j} \frac{\xi_i \xi_j}{\left(\sum_{k=1}^{\infty} \xi_k \right)^2} g \Bigl(\frac{1}{\sum_{k=1}^{\infty} \xi_k} (\xi_i + \xi_j, (\xi_k)_{k \ge 1, k \neq i,j}) \Bigr) \biggr]  \\
& \quad + (1 - \beta_{\rm m}) \E \bigg[ \sum_{i \neq j} \frac{\xi_i \xi_j}{\left(\sum_{k=1}^{\infty} \xi_k \right)^2} g\Bigl( \frac{1}{\sum_{i=1}^{\infty} \xi_i} (\xi_i)_{i \ge 1}\Bigr)\biggr].
\end{split}
\end{equation}
%
%\beta_{\rm s} \E\left[ \sum_{k=1}^{\infty} \frac{\xi_k^2}{\left(\sum_{i=1}^{\infty} \xi_i \right)^2} g \left(\frac{1}{\sum_{i=1}^{\infty} \xi_i} (\xi_k U, \xi_k(1-U), (\xi_i: i \ge 1, i \neq k)) \right) \right] \notag  \\
%& \quad + (1 - \beta_{\rm s}) \E\left[ \sum_{k=1}^{\infty} \frac{\xi_k^2}{\left(\sum_{i=1}^{\infty} \xi_i \right)^2} g\left(\frac{1}{\sum_{i=1}^{\infty} \xi_i}(\xi_i : i \ge 1\right)  \right] \notag \\
%& \quad + \beta_{\rm m} \E\left[ \sum_{i \neq j} \frac{\xi_i \xi_j}{\left(\sum_{k=1}^{\infty} \xi_k \right)^2} g \left(\frac{1}{\sum_{k=1}^{\infty} \xi_k} (\xi_i + \xi_j, (\xi_k: k \ge 1, k \neq i,j)) \right)   \right] \notag \\
%& \quad + (1 - \beta_{\rm m}) \E\left[\sum_{i \neq j} \frac{\xi_i \xi_j}{\left(\sum_{k=1}^{\infty} \xi_k \right)^2} g\left( \frac{1}{\sum_{i=1}^{\infty} \xi_i} (\xi_i: i \ge 1)\right)  \right] \notag
%
The Palm formula (Lemma \ref{l:PPPmomlaplacepalm}, (3)) applied to each of the expectations above (twice for the double sums) gives
\begin{equation}
\begin{split}
&\E[g(F(P))] = \! \theta \beta_{\rm s} \! \E\biggl[ \int_0^{\infty} \!\!\!\! \frac{x^{-1} e^{-x} x^2}{\left( x + \sum_{k=1}^{\infty} \xi_k \right)^2} g \Bigl(\frac{1}{x + \sum_{k=1}^{\infty} \xi_k} (xU, x(1-U), (\xi_i)_{i \ge 1}) \Bigr) \mathrm{d} x \biggr] \\
& +  \theta(1-\beta_{\rm s}) \E\biggl[ \int_0^{\infty} \!\! \frac{x^{-1} e^{-x} x^2}{\left( x + \sum_{k=1}^{\infty} \xi_k \right)^2} g \Bigl(\frac{1}{x + \sum_{k=1}^{\infty} \xi_k} (x, (\xi_i)_{i \ge 1}) \Bigr) \mathrm{d} x \biggr] \\
& +  \theta^2\beta_{\rm m} \E\biggl[ \int_0^{\infty} \!\! \int_0^{\infty} \!\! \frac{x^{-1} e^{-x} y^{-1} e^{-y} xy}{\left( x + y + \sum_{k=1}^{\infty} \xi_k \right)^2} g \Bigl(\frac{1}{x + y + \sum_{k=1}^{\infty} \xi_k} (x+y, (\xi_i)_{i \ge 1}) \Bigr) \mathrm{d} x \mathrm{d} y \biggr] \\
& + \! \theta^2 \!(1 - \beta_{\rm m}) \! \E\biggl[ \int_0^{\infty} \!\!\!\! \int_0^{\infty} \!\!\!\! \frac{x^{-1} e^{-x} y^{-1} e^{-y} xy}{\left( x + y + \sum_{k=1}^{\infty} \xi_k \right)^2} g \Bigl(\frac{1}{x + y + \sum_{k=1}^{\infty} \xi_k} (x, y, (\xi_i)_{i \ge 1}) \Bigr) \mathrm{d} x \mathrm{d} y \biggr].
\end{split}
\end{equation}
It helps to recognise the densities we are integrating over here (after cancellation).  In the 
first two expectations, which correspond to split proposals, we have the density $x e^{-x}$
of the Gamma(2,1) distribution.  The other density to appear is $e^{-x}e^{-y}$, 
which corresponds to a pair of independent standard exponential variables.  Using Lemma~\ref{lem:betagamma}, it follows that
\begin{equation}
\begin{split}
& \E[g(F(P))] = \theta \beta_{\rm s} \E\biggl[ \frac{1}{\left(\Gamma + \sum_{k=1}^{\infty} \xi_k \right)^2} g \Bigl(\frac{1}{ \Gamma + \sum_{k=1}^{\infty} \xi_k} (\Gamma U,  \Gamma (1-U), (\xi_i)_{i \ge 1}) \Bigr) \biggr] \\
& +  \theta(1 - \beta_{\rm s}) \E \biggl[ \frac{1}{\left(\Gamma + \sum_{k=1}^{\infty} \xi_k \right)^2} g \Bigl(\frac{1}{ \Gamma + \sum_{k=1}^{\infty} \xi_k} (\Gamma, (\xi_i)_{i \ge 1}) \Bigr) \biggr] \\
& +  \theta^2\beta_{\rm m} \E \biggl[ \frac{1}{\left(\Gamma + \sum_{k=1}^{\infty} \xi_k \right)^2} g \Bigl(\frac{1}{ \Gamma + \sum_{k=1}^{\infty} \xi_k} (\Gamma, (\xi_i)_{i \ge 1}) \Bigr) \biggr] \\
& + \theta^2(1 - \beta_{\rm m})\E \biggl[ \frac{1}{\left(\Gamma + \sum_{k=1}^{\infty} \xi_k \right)^2} g \Bigl(\frac{1}{ \Gamma + \sum_{k=1}^{\infty} \xi_k} (\Gamma U, \Gamma (1-U), (\xi_i)_{i \ge 1}) \Bigr) \biggr],
\end{split}
\end{equation}
where $\Gamma \sim \mathrm{Gamma}(2,1)$, independently of $(\xi_i)_{i \ge 1}$.  By Lemmas~\ref{lem:betagamma} and \ref{lem:independenceofsum}, $\Gamma + \sum_{k} \xi_{k}$ is Gamma$(2 + \theta, 1)$ distributed and independent of the argument of $g$ in all of the above expectations.  
More calculation shows that
\begin{equation}
\E\biggl[\frac{1}{\left(\Gamma + \sum_{k=1}^{\infty} \xi_k \right)^2}\biggr]= \frac{1}{\theta(\theta+1)},
\end{equation}
and so we are left with
\begin{equation}
\begin{split}
\E[g(F(P))] = \frac{\theta \beta_{\rm s} + \theta^2(1- \beta_{\rm m})}{\theta(\theta+1)} \E\biggl[g \Bigl(\frac{1}{ \Gamma + \sum_{k=1}^{\infty} \xi_k} (\Gamma U,  \Gamma (1-U), (\xi_i)_{i \ge 1}) \Bigr)\biggr] \\
+\frac{\theta(1 - \beta_{\rm s}) + \theta^2 \beta_{\rm m}}{\theta(\theta+1)}  \E \biggl[g \Bigl(\frac{1}{ \Gamma + \sum_{k=1}^{\infty} \xi_k} (\Gamma, (\xi_i)_{i \ge 1}) \Bigr) \biggr].
\end{split}
\end{equation}
Next use $\beta_{\rm s} = \theta \beta_{\rm m}$ to get
\begin{equation}
\theta \beta_{\rm s} + \theta^2 (1 - \beta_{\rm m}) = \theta^2 \quad \text{ and } \quad \theta(1 - \beta_{\rm s}) + \theta^2 \beta_{\rm m} = \theta.
\end{equation}
So the expression for $\E[g(F(P))]$ simplifies to
\begin{equation}
\begin{split}
\frac{\theta}{(\theta+1)} \E\biggl[g \Bigl(\frac{1}{ \Gamma + \sum_{k=1}^{\infty} \xi_k} (\Gamma U,  \Gamma (1-U), (\xi_i)_{i \ge 1}) \Bigr)\biggr] \\
+\frac{1}{(\theta+1)}  \E\biggl[g \Bigl(\frac{1}{ \Gamma + \sum_{k=1}^{\infty} \xi_k} (\Gamma, (\xi_i)_{i \ge 1}) \Bigr) \biggr].
\end{split}
\end{equation}
We can re-express this as a sum of expectations as follows:
\begin{equation}
\begin{split}
\frac{1}{\theta(\theta+1)} \E \biggl[\int_0^{\infty} \int_0^{\infty} \theta^2 e^{-x} e^{-y} g \Bigl(\frac{1}{x+y + \sum_{k=1}^{\infty} \xi_k} (x, y, (\xi_i)_{i \ge 1}) \Bigr) \mathrm{d} x \mathrm{d} y \biggr] \\
+ \frac{1}{\theta(\theta+1)} \E \biggl[\int_0^{\infty}  \theta x e^{-x} g \Bigl(\frac{1}{x + \sum_{k=1}^{\infty} \xi_k} (x, (\xi_i)_{i \ge 1}) \Bigr) \mathrm{d} x \mathrm{d} y \biggr].
\end{split}
\end{equation}
Using the Palm formula in the other direction gives
\begin{equation}
\begin{split}
& \frac{1}{\theta(\theta+1)}  \E \biggl[\sum_{i\neq j} \xi_i \xi_j g \Bigl( \frac{1}{\sum_{k=1}^{\infty} \xi_k}( \xi_k)_{k \ge 1} \Bigr) +  \sum_{k=1}^{\infty} \xi_k^2 g \Bigl(\frac{1}{\sum_{k=1}^{\infty} \xi_k}( \xi_k)_{k \ge 1} \Bigr) \biggr] \\
& = \frac{1}{\theta (\theta+1)} \E \biggl[\Bigl( \sum_{k=1}^{\infty} \xi_k \Bigr)^2 g \Bigl(\frac{1}{\sum_{k=1}^{\infty} \xi_k}( \xi_k)_{k \ge 1} \Bigr) \biggr].
\end{split}
\end{equation}

Once again, $\sum_{k=1}^{\infty} \xi_k$ is independent of the argument of $g$.  Moreover, it is easily shown that
\begin{equation}
\E \biggl[ \Bigl( \sum_{k=1}^{\infty} \xi_k \Bigr)^2\biggr] = \theta(\theta+1),
\end{equation}
since it is simply the second moment of a Gamma$(\theta, 1)$ random variable.  Thus, 
\begin{equation}
\E[g(F(P))] = \E[g(P)],
\end{equation}
from which the result follows.

\subsection{Split-merge in continuous time}\label{s:ctstimesplitmerge}

The dynamics in the next section will be in continuous time, so we close this section by describing a continuous time version of the 
split-merge process.  First, consider the standard Poisson counting 
process $(N_{t}, \; t \geq 0)$, perhaps the simplest continuous 
time Markov chain.   Its trajectories take values in $\{0,1,2,\ldots\}$, are piecewise constant, increasing
and right continuous.  At each integer $k$, it is held for an exponentially  distributed 
random time before jumping to $k+1$.  Consequently, only finitely many jumps 
are made during each finite time interval.  We say $N_{t}$ increments at rate 1.  

Continuous time split-merge is the process $(P^{N_{t}}, \; t \geq 0)$ obtained by composing $(P^{k}, \; k = 0,1,2,3,\ldots)$ with an independent Poisson counting process.  It is a Markov process in $\PSs$ 
with the following dynamics.  Suppose the present state is $p \in \PSs$. Attach to each part $p_{i}$ an exponential alarm clock of rate $\beta_{\rm s} p^{2}_{i}$
and to each pair $(p_{i},p_{j})$ of distinct parts a clock of rate $2 \beta_{\rm m} p_{i} p_{j}$.  Wait for the first clock to ring.  If $p_{i}$'s clock rings first then split $p_{i}$ uniformly
(i.e. apply $S^{U}_{i}$ with $U$ uniform).  If the alarm 
for $(p_{i},p_{j})$ rings first then apply $M_{ij}$.  In other words, 
part $p_{i}$ splits uniformly at rate $\beta_{\rm s}p_{i}$ and distinct parts 
$p_{i}$ and $p_{j}$ merge at rate $2 \beta_{\rm m} p_{i} p_{j}$.  Due to the memoryless property of the exponential distribution, once an alarm clock has rung, all of the alarm clocks are effectively reset, and the process starts over from the new state.

More formally, define the rate kernel $Q:\PSs \times \BorelSets(\PSs) \to [0,\infty)$ by
\begin{equation}
 Q(p, \cdot) \eqdef \beta_{\rm s} \sum_{i} p_{i}^{2} \int_{0}^{1} \delta_{S^{u}_{i}p}(\cdot) du + \beta_{\rm m} \sum_{i \neq j} p_{i}p_{j}\delta_{M_{ij}p}(\cdot)
\end{equation}
and the (uniformly bounded) `rate of leaving' $q:\PSs \to [0,\infty)$
\begin{equation}
q(p) \eqdef Q(p,\PSs) = \beta_{\rm s} \sum_{i} p_{i}^{2} + \beta_{\rm m} \sum_{i \neq j} p_{i} p_{j}.
\end{equation}
Using standard theory (e.g. Proposition 12.20, \cite{MR1876169}), there exists a 
Markov process on $\PSs$ that waits for an Exponential($q(p)$) amount of time in state 
$p$ before jumping to a new state chosen according to $Q(p,\cdot)/q(p)$.  
%Trajectories 
%are piecewise constant, right continuous, and each jump discontinuity is caused by 
%splitting or merging.  
% $t$ we have $P^{t} = M_{ij}P^{t-}$ or $P^{t} = S^{u}_{i}P^{N_{t-}}$ for some $i,j \in \N$ and $u \in (0,1)$.
Furthermore, since 
\begin{equation}
K_{\beta_{\rm s},\beta_{\rm m}}(p,\cdot) = Q(p, \cdot) + (1 - q(p))\delta_{p}(\cdot),
\end{equation}
this process is constructed explicitly as $(P^{N_{t}}, \; t \geq 0)$. % (again see Proposition 12.20, \cite{MR1876169})
The coincidence of the invariant measures in discrete and continuous time is immediate.

\begin{lemma}
 A measure $\nu \in \Measures(\PSs)$ is invariant for the continuous time process 
 $(P^{N_{t}}, \; t \geq 0)$ if, and only if, it is invariant for $(P^{k}, \; k = 0,1,2,3,\ldots)$.
\end{lemma}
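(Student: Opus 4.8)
The plan is to exploit the explicit relation $K_{\beta_{\rm s},\beta_{\rm m}}(p,\cdot) = Q(p,\cdot) + (1-q(p))\delta_p(\cdot)$ recorded just before the lemma, together with the fact that the continuous time process is constructed as $(P^{N_t},\,t\ge0)$, where $(N_t)_{t\ge0}$ is an \emph{independent} rate-$1$ Poisson counting process. Write $K = K_{\beta_{\rm s},\beta_{\rm m}}$ and let $(P_t)_{t\ge0}$ denote the transition semigroup of the continuous time process. The first step is to record, by conditioning on $N_t$, the ``uniformization'' identity
\be
(P_t f)(p) = \E\bigl[f(P^{N_t}) \mid P^0=p\bigr] = \sum_{n\ge0}\e{-t}\tfrac{t^n}{n!}(K^n f)(p)
\ee
for every bounded measurable $f:\PSs\to\R$; the series converges absolutely since $\|K^n f\|_\infty\le\|f\|_\infty$, and here one uses that $q$ is uniformly bounded so that the process is honest. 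Recall that invariance of $\nu$ in continuous time means $\int P_t f\,\dd\nu=\int f\,\dd\nu$ for all $t\ge0$ and all bounded measurable $f$, while invariance in discrete time means $\int Kf\,\dd\nu=\int f\,\dd\nu$.

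For the direction ``discrete invariance $\Rightarrow$ continuous invariance'', I would argue that $\nu K=\nu$ gives $\nu K^n=\nu$ for all $n\ge 0$ by induction, and then integrate the displayed identity term by term (legitimate since $\nu$ is finite and $\sum_n \e{-t}\tfrac{t^n}{n!}\|K^nf\|_\infty \le \|f\|_\infty$) to get $\int P_tf\,\dd\nu=\bigl(\sum_n\e{-t}\tfrac{t^n}{n!}\bigr)\int f\,\dd\nu=\int f\,\dd\nu$, which is exactly continuous time invariance.

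For the converse, I would fix a bounded measurable $f$, set $a_n=\int K^nf\,\dd\nu$ (so $|a_n|\le\|f\|_\infty$), and observe that $t\mapsto\e{t}\int P_tf\,\dd\nu=\sum_{n\ge0}\tfrac{t^n}{n!}a_n$ is an entire function of $t$ which, by hypothesis, equals $\e{t}\int f\,\dd\nu=\sum_{n\ge0}\tfrac{t^n}{n!}a_0$ for all $t\ge0$. Two entire functions agreeing on $[0,\infty)$ coincide, so matching Taylor coefficients at $0$ gives $a_n=a_0=\int f\,\dd\nu$ for every $n$; taking $n=1$ yields $\nu K=\nu$. (Equivalently, one may differentiate $t\mapsto\int P_tf\,\dd\nu$ at $t=0$: since $K-I$ is a bounded operator on bounded measurable functions, $(P_tf-f)/t\to(K-I)f$ uniformly as $t\downarrow0$, hence $\int(Kf-f)\,\dd\nu=0$.)

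I do not expect a serious obstacle here: the whole content is that a bounded Markov kernel and its associated Poissonized semigroup have the same invariant measures. The only points needing a little care are the interchange of summation and integration and the coefficient-matching (or, in the alternative, the differentiation of the semigroup), and both reduce to the uniform bound $\|K^n f\|_\infty\le\|f\|_\infty$ and to the boundedness of $q$ established just above the lemma; no deeper input is required.
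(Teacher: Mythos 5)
Your proof is correct, and it is exactly the standard uniformization argument that the paper has in mind when it declares the coincidence of invariant measures "immediate" right after noting $K_{\beta_{\rm s},\beta_{\rm m}}(p,\cdot)=Q(p,\cdot)+(1-q(p))\delta_p(\cdot)$; the paper simply omits the details. Both directions (term-by-term integration for one, coefficient matching or differentiation at $t=0$ for the other) are carried out soundly, relying only on the uniform boundedness of $q$ and $\|K^nf\|_\infty\le\|f\|_\infty$.
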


\section{Effective split-merge process of cycles and loops}

This section contains an heuristic argument that connects the loop and cycle models
of section \ref{s:cycleloopmodel} and the split-merge process in section \ref{s:ctstimesplitmerge}.  The heuristic
leads to the conjecture that the asymptotic normalized lengths of the cycles and loops have
Poisson-Dirichlet distribution. By looking at the rates of the effective split-merge process, we can identify the parameter of the distribution.

%Both loop and cycle models are treated with the same arguments.  Thus for notational convenience
%and brevity we restrict attention to the cycle model.
%All statements made can be modified for loops, 
%and we shall give some hints whenever the modification is non-trivial.  

Consider the cycle or loop model on the cubic lattice $\Lambda_{n} = \{1,\dots,n\}^{d}$ in $\bbZ^{d}$.  
% sequence of graphs $(\Lambda_{n})$ whose number of vertices diverges.  
%Recall that $\bbP_{\Lambda, \beta, \vartheta}$ denotes the probability measure for either the cycle or the loop model on the graph $\Lambda$.  
As hinted at in section \ref{s:cycleloopmodel}, we expect that macroscopic cycles emerge for inverse temperatures $\beta$ large enough as $n \to \infty$.  Of course, we believe this also holds for any sequence of sufficiently connected graphs ($\Lambda_{n}$) with diverging number of vertices, but for simplicity we restrict attention to cubic lattices.  
Furthermore, since the same arguments apply to both the cycle and loop models,  we focus on cycles and only mention the modifications for loops when necessary.

%as $n \to \infty$.  % for suitable sequences $(\Lambda_{n})$ of graphs whose number of vertices diverges. 
%We have in mind the sequence of boxes $\{1,\dots,n\}^{d}$ in $\bbZ^{d}$ with nearest-neighbor edges and with space dimension $d\geq3$. But many more sequences of graphs could be considered.   

Denote by $\lambda^{(i)}$ the length of the $i^{th}$ longest cycle, 
and recall that $\eta_{\mathrm{macro}}(\beta)$ is the fraction of sites lying in macroscopic cycles (see Section \ref{sec prob phase transitions}).

\begin{conjecture}\label{c:criticalbeta}
Suppose $d \geq 3$.  There exists $\beta_c > 0$ such that for $\beta > \beta_c$:
\begin{itemize}
\item[(a)] The fractions of sites in infinite and macroscopic cycles (or loops) approach  the same typical value, and
\[
\eta \eqdef \eta_{\infty}(\beta) = \eta_{\mathrm{macro}}(\beta) >0.
\]
\item[(b)] The vector of ordered normalised cycle lengths
\[
\left( \frac{\lambda^{(1)}}{\eta \, n^{d}}, \frac{\lambda^{(2)}}{\eta \, n^{d} }, \ldots\right)
\]
converges weakly to a random variable $\xi$ in $\PSs$ as $n \to \infty$.
\end{itemize}
\end{conjecture}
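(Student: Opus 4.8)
The plan is to realize the conjectured Poisson--Dirichlet law as the invariant distribution of an \emph{effective} split--merge process obtained by projecting the dynamics $(\addremswapproc_{t}, t \geq 0)$ of \S\ref{s:bridgedynamics} onto the ordered, normalised lengths of the macroscopic cycles. Recall that $\addremswapproc_{t}$ is a continuous-time Markov process on Poisson edge configurations $\omega$ whose invariant measure is exactly $\bbP^{\rm cycles}_{\Lambda_{n},\beta}$ (respectively $\bbP^{\rm loops}_{\Lambda_{n},\beta}$); it proceeds by inserting bridges, deleting bridges, and locally rearranging them, with rates chosen so that detailed balance holds with respect to $\vartheta^{|\caC(\omega)|}\rho_{\caE_{n},\beta}$. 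The key geometric fact is that inserting a single bridge across $\{x,y\}$ at height $t$ either \emph{merges} the two cycles through $(x,t)$ and $(y,t)$, if they are distinct, or \emph{splits} the cycle through $(x,t)$ into two, if they coincide; deleting a bridge does the reverse. So the configuration of cycle lengths genuinely evolves by splits and merges, and the entire argument is to show that, after speeding up time by the appropriate (volume-dependent) factor and letting $n \to \infty$, the restriction of this evolution to cycles of length of order $n^{d}$ converges to the continuous-time uniform $(\beta_{\rm s},\beta_{\rm m})$ split--merge process of \S\ref{s:ctstimesplitmerge} on $\PSs$.

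To identify the parameter one computes the limiting rates. A bridge is inserted at a uniformly chosen edge and height, so the probability that its two endpoints lie in prescribed macroscopic cycles $\gamma_{i}$ and $\gamma_{j}$ is asymptotically proportional to the product of the space--time fractions $p_{i}$ and $p_{j}$ occupied by those cycles, while the probability that both endpoints lie in $\gamma_{i}$ is asymptotically $\propto p_{i}^{2}$; in the latter case, the cycle is cut at two asymptotically independent uniform positions, so the two resulting pieces have asymptotically uniform relative sizes. This already reproduces the size-biased structure of uniform split--merge. The dependence on $\vartheta$ enters only through the acceptance of a proposed move: a merge decreases $|\caC(\omega)|$ by one and a split increases it by one, so relative to the $\vartheta=1$ interchange process merges are suppressed by a factor $\vartheta^{-1}$ and splits enhanced by $\vartheta$; hence $\beta_{\rm s}/\beta_{\rm m}=\vartheta$, and Theorem~\ref{thm:invariance}(a) identifies the limiting law of $(\lambda^{(1)}/(\eta n^{d}), \lambda^{(2)}/(\eta n^{d}), \ldots)$ as $\PDlaw_{\vartheta}$, giving part (b) with $\xi \sim \PDlaw_{\vartheta}$. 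For $\vartheta=1$ on the complete graph this is exactly Schramm's theorem, which serves as the model calculation.

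For part (a), observe first that split and merge moves preserve total mass, so the fraction $\eta$ of vertices carried by macroscopic cycles is a conserved quantity of the effective process and hence deterministic in the limit; it is the natural candidate for $\eta_{\infty}(\beta)=\eta_{\rm macro}(\beta)$. Its positivity for large $\beta$ is an input rather than an output: in the loop case it follows from Theorem~\ref{thm DLS} together with Proposition~\ref{prop quantum vs prob}(c), which give $\eta_{\rm macro}(\beta)>0$ for large $\beta$ when $d\geq3$, while Theorem~\ref{t:hightempexpdecay} gives $\eta_{\infty}(\beta)=0$ for small $\beta$, so a threshold $\beta_{c}$ exists; in the cycle case the analogue of Theorem~\ref{thm DLS} is unavailable and positivity of $\eta$ must be assumed separately. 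The equality $\eta_{\infty}=\eta_{\rm macro}$ amounts to the absence of \emph{mesoscopic} cycles, whose length diverges but is $o(n^{d})$, and is expected because the Heisenberg model is not in a Berezinski\u\i--Kosterlitz--Thouless-type phase; in the split--merge picture it corresponds to mass not escaping into infinitely many vanishingly small blocks.

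The main obstacle is the justification of the effective split--merge limit itself. One must show that, on the relevant timescale, microscopic and mesoscopic cycles neither absorb a non-negligible share of inserted bridges nor feed mass back into the macroscopic cycles at leading order; that successive split/merge events decorrelate (a propagation-of-chaos statement for a spatially structured system, in contrast to Schramm's mean-field setting); and that the geometric constraints of $\bbZ^{d}$ do not spoil the asymptotic $p_{i}p_{j}$-biasing of the bridge endpoints. Equally delicate is the exchange of the limits $t\to\infty$ and $n\to\infty$: the cycle/loop model is stationary for $\addremswapproc_{t}$ at every finite $n$, but to conclude that its $n\to\infty$ marginal is the $\PDlaw_{\vartheta}$ stationary law of the limiting process one needs uniform control of mixing, of the kind that a uniqueness/convergence statement in the spirit of Theorem~\ref{thm:invariance}(b) would provide, transferred through the limit. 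These are precisely the points at which current techniques fall short, which is why the statement is recorded as a conjecture.
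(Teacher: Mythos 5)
This statement is a conjecture, and neither the paper nor your proposal proves it: the paper's own justification is the heuristic of its final section (the bridge dynamics $\addremswapproc$, Lemma \ref{l:addremswap} showing each added or removed bridge performs exactly one split or merge, the law-of-large-numbers ansatz that contact zones and contact bridges scale like products of cycle lengths, and the resulting identification of the effective split-merge rates with parameter $\theta=\vartheta$), which your proposal reproduces in essentially the same form, including the appeal to Theorem \ref{thm DLS} and Theorem \ref{t:hightempexpdecay} for the existence of a threshold $\beta_c$ and the correct flagging of the unproved steps (decorrelation of successive events, control of mesoscopic cycles, interchange of the limits $n\to\infty$ and $t\to\infty$). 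Your write-up therefore matches the paper's approach and is an accurate account of why the conjecture is believed rather than a proof, which you rightly acknowledge.
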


Assuming the conjectured result is true, what is the distribution of $\xi$?  
In some related models (the random-cluster model), $\xi$ has been found to be the 
trivial (and non-random!) partition $(1,0,0,\ldots)$.  However, we conjecture that there are \emph{many} macroscopic 
cycles in our model (rather than a unique giant cycle) and that their relative lengths can be 
described explicitly by the Poisson-Dirichlet distribution.  

\begin{conjecture}\label{c:pdcyclelengths}
The distribution of $\xi$ in Conjecture \ref{c:criticalbeta} (b) is $\PDlaw_{\theta}$
for an appropriate choice of $\theta$.
\end{conjecture}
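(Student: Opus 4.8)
The strategy that motivates the conjecture runs as follows. First I would bring in the bridge dynamics $(\addremswapproc_t,\,t\ge 0)$ of \S\ref{s:bridgedynamics}: a continuous-time Markov process on Poisson edge configurations whose reversible invariant measure is the cycle model $\bbP^{\rm cycles}_{\Lambda,\beta}$ of \eqref{nice measures} (and, likewise, the loop model for loops). Its moves are of add/remove type --- propose to add a bridge at a uniformly chosen edge and height, remove each present bridge at a constant rate --- with Metropolis-type acceptance probabilities arranged so that, for a move which raises $|\caC(\omega)|$ by one, the add/remove rate ratio is $\vartheta$, and for a move which lowers it by one the ratio is $\vartheta^{-1}$; this is precisely what forces reversibility with respect to $\vartheta^{|\caC(\omega)|}\rho_{\caE,\beta}(\dd\omega)$. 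By Lemma~\ref{l:addremswap}, every elementary move either splits one cycle into two (when the two affected space--time points lie in the same cycle) or merges two cycles into one (when they lie in distinct cycles). Since this measure is stationary for all $t$, it suffices to identify, under stationarity and in the limit $n\to\infty$, the law of the rescaled decreasing vector of macroscopic cycle lengths.

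The second step is to project onto the macroscopic lengths, i.e.\ onto $(\lambda^{(1)}/(\eta n^d),\lambda^{(2)}/(\eta n^d),\ldots)\in\PSs$, and read off the induced jump rates. Under the conjectured spatial homogeneity of macroscopic cycles --- so that, for a uniformly chosen edge and height, the cycles through the two endpoints behave asymptotically like two independent size-biased picks from the macroscopic lengths --- one finds that cycle $i$ merges with cycle $j\ne i$ at a rate proportional to $\lambda^{(i)}\lambda^{(j)}/n^{2d}$, cycle $i$ splits at a rate proportional to $(\lambda^{(i)})^2/n^{2d}$, and the split point is asymptotically uniform along the cycle because the added bridge sits at a uniform height. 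After the obvious time change this is exactly the continuous-time $(\beta_{\rm s},\beta_{\rm m})$ uniform split--merge dynamics of \S\ref{s:ctstimesplitmerge}, with $\beta_{\rm s}/\beta_{\rm m}=\vartheta$ inherited from the acceptance ratio. Theorem~\ref{thm:invariance}(a) then says that $\PDlaw_\vartheta$ is invariant for that dynamics; passing stationarity through $n\to\infty$ and invoking Conjecture~\ref{c:criticalbeta}(b) would give $\xi\sim\PDlaw_\vartheta$. In particular $\theta=\vartheta$, so $\theta=2$ in the physically relevant case.

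The main obstacle is making the reduction of the second step rigorous, and essentially every ingredient is open on $\bbZ^d$. One needs (i) Conjecture~\ref{c:criticalbeta} itself, namely convergence of the rescaled macroscopic-length vector together with $\eta_\infty(\beta)=\eta_{\rm macro}(\beta)>0$; (ii) control of the exchange of mass between macroscopic cycles and the ``dust'' of microscopic cycles, and of the creation or annihilation of macroscopic cycles out of the dust, showing these only renormalise the time scale and do not disturb the split--merge structure; (iii) the independence and homogeneity input, which is a strong mixing statement about the geometry of large cycles on the lattice of a kind not reachable by current techniques; and (iv) the interchange of the long-time limit of $\addremswapproc$ with the thermodynamic limit $n\to\infty$ --- that is, showing the finite-volume stationary law is already close to the split--merge stationary law before relaxation and decoupling spoil the comparison, a step carried out by the coupling arguments of Schramm and of Diaconis--Mayer-Wolf--Zerner--Zeitouni only in the complete-graph (mean-field) setting. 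Any one of (i)--(iv) on $\bbZ^d$ would be a substantial advance, which is why the statement is offered as a conjecture rather than a theorem.
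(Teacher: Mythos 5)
Your argument reproduces the paper's own heuristic for this conjecture essentially step by step: the bridge add/remove dynamics $\addremswapproc$ reversible for $\vartheta^{|\caC(\omega)|}\rho_{\caE,\beta}(\dd\omega)$, Lemma~\ref{l:addremswap}, the law-of-large-numbers ansatz that contact zones and contact bridges scale like (products of) cycle lengths, and the resulting identification with the continuous-time uniform split--merge process at ratio $\beta_{\rm s}/\beta_{\rm m}=\vartheta$, giving $\xi\sim\PDlaw_{\vartheta}$ via Theorem~\ref{thm:invariance}(a). Your list of obstacles is exactly why the statement is offered as a conjecture, so this is a faithful account of the paper's (non-rigorous) justification rather than a different route.
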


The rest of this section is concerned with justifying this conjecture.  
The reader may guess what the parameter $\theta$ should be.  
We will tease it out below and identify it in 
section \ref{s:punchline}. 

See Section \ref{s:complete} for a summary of rigorous results by Schramm to support 
this conjecture on the complete graph.

\subsection{Burning and building bridges}

Recall that $\bbP_{\Lambda_{n}, \beta, \vartheta}$ denotes the probability measure for either 
the loop or cycle model.  We define an ergodic Markov process on $\SwapConfigs$ with $\Prob_{\Lambda,\beta,\vartheta}$ as invariant measure.
The process evolves by adding or removing bridges to the current configuration.  Conveniently, the effect of such an operation is to either split a cycle or merge two cycles.

\begin{lemma}\label{l:addremswap}
Suppose $\omega \in \Omega$ and $\omega^{\prime}$ is $\omega$ with either a bridge added (i.e. 
$\omega^{\prime} = \omega \cup \{(e,t)\}$ for some $(e,t) \in \mathcal{E} \times [0,\beta]$)
or a bridge removed (i.e. $\omega^{\prime} = \omega - \{(e,t)\}$ for some $(e,t) \in \omega$).
Then $\Cycles(\omega^{\prime})$ is obtained by splitting a cycle or merging two cycles in $\Cycles(\omega)$.
Similarly, $\Loops(\omega^{\prime})$ is obtained by a split or merge in $\Loops(\omega)$.

\end{lemma}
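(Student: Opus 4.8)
The plan is to prove the statement for \emph{adding} a bridge; the removal case then follows at once by applying the result to $\omega = \omega' \cup \{(e,t)\}$ and noting that the inverse of a merge is a split and vice versa. So suppose $\omega' = \omega \cup \{(e,t)\}$ with $e = \{x,y\}$, and --- as we may, the exceptional times being finite in number and hence irrelevant to the Markov dynamics of this section --- that $t$ is not a bridge time of $\omega$ at $x$ or at $y$. Then, in $\omega$, the trajectory of $\Cycles(\omega)$ (resp.\ $\Loops(\omega)$) passing through $(x,t)$ is genuinely continuous there, and likewise at $(y,t)$.

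First I would introduce a \emph{cut} decomposition. Remove the two points $(x,t)$ and $(y,t)$ from the cylinder $\caV \times [0,\beta]_{\rm per}$; this severs the two trajectory strands through these points into a finite family of \emph{arcs} carrying four marked ends: at $x$ the trajectory of $\omega$ enters $(x,t)$ from one side and leaves on the other, producing an ``incoming'' end $x^{\rm in}$ and an ``outgoing'' end $x^{\rm out}$, and similarly $y^{\rm in}, y^{\rm out}$ at $y$. (For cycles ``incoming/outgoing'' is with respect to increasing time; for loops the relevant configuration is $x \in \caV_{\rm A}$, $y \in \caV_{\rm B}$ since $\caE$ joins the two sublattices, so time increases along the strand at $x$ and decreases along the strand at $y$.) The key point is that $\omega$ and $\omega'$ agree away from $(x,t)$ and $(y,t)$, so they induce the \emph{same} family of arcs; the trajectories of $\omega$, respectively $\omega'$, are recovered by gluing these arcs at the four marked ends according to two different rules. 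For $\omega$ the trajectory passes straight through, so the gluing is $x^{\rm in}\leftrightarrow x^{\rm out}$ and $y^{\rm in}\leftrightarrow y^{\rm out}$; for $\omega'$ the new bridge forces the trajectory to jump between $x$ and $y$, and tracing the two passages across the bridge (``up at $x$'' continuing into ``down at $y$'', and conversely, for loops; simply ``straight across'' for cycles) gives the gluing $x^{\rm in}\leftrightarrow y^{\rm out}$ and $y^{\rm in}\leftrightarrow x^{\rm out}$.

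Next I would follow the arc issuing from $x^{\rm out}$: it must terminate at $x^{\rm in}$ or at $y^{\rm in}$, and the arc issuing from $y^{\rm out}$ then terminates at the other one. In the first case the $\omega$-gluing closes these two arcs into two distinct trajectories $\gamma_x \ni x^{\rm out}$ and $\gamma_y \ni y^{\rm out}$ (distinct because $x \neq y$, so one arc does not reach the strands at the other vertex), whereas the $\omega'$-gluing concatenates both arcs into a single trajectory; hence $\Cycles(\omega')$ (resp.\ $\Loops(\omega')$) arises from $\Cycles(\omega)$ (resp.\ $\Loops(\omega)$) by \emph{merging} $\gamma_x$ and $\gamma_y$, all other cycles (resp.\ loops) being untouched. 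In the second case the $\omega$-gluing produces a single trajectory containing all four ends, and the $\omega'$-gluing breaks it into two; this is a \emph{split}. Since the two gluing rules and the case analysis are identical for cycles and for loops, both assertions of the lemma are established at once.

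I expect the only real difficulty to be expository: pinning down the cut/arc bookkeeping precisely, and in the loop case orienting the time coordinate correctly so that the jump across the new bridge glues ``incoming at $x$'' to ``outgoing at $y$'' (this is exactly where the hypothesis that $\caE$ joins $\caV_{\rm A}$ to $\caV_{\rm B}$ is used), while disposing of the measure-zero coincident-time configurations without cluttering the argument.
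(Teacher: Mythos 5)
Your proof is correct and follows essentially the same route as the paper, which justifies the lemma by a case analysis on whether the two endpoints of the added/removed bridge lie on the same trajectory or on two different ones (illustrated there only by figures); your two cases --- the arc from $x^{\rm out}$ returning to $x^{\rm in}$ versus to $y^{\rm in}$ --- are exactly that dichotomy. Your cut-and-reglue formalization, the reduction of removal to addition, and the explicit treatment of the loop orientation on the two sublattices make the argument more rigorous than the paper's sketch, but the underlying idea is the same.
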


The point is that adding or removing a bridge never causes, for example, several cycles to join, a cycle to split 
into many pieces or the cycle structure to remain unchanged.

\bfig
\centerline{\includegraphics[width=125mm]{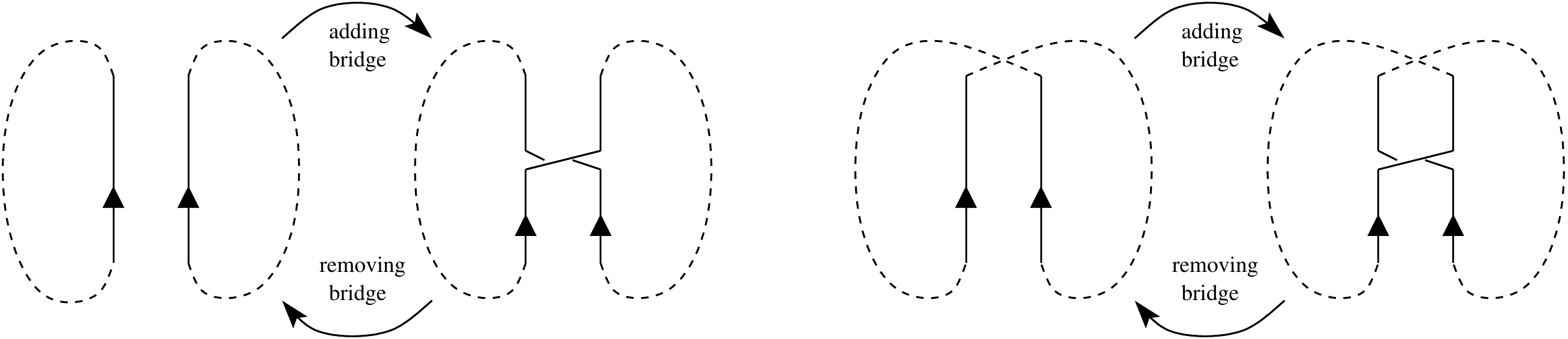}}
\caption{Adding or removing bridges always split or merge cycles. Up to topological equivalence, this figure lists all possibilities.}
\label{fig split merge}
\efig

The Lemma is most easily justified by drawing pictures for the different cases.  Suppose that we add a new bridge.  Either both endpoints of the new bridge belong to the same cycle or two different cycles.  In the former case, the cycle is split and 
we say the bridge is a self-contact.  In the latter case, the two cycles are joined and the bridge is called a contact between 
the two cycles. This is illustrated in Figure\ \ref{fig split merge} for cycles and Figure\ \ref{fig split merge 2} for loops.

Suppose that we remove an existing bridge.  Again,  either both of the bridge's endpoints belong to the same cycle (self-contact) or they are in different cycles (contact between the two cycles).  In the former case, removal splits the cycle and in the latter, the two cycles are joined.

As this argument hints, it is helpful to formally define the `contacts' between cycles.  Suppose %that $\addremswapproc$ is initially in some configuration $
that $\gamma \in \Cycles(\omega)$ is a cycle.  Recall from Section \ref{sec Poisson conf} that this means
$\gamma(\tau) = (x(\tau), t(\tau))$, $\tau \geq 0$ is a closed trajectory in $\caV \times [0,\beta]_{\rm per}$,
where $x$ is piecewise constant and has a jump discontinuity across the edge 
$e = (x(\tau-), x(\tau)) \in \caE$ at time $\tau$ if, and only if,
the bridge $( e,\; t(\tau))$ is present in $\omega$.  Such bridges are called
self contact bridges,  the set of which is denoted $\contactswaps_{\gamma}$.  Removing a bridge from
$\contactswaps_{\gamma} \subset \omega$ causes $\gamma$ to split.

\bfig
\centerline{\includegraphics[width=120mm]{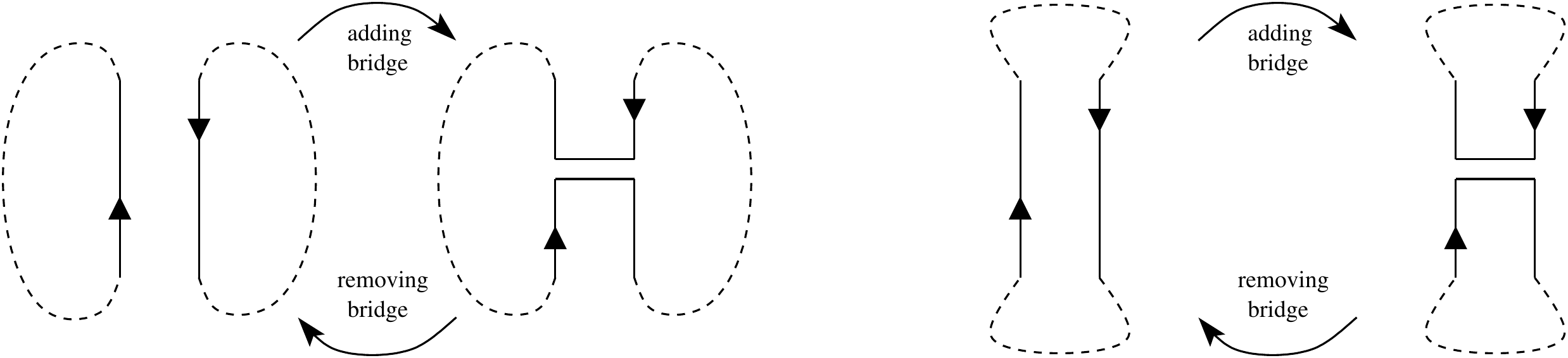}}
\caption{Same as Figure\ \ref{fig split merge}, but for loops instead of cycles.}
\label{fig split merge 2}
\efig

The self contact zone $\contactzone_{\gamma}$ of $\gamma$ is the set of $(e,\tau) \in \caE \times [0, \beta]$ for which
$e = (x(\tau), x(\tau + j \beta))$ for some integer $j$, i.e. the $(e,t)$ 
bridge touches different legs of $\gamma$'s trajectory and so adding a bridge from $\contactzone_{\gamma}$
 splits $\gamma$.

The contact bridges $\contactswaps_{\gamma,\gamma^{\prime}}$ and zones $\contactzone_{\gamma,\gamma^\prime}$ between distinct cycles  $\gamma, \gamma^\prime \in \Cycles(\omega)$ are defined similarly.  
Specifically, $\contactswaps_{\gamma,\gamma^{\prime}} \subset \omega$ is comprised of bridges 
in $\omega$ that are traversed by $\gamma = (x,t)$ and $\gamma^{\prime} = (x^{\prime},t^{\prime})$, i.e. 
$(e,t) \in \omega$ such that $e = (x(t + j_{1}\beta), x^{\prime}(t + j_{2}\beta))$ for some integers $j_{1}, j_{2}$.
Removal of a bridge in $\contactswaps_{\gamma,\gamma^{\prime}}$ causes $\gamma$ and $\gamma^{\prime}$ to merge.

$\contactzone_{\gamma,\gamma^\prime}$ is the set of $(e,t) \in \caE \times [0,\beta]$ such that $e = (x(t + j_{1}\beta), x(t + j_{2}\beta))$ for some $j_{1}, j_{2}$, i.e. those bridges that would merge $\gamma$ and $\gamma^{\prime}$.
Note that the contact (and self contact) zones partition $\caE \times [0, \beta]$ while the contact bridges partition 
$\omega$.

\subsection{Dynamics}\label{s:bridgedynamics}

The promised $\Prob_{\Lambda,\beta,\vartheta}$-invariant Markov process, denoted \linebreak $(\addremswapproc_{t})_{t \geq 0}$
is defined as follows.   Suppose that $\alpha > 0$.  % The Markov dynamics for $\addremswapproc$ are as follows.
\begin{itemize}
 \item A new bridge appears in $(e,dt)$ at rate $\vartheta^{\alpha} dt$ if its appearance causes a cycle to split 
and at rate $\vartheta^{-\alpha} dt$ if it causes two cycles to join.
 \item An existing bridge is removed at rate $\vartheta^{1-\alpha}$ if its removal causes a cycle to split 
and at rate $\vartheta^{-(1-\alpha)}$ if its removal causes two cycles to join.
% a new swap is added along $e$ at a time in 
 \item No other transitions occur.
 % \item (The process is conservative)
\end{itemize}

The rates are not uniformly bounded, so a little effort is required to check 
$\addremswapproc$ is well behaved (does not `explode').  Accepting this, 
we can show $\addremswapproc$ is actually reversible with respect to our 
cycle model.
   
% That is, $\addremswapproc$ is a pure jump Markov process.  
%The existence of such a process is clear from the familiar jump-hold construction (e.g.  Theorem 12.18, p240 of \cite{MR1876169}).  

\begin{lemma}\label{l:muinvariance}
The unique invariant measure of $\addremswapproc$ is $\Prob_{\Lambda,\beta,\vartheta}$.
\end{lemma}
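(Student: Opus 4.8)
The plan is to prove the stronger statement that $\Prob_{\Lambda,\beta,\vartheta}$ is \emph{reversible} for $\addremswapproc$; invariance is then immediate by testing against the constant function, and uniqueness will follow from irreducibility together with the non-explosion asserted (but not proved) in the statement. Write the generator of $\addremswapproc$ as $\mathcal{G} = \mathcal{G}^{+} + \mathcal{G}^{-}$, where, for bounded measurable $F$ on $\SwapConfigs$,
\[
(\mathcal{G}^{+}F)(\omega) = \sum_{e\in\caE}\int_{0}^{\beta}\bigl[F(\omega\cup\{(e,t)\})-F(\omega)\bigr]\,r_{+}((e,t),\omega)\,\dd t, \qquad (\mathcal{G}^{-}F)(\omega) = \sum_{(e,t)\in\omega}\bigl[F(\omega\setminus\{(e,t)\})-F(\omega)\bigr]\,r_{-}((e,t),\omega),
\]
and, for a bridge $b$, the birth rate $r_{+}(b,\omega)$ equals $\vartheta^{\alpha}$ if adding $b$ to $\omega$ splits a cycle and $\vartheta^{-\alpha}$ if it merges two cycles, while the death rate $r_{-}(b,\omega)$ (for $b\in\omega$) equals $\vartheta^{1-\alpha}$ if removing $b$ splits a cycle and $\vartheta^{-(1-\alpha)}$ if it merges two; by Lemma~\ref{l:addremswap} these are the only possibilities. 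For loops one replaces $\caC$ by $\caL$ throughout and the argument is identical, since Lemma~\ref{l:addremswap} applies to loops as well.

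The combinatorial core is the identity
\[
\vartheta^{|\caC(\omega)|}\,r_{+}(b,\omega) \;=\; \vartheta^{|\caC(\omega\cup\{b\})|}\,r_{-}(b,\omega\cup\{b\}), \qquad \text{for all } \omega\in\SwapConfigs,\ b\notin\omega.
\]
Indeed, removing $b$ from $\omega\cup\{b\}$ is the inverse of adding $b$ to $\omega$, so by Lemma~\ref{l:addremswap} exactly one of two cases occurs. If adding $b$ splits a cycle, then $|\caC(\omega\cup\{b\})| = |\caC(\omega)|+1$, the left-hand side equals $\vartheta^{|\caC(\omega)|+\alpha}$, and since removing $b$ from $\omega\cup\{b\}$ merges two cycles the right-hand side equals $\vartheta^{|\caC(\omega)|+1}\vartheta^{-(1-\alpha)} = \vartheta^{|\caC(\omega)|+\alpha}$. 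If adding $b$ merges two cycles, then $|\caC(\omega\cup\{b\})| = |\caC(\omega)|-1$, the left-hand side equals $\vartheta^{|\caC(\omega)|-\alpha}$, and the right-hand side equals $\vartheta^{|\caC(\omega)|-1}\vartheta^{1-\alpha} = \vartheta^{|\caC(\omega)|-\alpha}$. (Note that this holds for every $\alpha$, which is why the value of $\alpha$ does not affect invariance.)

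Next I would turn this into reversibility by means of the Mecke equation for $\rho_{\caE,\beta}$ --- equivalently Palm's formula (Lemma~\ref{l:PPPmomlaplacepalm}(3)), since $\rho_{\caE,\beta}$ is a product of Poisson point processes: $\int \rho_{\caE,\beta}(\dd\omega)\sum_{b\in\omega} u(b,\omega) = \int_{\caE\times[0,\beta]}\!\bigl(\int\rho_{\caE,\beta}(\dd\omega)\,u(b,\omega\cup\{b\})\bigr)\,\dd b$. For bounded measurable $F,G$ on $\SwapConfigs$, split $(\mathcal{G}F)(\omega)$ into the ``diagonal'' part $-q(\omega)F(\omega)$, where $q(\omega)$ is the total jump rate out of $\omega$, and the ``off-diagonal'' part involving $F$ evaluated at a neighbouring configuration. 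The diagonal contribution to $\int G\,\mathcal{G}F\,\dd\Prob_{\Lambda,\beta,\vartheta}$ is $-\int q(\omega)F(\omega)G(\omega)\,\dd\Prob_{\Lambda,\beta,\vartheta}$, manifestly symmetric in $F,G$. The off-diagonal part of $\int G\,\mathcal{G}^{+}F\,\dd\Prob_{\Lambda,\beta,\vartheta}$ is an integral $\propto\int\rho_{\caE,\beta}(\dd\omega)\int_{\caE\times[0,\beta]}\vartheta^{|\caC(\omega)|}G(\omega)F(\omega\cup\{b\})r_{+}(b,\omega)\,\dd b$; applying the Mecke equation rewrites it as $\propto\int\rho_{\caE,\beta}(\dd\omega)\sum_{b\in\omega}\vartheta^{|\caC(\omega\setminus\{b\})|}G(\omega\setminus\{b\})F(\omega)r_{+}(b,\omega\setminus\{b\})$, and the displayed identity (with $\omega\setminus\{b\}$ in place of $\omega$) turns $\vartheta^{|\caC(\omega\setminus\{b\})|}r_{+}(b,\omega\setminus\{b\})$ into $\vartheta^{|\caC(\omega)|}r_{-}(b,\omega)$, producing exactly the off-diagonal part of $\int F\,\mathcal{G}^{-}G\,\dd\Prob_{\Lambda,\beta,\vartheta}$. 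The same computation with $F,G$ interchanged matches the off-diagonal part of $\int G\,\mathcal{G}^{-}F$ with that of $\int F\,\mathcal{G}^{+}G$. Summing, $\int G\,\mathcal{G}F\,\dd\Prob_{\Lambda,\beta,\vartheta} = \int F\,\mathcal{G}G\,\dd\Prob_{\Lambda,\beta,\vartheta}$, i.e.\ $\Prob_{\Lambda,\beta,\vartheta}$ is reversible, and taking $G\equiv1$ gives $\int\mathcal{G}F\,\dd\Prob_{\Lambda,\beta,\vartheta}=0$, so $\Prob_{\Lambda,\beta,\vartheta}$ is invariant.

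For uniqueness I would argue that $\addremswapproc$ is $\Prob_{\Lambda,\beta,\vartheta}$-irreducible: from any configuration the chain reaches the empty configuration by deleting its finitely many bridges one at a time (each deletion occurs at a positive rate), and from the empty configuration it can build a bridge configuration lying in any prescribed Borel set of positive $\rho_{\caE,\beta}$-measure; irreducibility together with non-explosion and the existence of the invariant probability measure $\Prob_{\Lambda,\beta,\vartheta}$ forces it to be the only one (standard theory, as in \cite{MR1876169}). I expect the genuine obstacle to be exactly the point sidestepped in the statement, namely checking that $\addremswapproc$ does not explode even though the birth-minus-death rates are unbounded --- this needs a Lyapunov/drift estimate controlling the number of bridges (e.g.\ comparing with a birth--death process whose birth rate is $\beta|\caE|\max(\vartheta^{\pm\alpha})$ and whose per-particle death rate is $\min(\vartheta^{\pm(1-\alpha)})$) --- together with the care needed to make the ``detailed balance'' bookkeeping rigorous on the continuous configuration space $\SwapConfigs$, for which the Mecke/Palm identity is the appropriate tool.
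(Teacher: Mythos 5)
Your proposal is correct and follows exactly the route the paper intends but does not write out: the paper merely asserts that $\addremswapproc$ is reversible for the cycle/loop measure and omits the proof, while you supply the actual detailed-balance identity $\vartheta^{|\caC(\omega)|}r_{+}(b,\omega)=\vartheta^{|\caC(\omega\cup\{b\})|}r_{-}(b,\omega\cup\{b\})$ (which is precisely why the rates $\vartheta^{\pm\alpha}$, $\vartheta^{\pm(1-\alpha)}$ were chosen, for any $\alpha$), the Mecke/Palm bookkeeping that turns it into reversibility of the spatial birth--death process, and the irreducibility argument for uniqueness. You also correctly isolate the one point the paper itself flags and sidesteps, namely non-explosion despite the unbounded total death rate, and your proposed comparison with a birth--death chain having bounded birth rate and per-bridge death rate bounded below is the standard way to close that gap.
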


The proof is straightforward and so we omit it.
%\begin{proof}
%We prove reversibility.  The uniqueness follows from aperiodicity and $\phi$-irreducibility of $\addremswapproc$ (ref. Meyn, Tweedie?).
%\end{proof}

In the sequel we take $\alpha = 1/2$, so that adding and removing bridges occur at the same rates.

\subsection{Heuristic for rates of splitting and merging of cycles}

As we know, adding or removing bridges causes cycles to split or merge so
the dynamics $(\Cycles(\addremswapproc_t),\;t \geq 0)$ that $\addremswapproc$ induces on cycles is a kind of coagulation-fragmentation process.  However, these dynamics are not Markovian and depend
 on the underlying process in a complicated manner.  Ideally we would like a simpler, more transparent 
description for the dynamics.  The first step towards this is to rewrite 
the transition rates for $\addremswapproc$ in terms of the contact zones and bridges.

Suppose that $\addremswapproc$ is currently in state $\omega \in \SwapConfigs$.  A
 cycle $\gamma \in \Cycles(\omega)$ splits if either a bridge from $\contactzone_{\gamma}$
 is added, or a bridge from  $\contactswaps_{\gamma} \subset \omega$ is removed.  The total
 rate at which these transitions occur is
\begin{equation}\label{e:contactsplitrate}
 \sqrt{\vartheta} \left( |\contactswaps_{\gamma}| + \left|\contactzone_{\gamma}\right|\right),
\end{equation}
where 
$\left|\contactzone_{\gamma}\right| = \sum_{e \in \caE} \Leb(\{t \in [0,\beta]: (e,t) \in \contactzone_{\gamma} \})$
is the (one-dimensional) Lebesgue measure of the self contact zone.  Two distinct cycles $\gamma$ and $\gamma^\prime$ merge if a bridge from $\contactzone_{\gamma,\gamma^\prime}$ is added or one from 
$\contactswaps_{\gamma,\gamma^\prime}$ removed.  The combined rate is
\begin{equation}\label{e:contactmergerate}
 \sqrt{\vartheta}^{-1} \left( |\contactswaps_{\gamma,\gamma^\prime}| +  \left|\contactzone_{\gamma,\gamma^\prime}\right| \right),
\end{equation}
where $\left|\contactzone_{\gamma,\gamma^{\prime}}\right| = \sum_{e \in \caE} \Leb(\{t \in [0,\beta]: (e,t) \in \contactzone_{\gamma,\gamma^{\prime}} \})$.

\subsubsection{Heuristics}
We believe that, for suitably connected graphs and large enough $\beta$, cycles should be macroscopic.  
The trajectories of these cycles should spread evenly over all edges and vertices in the graph.  In particular, macroscopic cycles should come into contact with each other many times and we expect some averaging 
phenomenon to come into play.  The longer a cycle is, on average, the more intersections with other cycles it should have.
In particular, we believe the contact zone between two macroscopic cycles should have size proportional to the cycles' length.

% This `law of large numbers' may be expressed as follows is some law of large numbers causing the

That is, if $\gamma$ and $\gamma^\prime$ are cycles with lengths $\lambda$ and $\lambda^\prime$ respectively then
there is a `law of large numbers'
\begin{equation}\label{e:heursplitrate}
|\contactzone_{\gamma} | \sim \frac{1}{2} c_2 \lambda^2,\; \; |\contactswaps_{\gamma}| \sim \frac{1}{2} c_1 \lambda^2
\end{equation}
and
\begin{equation}\label{e:heurmergerate}
 |\contactzone_{\gamma,\gamma^\prime} | \sim c_2 \lambda \lambda^\prime, \; |\contactswaps_{\gamma,\gamma^\prime}| \sim c_1 \lambda \lambda^\prime,
\end{equation}
for constants $c_1$ and $c_2$ (the notation $X \sim Y$ means that the ratio of the random variables 
converges to $1$ in probability as $\Lambda_{n}$ grows).  

The constants may depend on $\vartheta$ and $\beta$ and the graph geometry.  We believe they are linear in $\beta$
but do not depend on $\vartheta$.  
% We are not able to give a more useful formulation of the conjecture. 
Note that the size of the contact zones can be calculated easily for the complete graph.  We get
\begin{equation}
|\contactzone_{\gamma,\gamma^\prime} | = \beta \lambda \lambda^\prime , \; |\contactzone_{\gamma}| = \tfrac{\beta}{2} \lambda (\lambda - 1).
\end{equation}
In the case $\vartheta = 1$, we also have numerical support for 
\begin{equation}
  |\contactswaps_{\gamma,\gamma^\prime}| \sim \beta \lambda \lambda^\prime, \; |\contactswaps_{\gamma}| \sim \tfrac{\beta}{2} \lambda^2.
\end{equation}
% In light of this, our heuristic so far does not seem completely outrageous.   

\subsection{Connection to uniform split-merge}\label{s:punchline}
Continuing with the heuristic,  $\Cycles(\addremswapproc)$ is `nearly' a Markov process in which cycles split and merge.  Substituting \eqref{e:heursplitrate}  into \eqref{e:contactsplitrate} and \eqref{e:heurmergerate} into \eqref{e:contactmergerate},
and multiplying by $2\sqrt{\vartheta} (c_{1} + c_{2})$ (which just changes the speed of the process, not its invariant measure) we see that a cycle of length $\lambda$ splits at rate $\vartheta \lambda^{2}$, while two cycles with lengths $\lambda$ and $\lambda^{\prime}$ merge at
rate $2 \lambda \lambda^{\prime}$.   There seems no reason to suppose that splits are not uniform. 

Suddenly there are many similarities between $\Cycles(\addremswapproc)$ and 
the continuous time split-merge process of section \ref{s:ctstimesplitmerge}.  This 
suggests that Poisson-Dirichlet $\PDlaw_{\theta}$ is lurking somewhere in 
the normalised cycle length distribution.  What is the right choice of the parameter $\theta$?  

Write $\vartheta = \beta_{\rm s}/\beta_{\rm m}$, $\beta_{\rm s}, \beta_{\rm m} \in (0,1]$ and multiply the rates by $\beta_{\rm m}$
to see that a cycle of length $\lambda$ splits uniformly at rate $\beta_{\rm s} \lambda^{2}$, while two 
cycles with lengths $\lambda$ and $\lambda^{\prime}$ merge at
rate $2 \beta_{\rm m} \lambda \lambda^{\prime}$.  Up to the normalising factor (which is close to the constant $\eta_{\rm macro} |\Lambda_{n}|$), these are exactly the rates in section \ref{s:ctstimesplitmerge}. Thus, the parameter $\theta$ should be equal to
$\vartheta$. This fact was initially not obvious.

%\subsection*{Acknowledgments}

\bibliography{AZ10-Goldschmidt-Ueltschi-Windridge}
\bibliographystyle{abbrv}

\end{document}